\documentclass[11pt,a4paper]{article}
\usepackage[T1]{fontenc}
\usepackage[utf8]{inputenc}
\usepackage{lmodern}
\usepackage[margin=25mm]{geometry}
\usepackage{amsmath,amssymb,amsthm,mathtools,bm}
\usepackage{microtype,booktabs,array,graphicx,aliascnt,placeins,needspace,tocloft}
\usepackage[font=small,labelfont=bf,labelsep=period]{caption}
\usepackage[skins,breakable]{tcolorbox}
\newtcolorbox{resultbox}{enhanced,breakable,colback=black!2,colframe=black!35,
 boxrule=.5pt,arc=1pt,left=7pt,right=7pt,top=5pt,bottom=5pt,
 before skip=10pt,after skip=10pt}
\definecolor{examplebackground}{RGB}{242,249,243}
\definecolor{exampleframe}{RGB}{143,174,148}
\newtcolorbox{examplebox}{enhanced,colback=examplebackground,colframe=exampleframe,
 boxrule=.5pt,arc=1pt,left=7pt,right=7pt,top=5pt,bottom=5pt,
 before skip=10pt,after skip=10pt}
\usepackage[numbers,sort&compress]{natbib}

\usepackage[hidelinks]{hyperref}
\hypersetup{
 pdftitle={Quantum estimation, channel orders, and private capacity},
 pdfauthor={Christoph Hirche},
 pdfsubject={Quantum estimation, channel orders, and zero private capacity},
 pdfkeywords={quantum channels, private capacity, measured chi-square, MMSE, signed lifts, superactivation}
}
\usepackage[nameinlink,noabbrev]{cleveref}
\crefname{section}{Section}{Sections}
\crefname{appendix}{Appendix}{Appendices}
\crefname{theorem}{Theorem}{Theorems}
\crefname{proposition}{Proposition}{Propositions}
\crefname{lemma}{Lemma}{Lemmas}
\crefname{corollary}{Corollary}{Corollaries}
\crefname{definition}{Definition}{Definitions}
\crefname{example}{Example}{Examples}
\crefname{figure}{Figure}{Figures}
\newcommand{\equationalias}[1]{\csname ltx@label\endcsname{{#1}}}
\allowdisplaybreaks[2]
\numberwithin{equation}{section}
\newtheorem{theorem}{Theorem}[section]
\newaliascnt{proposition}{theorem}
\newtheorem{proposition}[proposition]{Proposition}\aliascntresetthe{proposition}
\newaliascnt{lemma}{theorem}
\newtheorem{lemma}[lemma]{Lemma}\aliascntresetthe{lemma}
\newaliascnt{corollary}{theorem}
\newtheorem{corollary}[corollary]{Corollary}\aliascntresetthe{corollary}
\theoremstyle{definition}
\newaliascnt{definition}{theorem}
\newtheorem{definition}[definition]{Definition}\aliascntresetthe{definition}
\newaliascnt{example}{theorem}
\newtheorem{example}[example]{Example}\aliascntresetthe{example}
\theoremstyle{remark}
\newaliascnt{remark}{theorem}
\aliascntresetthe{remark}
\DeclareMathOperator{\Tr}{Tr}
\DeclareMathOperator{\Herm}{Herm}
\DeclareMathOperator{\Var}{Var}
\DeclareMathOperator{\MSE}{MSE}
\DeclareMathOperator{\MMSE}{MMSE}
\DeclareMathOperator{\supp}{supp}
\DeclareMathOperator{\diag}{diag}
\DeclareMathOperator{\id}{id}
\newcommand{\cM}{\mathcal M}\newcommand{\cN}{\mathcal N}
\newcommand{\cD}{\mathcal D}\newcommand{\cJ}{\mathcal J}
\newcommand{\cS}{\mathcal S}\newcommand{\cV}{\mathcal V}
\newcommand{\cL}{\mathcal L}
\newcommand{\mchi}{\chi_{\mathrm m}^{2}}
\newcommand{\qchi}{\chi^{2}}
\newcommand{\schi}{\chi_{T_R}^{2}}
\newcommand{\ket}[1]{\lvert #1\rangle}\newcommand{\bra}[1]{\langle #1\rvert}
\newcommand{\norm}[1]{\lVert #1\rVert}
\newcommand{\Morder}{\cM\succeq_{\chi_{\mathrm m}^2}^{\mathrm c}\cN}
\newcommand{\Sorder}{\cM\succeq_{\chi_{T_R}^2}^{\mathrm c}\cN}
\newcommand{\Corder}{\cM\succeq_{\chi^2}^{\mathrm c}\cN}
\newcommand{\Dorder}{\cM\succeq_D^{\mathrm c}\cN}
\newcommand{\LNorder}{\cM\succeq_{\mathrm{ln}}^{\mathrm c}\cN}
\title{\LARGE Quantum estimation, channel orders, and private capacity}
\author{Christoph Hirche\\[5pt]
\normalsize Institute for Information Processing (tnt/L3S)\\[-1pt]
\normalsize Leibniz Universit\"at Hannover, Germany}
\date{}
\begin{document}
\maketitle
\vspace{-1.5em}
\begin{abstract}
Recent examples have shown that zero private capacity need not imply antidegradability and that two channels with zero private capacity can nevertheless transmit private information together. Existing entropic channel orders compare what a receiver and its environment can learn and thereby bound capacities. We connect these orders to the recent constructions through binary estimation, whose minimum mean-square error is governed by measured $\chi^2$ divergence. A new integral representation shows how measured $\chi^2$ on a qubit extension recovers quantum $\chi^2$. It lets us pass from complete measured-$\chi^2$ ordering to complete quantum-$\chi^2$, relative-entropy, and less-noisy ordering. Zhu and Wang used a signed lift to show that their qutrit channel has zero private capacity. We show that Hermitian-smoothed measured-$\chi^2$ ordering characterizes when such lifts exist, even with a quantum reference. Environmental dominance in binary estimation at every blocklength yields a finite-code reliability--secrecy bound. These results explain why complete comparison prevents activation with antidegradable helpers whereas regularized comparison alone does not. Building on that qutrit example, we establish this stability for a range of noisy Werner--Holevo channels, determine sharp private-capacity and antidegradability thresholds, and compute exact complementary capacities in a nondegradable range. By contrast, we extend Pauli half-erasure activation to all $1/2\le p<1$ and establish the same range for a new four-level family. Reference-assisted measured-$\chi^2$ witnesses show why these activating constructions lack complete comparison.
\end{abstract}

\setcounter{tocdepth}{2}
\begingroup\small\tableofcontents\endgroup
\clearpage

\section{Introduction}\label{sec:intro}
The unassisted private capacity $P(\cN)$ of a quantum channel characterizes the optimal rate at which classical information can be transmitted reliably while remaining inaccessible to the environment. Its single-use private information and regularized expressions are \cite{CaiWinterYeung2004,Devetak2005}
\begin{equation}
 \begin{aligned}
 P^{(1)}(\cN)&:=\max_{\{p_u,\rho_u^A\}}
 \bigl[I(U:B)-I(U:E)\bigr],\\
 P(\cN)&=\sup_{n\ge1}\frac1n P^{(1)}(\cN^{\otimes n}).
 \end{aligned}
 \label{eq:intro-private-capacity}
\end{equation}
Here $U$ is a finite classical label, and $B$ and $E$ are the receiver and complementary outputs. In $P^{(1)}(\cN^{\otimes n})$, the input states may be entangled across the $n$ uses. Evaluating these regularized expressions is generally difficult \cite{HircheLeditzky2023}. Even determining whether the private capacity is strictly positive can be challenging: a positive finite-blocklength private information certifies it, whereas a zero-capacity conclusion must exclude a private-information advantage at every blocklength. Part of the difficulty stems from strict superadditivity of private information over arbitrarily many channel uses \cite[Theorem~1]{ElkoussStrelchuk2015}. It is therefore useful to identify structural conditions that imply zero private capacity. A basic example is antidegradability: if the environment can simulate the receiver's output, every decoding procedure available to the receiver is also available to the environment, making secrecy at a positive rate impossible. The relation between a channel and its complement is therefore central to understanding its communication capabilities \cite{HircheLeditzky2023}.

However, state simulation is stronger than what is needed to recover a classical message. The private-capacity formula implies that a channel has zero private capacity exactly when its complement is less noisy at every blocklength \cite{Devetak2005}. Explicitly, the condition is
\begin{equation}
 I(U:E^n)\ge I(U:B^n)
 \quad\text{for every }n\ge1\text{ and every input ensemble on }A^n.
 \label{eq:intro-regularized-less-noisy}
\end{equation}
In the terminology of channel orders, this is the relation $\cN^c\succeq_{\mathrm{ln}}^{\mathrm{reg}}\cN$, corresponding to the complementary version of Watanabe's less-noisy class \cite{watanabe2012private,HircheLeditzky2023}. Entangled block inputs are allowed, and determining membership requires control of arbitrarily many channel uses. This motivates intermediate conditions that are weaker than antidegradability but simpler to verify than the regularized comparison.

One such condition is that the complement dominates the channel in the completely less-noisy order, $\cN^c\succeq_{\mathrm{ln}}^{\mathrm c}\cN$ \cite{HircheRouzeFranca2022}. We then call $\cN$ completely anti-less noisy. For every input ensemble on $RA$, with an arbitrary quantum reference $R$ that may be correlated with the input, this means
\begin{equation}
 I(U:RE)\ge I(U:RB).
 \label{eq:intro-complete-less-noisy}
\end{equation}
The complete less-noisy order tensorizes by \cite[Lemma~3.1]{HircheRouzeFranca2022}, and therefore implies the corresponding comparison at every blocklength. This removes the need to check arbitrarily many uses, at the cost of a single-use comparison with reference systems of arbitrarily large finite dimension.

Whether zero private capacity can occur beyond antidegradability was a longstanding open question \cite{BuscemiDattaStrelchuk2014}. In very recent work, Zhu and Wang \cite{ZhuWang2026} resolved it by giving a qutrit channel $\Lambda_{1/2}$ with zero private and quantum capacities that is neither antidegradable nor PPT. Its complement nevertheless satisfies the complete less-noisy comparison. Their proof relaxes state simulation to observable estimation: one linear map reproduces the receiver's observable means from the environment with no larger variance, and does so in the presence of an arbitrary reference. Measuring the corresponding environment observable therefore gives an unbiased sample-mean estimator with no larger variance \cite[Remark~3.8]{ZhuWang2026}. Zhu and Wang call such a map a complete variance-dominating signed lift \cite[Definition~3.3]{ZhuWang2026}. It need not map states to states, but it preserves the information needed for these estimation tasks. This new criterion provides the starting point for the comparisons developed here.

Another important question is superactivation: can two channels that individually have zero capacity have positive capacity jointly? For the quantum capacity, Smith and Yard combined channels from the antidegradable and PPT families to obtain such a pair \cite{SmithYard2008}. Both families are closed under tensor products. Here PPT means that output transposition composed with the channel is completely positive. Private capacity was already known to be nonadditive \cite{LiWinterZouGuo2009,SmithSmolin2009}, but its superactivation remained open \cite{HircheLeditzky2023}. In a separate recent paper, Zhu and Wang \cite{ZhuWangActivation2026} proved that a pair of channels can satisfy
\begin{equation}
 P(\cN)=P(\cM)=0,\qquad P(\cN\otimes\cM)>0.
 \label{eq:intro-superactivation}
\end{equation}
Their construction uses a transpose-antidegradable channel: for a CPTP map $\cD$, one has $\cD\circ\cN^c=T_B\circ\cN$, where $T_B$ is output transposition \cite{BradlerDutilHaydenMuhammad2010,ZhuWangActivation2026}. Shortly afterwards, a construction using two-qubit Pauli channels appeared, including a variant with a half-erasure helper \cite{privateCapZero2026}.

The initial qutrit channel $\Lambda_{1/2}$ of Zhu and Wang \cite[Theorems~1.1 and~4.3]{ZhuWang2026}, defined here in \eqref{eq:wh}, cannot activate with any antidegradable helper $\cM$, because antidegradability implies completely anti-less-noisy ordering. The two complete comparisons tensorize, giving
\begin{equation}
 \Lambda_{1/2}^c\succeq_{\mathrm{ln}}^{\mathrm c}\Lambda_{1/2}
 \quad\Longrightarrow\quad P(\Lambda_{1/2}\otimes\cM)=0
 \quad(\cM\text{ antidegradable}).
 \label{eq:intro-qutrit-stability}
\end{equation}
In contrast, the four-level channel $\cN_4$ from their activation paper \cite[Theorem~1]{ZhuWangActivation2026} satisfies
\begin{equation}
 \begin{gathered}
 P(\cN_4\otimes\mathcal E_{2,1/2})>0,\\
 \cN_4^c\succeq_{\mathrm{ln}}^{\mathrm{reg}}\cN_4,
 \qquad \cN_4^c\not\succeq_{\mathrm{ln}}^{\mathrm c}\cN_4.
 \end{gathered}
 \label{eq:intro-activation-separation}
\end{equation}
Here $\mathcal E_{2,1/2}$ is the qubit erasure channel with erasure probability $1/2$. The regularized relation follows from $P(\cN_4)=0$, whereas a complete relation would force zero product capacity with this antidegradable helper. These examples therefore separate complete from regularized anti-less-noisiness. Together, these developments describe the present state of the art and mark substantial progress over the past few months: zero private capacity now admits mechanisms beyond physical simulation, and channels with zero private capacity can nevertheless be useful jointly.

\vspace{\baselineskip}

Our aim is to place these recent examples and the signed-lift criterion in a common framework with the previous entropic partial orders \cite{watanabe2012private,HircheRouzeFranca2022}. \Cref{fig:classes} summarizes the resulting channel comparisons.
We begin with an operational estimation task. Consider a uniformly distributed bit $U\in\{+1,-1\}$ encoded in two states on a receiver system $B$, as illustrated in \cref{fig:mmse}. The minimum mean-square error $\MMSE(U|B)$ measures how accurately the bit can be inferred from $B$ using a real-valued estimate. This is a binary instance of Personick's quantum quadratic-estimation problem \cite{Personick1971}. We prove in \cref{thm:privacy} that the following comparison suffices for zero private capacity:
\begin{equation}
 \MMSE(U|E^n)\le\MMSE(U|B^n)
 \quad\text{for every binary encoding and every }n\ge1.
 \label{eq:intro-estimation-comparison}
\end{equation}
To see why this suffices, note that a binary decoder with error $\eta$ is an estimate in $\{+1,-1\}$ with mean-square error $4\eta$. The comparison gives the environment an estimator whose error also vanishes when decoding becomes reliable; taking its sign yields a reliable binary decision. Secrecy instead requires the environment's optimal bit-guessing probability to approach $1/2$. Applying this pairwise observation to all messages gives a quantitative reliability--secrecy bound for any finite code, as shown in \cref{thm:privacy}.

Next, we formulate the estimation comparison using the measured $\chi^2$ divergence. For output states $\omega_+,\omega_-$ and their average $\bar\omega=(\omega_++\omega_-)/2$, the binary MMSE identity of \cite[Lemma~25]{MandalPfister2026} is stated in SLD $\chi^2$ notation. Since SLD $\chi^2$ equals measured $\chi^2$ \cite[Lemma~5]{TemmeVerstraete2014} (see also \eqref{eq:quadratic}), it reads
\begin{equation}
 1-\MMSE(U|B)=\mchi(\omega_+\Vert\bar\omega).
 \label{eq:intro-mmse}
\end{equation}
Our first result shows that complete measured-$\chi^2$ ordering implies the quantum $\chi^2$ ordering of \cite[Section~2.6]{HircheTomamichel2024}, with arbitrary quantum references. Its quadratic form is the Bogoliubov--Kubo--Mori (BKM) metric, so we also refer to this comparison as BKM ordering (see \cref{sec:bkm}). It is equivalent to complete relative-entropy and less-noisy ordering \cite{HircheRouzeFranca2022}. The hierarchy is
\begin{equation}
 \cM\succeq_{\chi_{\mathrm m}^2}^{\mathrm c}\cN
 \ \Longrightarrow\ \cM\succeq_{\chi^2}^{\mathrm c}\cN
 \ \Longleftrightarrow\ \cM\succeq_D^{\mathrm c}\cN
 \ \Longleftrightarrow\ \cM\succeq_{\mathrm{ln}}^{\mathrm c}\cN.
 \label{eq:intro-order-hierarchy}
\end{equation}
The first implication follows from the coherent-qubit realization in \cref{prop:qubit-realization}. The subsequent equivalences connect it to the established entropic framework \cite{BelzigGaoSmithWu2025,HircheRouzeFranca2022}. Our second result characterizes when $\cN$ has a complete variance-dominating signed lift through $\cM$ by minimizing measured $\chi^2$ over trace-one Hermitian operators preserving a fixed classical--quantum measurement record. This identifies the statistical information that one common estimator map must preserve. For a CPTP map $\cD$, we write $\cM\succeq_{\chi_{T_R}^2}^{\mathrm c}\cN$ for the resulting Hermitian-smoothed comparison between the two channels. We obtain
\vspace{1mm}
\begin{equation}
 \begin{array}{c@{\;}c@{\;}c@{\;}c@{\;}c}
 \cN=\cD\circ\cM &\Longrightarrow&
 \cM\succeq_{\chi_{T_R}^2}^{\mathrm c}\cN &\Longrightarrow&
 \cM\succeq_{\mchi}^{\mathrm c}\cN.\\[-1pt]
 &&\Updownarrow&&\\[-1pt]
 &&\text{complete signed lift exists}&&
 \end{array}
 \label{eq:intro-smoothing-order}
\end{equation}
\vspace{1mm}

A further question is whether a comparison remains valid when the channel is paired with another channel. Different complete criteria can be combined by first passing to a common order implied by both premises. For example,
\begin{equation}
 \begin{gathered}
 \cM_1\succeq_{\mchi}^{\mathrm c}\cN_1,\qquad
 \cM_2\succeq_D^{\mathrm c}\cN_2\\
 \Longrightarrow\quad
 \cM_1\otimes\cM_2\succeq_D^{\mathrm c}\cN_1\otimes\cN_2.
 \end{gathered}
 \label{eq:intro-mixed-tensor}
\end{equation}
The first premise implies complete relative-entropy ordering by \eqref{eq:intro-order-hierarchy}. That common order then tensorizes \cite[Lemma~3.1]{HircheRouzeFranca2022}. A regularized comparison, required only at every tensor power of the channel itself, need not be stable under pairing with another channel.

The same orders have a capacity consequence when the receiver is more informative than the environment. We write $Q(\cN)$ for its quantum capacity, the optimal rate of reliable quantum communication \cite[Chapter~24]{Wilde2017}. Receiver dominance at every blocklength makes the complement's private capacity vanish. The complementarity bounds of \cite{HircheLeditzky2023} then imply (\cref{prop:single-letter})
\begin{equation}
 \cN\succeq_{\mathrm{ln}}^{\mathrm{reg}}\cN^c
 \quad\Longrightarrow\quad
 P(\cN^c)=0,\qquad
 P(\cN)=P^{(1)}(\cN)=Q(\cN)=Q^{(1)}(\cN).
 \label{eq:intro-receiver-capacities}
\end{equation}
Stronger complete orders, such as $\cN\succeq_{\chi_{\mathrm m}^2}^{\mathrm c}\cN^c$, imply this premise by \eqref{eq:intro-order-hierarchy} and tensorization.

The examples in \cref{sec:examples} build on the three recent constructions discussed above \cite{ZhuWang2026,ZhuWangActivation2026,privateCapZero2026} and extend their channel families and encodings. The qutrit-derived channels in \cref{sec:wh,sec:qubitqutrit}, including the mixtures of \cref{ex:mixtures}, retain complete comparison without antidegradability. The Pauli and coherent-ladder channels in \cref{sec:pauli,sec:coherent} have zero private capacity but activate with a qubit erasure channel. A compact fixed example is the two-qubit Pauli channel $\mathcal F$ of \cite{privateCapZero2026}. With adjacent Pauli symbols denoting tensor products, \cref{ex:pauli-erasure} extends its half-erasure activation to every $1/2\le p<1$:
\begin{equation}
 \begin{gathered}
 \mathcal F(X)=\frac16\sum_{V\in\mathcal P}VXV^\dagger,
 \qquad\mathcal P=\{XX,YI,IZ,XZ,IY,YZ\},\\[2pt]
 P(\mathcal F)=P(\mathcal E_{2,p})=0,
 \qquad P(\mathcal F\otimes\mathcal E_{2,p})
 \ge\frac{\ln2\,(1-p)^2}{216p(1+\sqrt3)}>0.
 \end{gathered}
 \label{eq:intro-activation}
\end{equation}
Our coherent-ladder family also activates throughout this erasure interval for every finite $s\ge3$ (\cref{thm:coherent-family}). At $s=4$ it recovers the channel of Zhu and Wang, whose activation throughout this interval was already proved in \cite[Theorem~1]{ZhuWangActivation2026}. Varying the transition amplitudes and encoding together preserves the error-detecting interference.

The Werner--Holevo complements illustrate the single-letter consequence above. For $1/2\le x\le1$, \cref{cor:wh-complement} gives
\begin{equation}
 P(\Lambda_x^c)=Q(\Lambda_x^c)=h_2(x)-(1-x)\log_2 3
 \quad\text{bits per use},
 \label{eq:intro-wh-complement-capacity}
\end{equation}
For $1/2\le x<4/7$, this equality holds although $\Lambda_x^c$ is nondegradable. The complete comparison for $\Lambda_x$ makes both complementary capacities equal to one-use coherent information by \cref{prop:single-letter} and the complementarity bounds of \cite{watanabe2012private,HircheLeditzky2023}. Concavity and covariance then make the maximally mixed qutrit input optimal.

The remainder of this paper is organized as follows. \Cref{sec:estimation} develops binary estimation and Hermitian smoothing. \Cref{sec:orders} establishes the channel-order hierarchy, relates complete and regularized comparisons, and gives single-letter capacity consequences when the receiver dominates the environment. \Cref{sec:privacy} proves a reliability--secrecy bound and characterizes zero private capacity. \Cref{sec:examples} collects the channel examples, followed by the discussion in \cref{sec:discussion}. The three appendices collect the estimation and comparison proofs, certificates for the non-antidegradable examples, and calculations for the activation examples.

\subsection{Notation and preliminaries}
All systems are associated with finite-dimensional Hilbert spaces, with $d_A=\dim A$ denoting the dimension of system $A$. We write $\cL(A)$ and $\Herm(A)$ for linear and Hermitian operators on $A$, respectively; $\operatorname{im}T$ denotes the image of a linear map $T$. A quantum state is a positive semidefinite operator of trace one, and it is faithful if it is positive definite. A quantum channel $\Phi:A\to B$ is a completely positive trace-preserving (CPTP) map, with Hilbert--Schmidt adjoint $\Phi^\dagger$. A Stinespring isometry $V:A\to B\otimes E$ defines $\Phi(X)=\Tr_E(VXV^\dagger)$ and a complementary channel $\Phi^c(X)=\Tr_B(VXV^\dagger)$. Complementary channels are equivalent up to output isometries on their supports. For a state $\tau$, the replacer channel is $\mathcal R_\tau(X)=\Tr(X)\tau$.

We compare channels $\cM_{A\to E}$ and $\cN_{A\to B}$. Input and output subscripts are omitted once the systems are clear. When a channel acts on a joint state, it acts only on its specified input; identities on all other systems are implicit. The same convention applies to adjoints and linear observable maps. Antidegradability of $\cN$ means $\cN=\cD\circ\cN^c$ for a CPTP map $\cD$. We write $P(\cN)$ and $Q(\cN)$ for its private classical and quantum capacities. Their precise operational definitions can also be found in \cite[Chapters~23 and~24]{Wilde2017}. They satisfy $Q(\cN)\le P(\cN)$ \cite{Devetak2005}. The erasure channel $\mathcal E_{d,p}$ transmits its $d$-dimensional input with probability $1-p$ and otherwise produces an orthogonal flag $\ket e$. On an input operator $X$, it acts as
\begin{equation}
 \mathcal E_{d,p}(X)=(1-p)X\oplus p\Tr(X)\ket e\bra e.
 \label{eq:erasure-definition}
\end{equation}
The two blocks are the transmitted input and the one-dimensional erasure sector. The channel is antidegradable for $p\ge1/2$.

Entropies, relative entropies, mutual informations, and capacities are expressed in bits. We write $\log_2$ for the base-two logarithm and $\ln$ for the natural logarithm. The binary entropy is $h_2(p):=-p\log_2p-(1-p)\log_2(1-p)$, with continuous values at $p=0,1$. Our entropy and information conventions are
\begin{equation}
 \begin{aligned}
 S(\rho)&=-\Tr\rho\log_2\rho,&
 D(\rho\Vert\sigma)&=\Tr\rho(\log_2\rho-\log_2\sigma),\\
 I(A:B)&=S(A)+S(B)-S(AB),&
 I(A:B|R)&=S(AR)+S(BR)-S(R)-S(ABR).
 \end{aligned}
 \label{eq:entropic-notation}
\end{equation}
We set $D(\rho\Vert\sigma)=+\infty$ when $\supp\rho\not\subseteq\supp\sigma$.
For a Hermitian observable $G$, the variance is
\begin{equation}
 \Var_\rho(G)=\Tr\rho G^2-(\Tr\rho G)^2.
 \label{eq:variance}
\end{equation}
Throughout, inverses of positive semidefinite operators and positive superoperators mean their Moore--Penrose pseudoinverses \cite{Penrose1955}: the inverse on the support, extended by zero on the kernel. A quadratic form defined by an inverse has value $+\infty$ when its direction lies outside the range of the corresponding superoperator. For a channel $\Phi$, its effective output space is $\supp\Phi(I)$; full rank on an output space means full rank relative to this support.

\begin{figure}[t]
\centering\includegraphics[width=\textwidth]{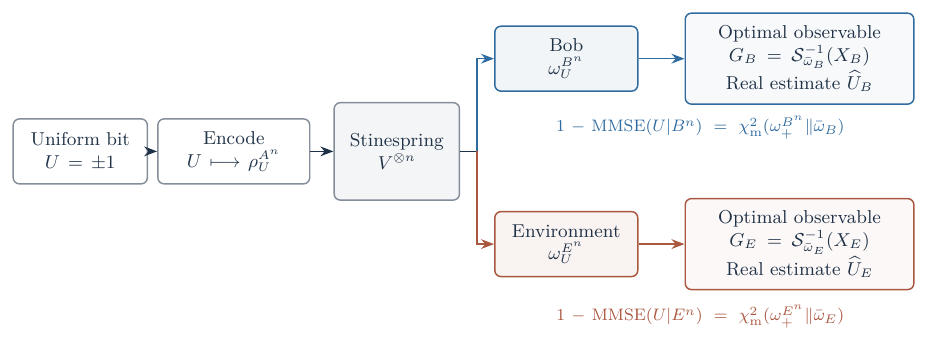}
\caption{Binary estimation at the receiver and the environment of a single Stinespring dilation. Each output admits a real-valued score whose optimal mean-square error is given by measured $\chi^2$. The averages $\bar\omega_B,\bar\omega_E$ and differences $X_B,X_E$ are formed from the two output states; the SLD map and its quadratic form are defined in \eqref{eq:quadratic}.}
\label{fig:mmse}
\end{figure}

\section{Quantum estimation}\label{sec:estimation}
We develop the estimation formulation first, then restrict the observable scores to a fixed measurement record. These two tasks provide the divergences used in the channel comparisons.

\subsection{Binary estimation}\label{sec:mmse}
We first formulate the binary estimation task underlying the channel comparisons, as illustrated in \cref{fig:mmse}. Consider an equiprobable ensemble $\{\omega_+,\omega_-\}$ on a receiver system $B$. The sender chooses $U\in\{+1,-1\}$, and the receiver performs a finite-outcome positive operator-valued measure (POVM) $\{M_y\}_{y\in\mathcal Y}$ with real-valued estimates $g_y$. For outcome $y\in\mathcal Y$, the estimator $\widehat U_{M,g}$ reports $g_y$. Its mean-square error is
\begin{equation}
 \MSE(U,\widehat U_{M,g}):=\mathbb E[(U-\widehat U_{M,g})^2]
 =\frac12\sum_{u=\pm1,\,y\in\mathcal Y}\Tr(\omega_uM_y)(g_y-u)^2.
 \label{eq:risk-povm}
\end{equation}
The minimum mean-square error is
\begin{equation}
 \MMSE(U|B):=\inf_{\substack{\{M_y\}\ \mathrm{POVM}\\g_y\in\mathbb R}}\MSE(U,\widehat U_{M,g}),
 \label{eq:mmse-definition}
\end{equation}
where the infimum ranges over all finite-outcome POVMs on $B$ and corresponding real estimates. Although $U$ is binary, \cref{prop:mmse} gives a spectral optimum with at most $\dim B$ outcomes.

For states $\rho,\sigma$, the measured $\chi^2$ divergence optimizes the classical $\chi^2$ divergence over POVMs \cite{TemmeVerstraete2014}:
\begin{equation}
 \mchi(\rho\Vert\sigma):=\sup_{\{M_y\}}\sum_y
 \frac{[\Tr M_y(\rho-\sigma)]^2}{\Tr M_y\sigma}.
 \label{eq:measured-definition}
\end{equation}
As usual, a term with zero denominator is zero if its numerator vanishes and $+\infty$ otherwise.
To connect this definition to the binary estimation task, put $\sigma=(\omega_++\omega_-)/2$ and $X=(\omega_+-\omega_-)/2$. Define the SLD map and its quadratic form by
\begin{equation}
 \begin{aligned}
 \cS_\sigma(G)&=\frac{\sigma G+G\sigma}{2},\\
 q_\sigma(X)&:=\sup_{G\in\Herm(B)}\{2\Tr XG-\Tr\sigma G^2\}\\
 &=\begin{cases}
 \Tr X\cS_\sigma^{-1}(X),&X\in\operatorname{im}\cS_\sigma,\\
 +\infty,&\text{otherwise}.
 \end{cases}
 \end{aligned}
 \label{eq:quadratic}
\end{equation}
The SLD/Bures representation of measured $\chi^2$ is \cite[Lemma~5]{TemmeVerstraete2014}
\[
 \mchi(\rho\Vert\sigma)=q_\sigma(\rho-\sigma).
\]
Its variational expression is the order-two measured R\'enyi formula after shifting the score by $I$ \cite[Lemma~3 and Theorem~4]{BertaFawziTomamichel2017}. For the binary pair above, $X\in\operatorname{im}\cS_\sigma$, and the optimal score $G_*=\cS_\sigma^{-1}(X)$ solves the SLD equation $\cS_\sigma(G_*)=X$. The next proposition restates the binary MMSE result of \cite[Lemmas~7 and~25]{MandalPfister2026}. For completeness, we give a short proof of the identity and its optimal estimator.

\Needspace{10\baselineskip}
\begin{resultbox}
\begin{proposition}[Binary MMSE and measured $\chi^2$]\label{prop:mmse}
The minimum mean-square error satisfies
\begin{equation}
 1-\MMSE(U|B)=\mchi(\omega_+\Vert\sigma).
 \label{eq:mmse}
\end{equation}
An optimal strategy measures $G_*=\cS_\sigma^{-1}(X)$ and reports its eigenvalue.
\end{proposition}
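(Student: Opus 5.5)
The plan is to reduce the mean-square error to a quadratic functional of a single Hermitian observable. Once that is done, the infimum becomes the variational formula \eqref{eq:quadratic} evaluated at $X=\omega_+-\sigma$.

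First, fix a POVM $\{M_y\}$ and estimates $g_y$, and define the Hermitian operator $G=\sum_y g_yM_y$. Expanding \eqref{eq:risk-povm} with $u^2=1$ and $\frac12\sum_u u\,\omega_u=X$ gives
\[
 \MSE(U,\widehat U_{M,g})=1-2\Tr XG+\sum_y g_y^2\Tr\sigma M_y .
\]
Since $y\mapsto g_y$ is a function, operator convexity of $t\mapsto t^2$ in the form of the Kadison/Jensen inequality for POVMs gives $\sum_y g_y^2M_y\ge G^2$. Concretely,
\[
 \sum_y (g_y-G)M_y(g_y-G)\ge0,
\]
and expanding this yields $\sum_y g_y^2M_y-G^2\ge0$. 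Hence
\[
 \MSE\ge 1-\bigl(2\Tr XG-\Tr\sigma G^2\bigr)\ge 1-q_\sigma(X).
\]
Taking the infimum over all POVMs and estimates gives $\MMSE(U|B)\ge1-q_\sigma(X)$.

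Second, for achievability, take $G$ Hermitian with spectral projections $\{P_y\}$ and report the eigenvalue $g_y$. Because projections satisfy $P_y^2=P_y$, equality holds in the operator inequality, so
\[
 \MSE=1-(2\Tr XG-\Tr\sigma G^2).
\]
Optimizing over $G\in\Herm(B)$ gives exactly $1-q_\sigma(X)$, so $\MMSE(U|B)=1-q_\sigma(X)$. The supremum is attained at $G_*=\cS_\sigma^{-1}(X)$ whenever $X\in\operatorname{im}\cS_\sigma$. To check this, note that the functional $2\Tr XG-\Tr\sigma G^2$ is concave in $G$ and its derivative in direction $H$ is $2\Tr H(X-\cS_\sigma(G))$, which vanishes at $G_*$. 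The optimal value is then $\Tr XG_*$, and measuring $G_*$ uses at most $\dim B$ outcomes.

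Third, the identity itself follows from the cited representation $\mchi(\rho\Vert\sigma)=q_\sigma(\rho-\sigma)$ with $\rho=\omega_+$, since $\omega_+-\sigma=X$.

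The one point needing care is that $X\in\operatorname{im}\cS_\sigma$, so that $q_\sigma(X)$ is finite and $G_*$ exists. I will argue this from support inclusion. Because $\omega_\pm\le2\sigma$, both states are supported in $\supp\sigma$, and hence so is $X$. On that support $\sigma$ is positive definite, so $\cS_\sigma$ restricted to operators on $\supp\sigma$ is invertible: in the eigenbasis of $\sigma$ it multiplies the $(i,j)$ entry by $(\lambda_i+\lambda_j)/2>0$. This gives $G_*$ explicitly. The Jensen step is the main conceptual step, but it is standard.
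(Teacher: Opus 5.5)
Your proposal is correct and follows essentially the same route as the paper. Both arguments set $G=\sum_yg_yM_y$, use $\sum_y(g_yI-G)M_y(g_yI-G)\succeq0$ to pass to the spectral measurement of $G$, expand the risk to $1-2\Tr XG+\Tr\sigma G^2$, and optimize via $q_\sigma(X)=\mchi(\omega_+\Vert\sigma)$. Your support argument $\omega_\pm\preceq2\sigma$ for $X\in\operatorname{im}\cS_\sigma$, and the first-order check at $G_*$, only make explicit what the paper asserts without detail.
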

\end{resultbox}
\begin{proof}
For a POVM estimator set $G=\sum_yg_yM_y$. Positivity of the effects gives the operator inequality
\begin{equation}
 \sum_yg_y^2M_y-G^2
 =\sum_y(g_yI-G)M_y(g_yI-G)\succeq0.
 \label{eq:povm-variance}
\end{equation}
Hence the spectral measurement of $G$ cannot increase the mean-square error. Let $\widehat U_G$ denote its reported eigenvalue. For that estimator,
\begin{equation}
 \MSE(U,\widehat U_G)=\tfrac12\Tr\omega_+(G-I)^2+\tfrac12\Tr\omega_-(G+I)^2
 =1-2\Tr XG+\Tr\sigma G^2.
 \label{eq:risk-score}
\end{equation}
Optimizing \eqref{eq:risk-score} and using $\mchi(\omega_+\Vert\sigma)=q_\sigma(X)$ proves \eqref{eq:mmse}.
\end{proof}
The identity is a binary instance of Personick's quantum quadratic estimation \cite{Personick1971}. For any fixed POVM, let $p_u(y)=\Tr(\omega_uM_y)$ and $q_y=\Tr(\sigma M_y)=[p_+(y)+p_-(y)]/2$. The posterior-mean estimate at outcome $y$ is $[p_+(y)-p_-(y)]/(2q_y)$ when $q_y>0$. The constant estimate zero has mean-square error $1$, while these posterior means reduce the error by $\sum_{y:q_y>0}[p_+(y)-p_-(y)]^2/(4q_y)$. Taking the supremum over POVMs gives \eqref{eq:mmse}. Thus the denominator in \eqref{eq:measured-definition} is the outcome probability $q_y$ of the average state $\sigma$.

\begin{lemma}[Comparison with trace distance]\label{lem:distance}
For $d=\tfrac12\norm{\omega_+-\omega_-}_1$,
\begin{equation}
 d^2\le\mchi(\omega_+\Vert\sigma)\le d.
 \label{eq:distance}
\end{equation}
\end{lemma}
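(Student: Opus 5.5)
Both bounds follow from the outcome-wise formula behind \eqref{eq:measured-definition}. For a fixed POVM write $p_\pm(y)=\Tr(\omega_\pm M_y)$ and $q_y=[p_+(y)+p_-(y)]/2$. Since $\omega_+-\sigma=X=(\omega_+-\omega_-)/2$, the classical $\chi^2$ term at outcome $y$ is $[p_+(y)-p_-(y)]^2/(4q_y)$. This agrees with the posterior-mean computation after \cref{prop:mmse}.

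For the \textbf{upper bound}, note $|p_+(y)-p_-(y)|\le p_+(y)+p_-(y)=2q_y$ because both probabilities are nonnegative. Hence each term satisfies
\[
\frac{[p_+(y)-p_-(y)]^2}{4q_y}\le\frac{|p_+(y)-p_-(y)|}{2}.
\]
Summing gives at most the classical total variation $\tfrac12\sum_y|p_+(y)-p_-(y)|$. This is at most $d$ by the Helstrom bound, i.e.\ monotonicity of trace distance under measurement. Terms with $q_y=0$ have vanishing numerator and contribute zero. Taking the supremum over POVMs yields $\mchi(\omega_+\Vert\sigma)\le d$.

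For the \textbf{lower bound}, I would pick the Helstrom measurement: the two-outcome projective measurement onto the positive part $P$ of $\omega_+-\omega_-$ and its complement $I-P$. This gives $p_+(1)-p_-(1)=d$ and $p_+(2)-p_-(2)=-d$, with $q_1+q_2=1$. The $\chi^2$ value is then
\[
\frac{d^2}{4}\Bigl(\frac1{q_1}+\frac1{q_2}\Bigr)=\frac{d^2}{4q_1q_2}\ge d^2,
\]
since $q_1q_2\le 1/4$. If some $q_i=0$, then $d=0$ on that outcome and the bound is trivial. Alternatively, one can use Cauchy--Schwarz in the form $(\sum_y|a_y|)^2\le(\sum_y a_y^2/q_y)(\sum_y q_y)$. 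Both routes are elementary; the only mild care point is handling the zero-denominator conventions, which I do not expect to be an obstacle.
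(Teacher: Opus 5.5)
Your proof is correct and takes essentially the paper's route. For the upper bound you use the termwise estimate $x_y^2/q_y\le|x_y|$, where $x_y=[p_+(y)-p_-(y)]/2$ and $|x_y|\le q_y$, together with monotonicity of trace distance under measurement; for the lower bound you evaluate the Helstrom measurement. The paper packages both steps into the single Cauchy--Schwarz chain $(\sum_y|x_y|)^2\le\sum_yx_y^2/q_y\le\sum_y|x_y|$, and your explicit two-outcome value $d^2/(4q_1q_2)$ is that first inequality written out for the Helstrom POVM.
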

\begin{proof}
For an arbitrary POVM $\{M_y\}$, set $p_u(y)=\Tr(\omega_uM_y)$, $q_y=(p_+(y)+p_-(y))/2$, and $x_y=(p_+(y)-p_-(y))/2$. Terms with $q_y=0$ also have $x_y=0$ and are omitted below. Since $|x_y|\le q_y$, Cauchy--Schwarz gives
$(\sum_y|x_y|)^2\le\sum_yx_y^2/q_y\le\sum_y|x_y|$.
A Helstrom-optimal binary discrimination POVM has $\sum_y|x_y|=d$, so the first inequality proves the lower bound. The second holds for every POVM, and taking the supremum proves the upper bound.
\end{proof}

\FloatBarrier
\subsection{Hermitian smoothing}\label{sec:smoothing}
The measured-$\chi^2$ quadratic form extends to Hermitian first arguments by $\mchi(H\Vert\sigma)=q_\sigma(H-\sigma)$. This extension can be tested using physical states. For faithful $\sigma$ and traceless Hermitian $X$, the operator $\sigma+tX$ is a state for sufficiently small $|t|$, and
\begin{equation}
 \mchi(\Phi(\sigma+tX)\Vert\Phi(\sigma))
 =t^2q_{\Phi(\sigma)}(\Phi(X)).
 \label{eq:tangent}
\end{equation}
Thus a channel comparison on all state pairs also compares every traceless Hermitian direction. For arbitrary Hermitian $X$, write $X=X_0+c\sigma$, where $\Tr X_0=0$ and $c=\Tr X$. Since $\cS_\sigma^{-1}(\sigma)=I$, we have $q_\sigma(X)=q_\sigma(X_0)+c^2$, with the same additional term at both channel outputs. For a smooth state family through $\sigma$ with derivative $X$, this quadratic form is the SLD Fisher information \cite{BraunsteinCaves1994}.

We now restrict the estimator to observables whose expectation is determined by a fixed classical--quantum measurement record. The corresponding divergence admits a minimization over Hermitian completions of that record. Choose an orthonormal basis of the reference system $R$ to define partial transposition $T_R$, and set
\begin{equation}
 \begin{gathered}
 \cV_R=\{G\in\Herm(RA):G^{T_R}=G\},\qquad P_R=\tfrac12(\id+T_R),\\
 \schi(\rho\Vert\sigma)
 =\sup_{G\in\cV_R}\{2\Tr(\rho-\sigma)G-\Tr\sigma G^2\}.
 \end{gathered}
 \label{eq:restricted}
\end{equation}
Here $P_R$ is the Hilbert--Schmidt orthogonal projection onto $\cV_R$ as a real vector space. It need not be positive. The subscript $T_R$ indicates that admissible scores obey $G^{T_R}=G$ on the reference $R$. For a fixed state pair, the restricted value may depend on the chosen basis, since that choice determines the measurement record. The complete channel order introduced below is basis independent: it tests every reference input state, so a basis change can be absorbed by unitary conjugation on $R$. For a trivial reference, $\schi$ reduces to ordinary measured $\chi^2$.

For an equiprobable binary ensemble $\{\omega_+,\omega_-\}$ on $RA$, define the restricted minimum mean-square error by
\begin{equation}
 \begin{aligned}
 \MMSE_{\cV_R}(U|RA)
 &:=\inf_{G\in\cV_R}\MSE(U,\widehat U_G)\\
 &=\inf_{G\in\cV_R}\frac12\sum_{u=\pm1}\Tr\omega_u(G-uI)^2.
 \end{aligned}
 \label{eq:restricted-mmse-definition}
\end{equation}
Here $\widehat U_G$ measures $G$ spectrally and reports its eigenvalue, so the right-hand expression is \eqref{eq:risk-povm} restricted to such measurements. Its spectral projectors need not lie in $\cV_R$.

The next theorem identifies the restricted score with the least measured $\chi^2$ among Hermitian completions of the same measurement record.
\begin{resultbox}
\begin{theorem}[Fixed-record smoothing]\label{thm:completion}
For a faithful $\sigma$,
\begin{equation}
 \schi(\rho\Vert\sigma)
 =\min_{\substack{H\in\Herm(RA)\\P_RH=P_R\rho}}\mchi(H\Vert\sigma).
 \label{eq:completion}\equationalias{eq:intro-smoothing}
\end{equation}
Every feasible $H$ has trace one. When the denominator is the average of an equiprobable binary ensemble, the value equals $1-\MMSE_{\cV_R}(U|RA)$.
\end{theorem}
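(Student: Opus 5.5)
The identity \eqref{eq:completion} is a finite-dimensional convex duality between a quadratic maximization over the subspace $\cV_R$ and a quadratic minimization over an affine set of Hermitian operators. I will set $X=\rho-\sigma$ and reparametrize the feasible set as $H=\rho+K$ with $K\in\ker P_R$, that is, $K^{T_R}=-K$. Since $\sigma$ is faithful, $\cS_\sigma$ is invertible on $\Herm(RA)$ and $q_\sigma$ is finite. The objective is then the strictly convex quadratic $K\mapsto q_\sigma(X+K)=\Tr(X+K)\cS_\sigma^{-1}(X+K)$, minimized over the linear space $\ker P_R$, so a minimizer exists and the minimum is attained.

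The trace claim comes first and is immediate. The identity $I$ satisfies $I^{T_R}=I$, so $I\in\cV_R$. Because $P_R$ is the Hilbert--Schmidt orthogonal projection onto $\cV_R$, we get $\Tr H=\langle I,H\rangle=\langle I,P_RH\rangle=\langle I,P_R\rho\rangle=\Tr\rho=1$.

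For the main identity I will prove two inequalities.

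\emph{Weak duality.} Fix $G\in\cV_R$ and a feasible $H$. Since $\langle G,H-\rho\rangle=\langle G,P_R(H-\rho)\rangle=0$, we have $2\Tr(\rho-\sigma)G-\Tr\sigma G^2=2\Tr(H-\sigma)G-\Tr\sigma G^2\le q_\sigma(H-\sigma)$ by the variational form \eqref{eq:quadratic}. Taking the supremum over $G$ and the infimum over $H$ gives $\schi\le\min$.

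\emph{Attainment.} Let $K_*$ be the minimizer and set $H_*=\rho+K_*$, with score $G_*=\cS_\sigma^{-1}(H_*-\sigma)$. The first-order optimality condition is $\Tr K\,\cS_\sigma^{-1}(X+K_*)=0$ for every $K\in\ker P_R$, using self-adjointness of $\cS_\sigma^{-1}$ in the Hilbert--Schmidt inner product. This says $G_*\perp\ker P_R$, hence $G_*\in\cV_R$. Evaluating the supremum in \eqref{eq:restricted} at $G_*$ and using orthogonality to swap $\rho$ for $H_*$ gives $2\Tr(H_*-\sigma)G_*-\Tr\sigma G_*^2=\Tr(H_*-\sigma)\cS_\sigma^{-1}(H_*-\sigma)=\mchi(H_*\Vert\sigma)$. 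So equality holds.

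The main point needing care is that $\cS_\sigma^{-1}$ maps Hermitian operators to Hermitian operators and is Hilbert--Schmidt self-adjoint. This holds because $\cS_\sigma$ is, and both properties are clear in the eigenbasis of $\sigma$. No other subtlety arises because faithfulness removes the pseudoinverse issues.

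For the MMSE statement, take $\sigma=(\omega_++\omega_-)/2$ on $RA$ and $\rho=\omega_+$. By \eqref{eq:risk-score}, for any Hermitian $G$, $\MSE(U,\widehat U_G)=1-2\Tr XG+\Tr\sigma G^2$ with $X=(\omega_+-\omega_-)/2=\omega_+-\sigma$. Taking the infimum over $G\in\cV_R$ gives $\MMSE_{\cV_R}(U|RA)=1-\schi(\omega_+\Vert\sigma)$.

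Faithfulness is not needed for this last step. If $\sigma$ is not faithful, \eqref{eq:completion} itself is only stated for faithful $\sigma$, and the MMSE identity follows directly from the definition of $\schi$ anyway.
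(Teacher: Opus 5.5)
Your proof is correct and takes essentially the paper's approach. It uses the same weak-duality step, namely $\Tr KG=0$ for $K\in\cV_R^\perp$ and $G\in\cV_R$, and the same optimal pair satisfying $\cS_\sigma(G_*)=H_*-\sigma$, $G_*\in\cV_R$ and $H_*-\rho\in\cV_R^\perp$. The only difference is the starting point: the paper takes the restricted maximizer $G_*$ and builds the completion $K_*=\cS_\sigma(G_*)-X$ from the condition $P_R\cS_\sigma(G_*)=P_RX$, whereas you take the minimizing completion and show that its SLD score lies in $\cV_R$.
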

\end{resultbox}
\begin{proof}
Set $X=\rho-\sigma$. The first-order optimality condition for the restricted maximizer is $P_R\cS_\sigma(G_*)=P_RX$. Hence $K_*:=\cS_\sigma(G_*)-X$ belongs to $\cV_R^\perp$. For any $K\in\cV_R^\perp$ and $G\in\cV_R$, we have $\Tr KG=0$, so the variational formula for $q_\sigma(X+K)$ is bounded below by the restricted optimum. At $K=K_*$, the observable $G_*$ also maximizes the unrestricted objective, proving equality. Since every $K\in\cV_R^\perp$ is traceless, the feasible completions have trace one. The estimation identity follows from \eqref{eq:risk-score}.
\end{proof}

\paragraph{Measurement records.}
With $d_R=\dim R$, choose the real POVM
\begin{equation}
 E_i=\frac{\ket i\bra i}{d_R},\qquad
 E_{ij,\pm}=\frac{\ket{v_{ij,\pm}}\bra{v_{ij,\pm}}}{d_R},\qquad
 v_{ij,\pm}=\frac{\ket i\pm\ket j}{\sqrt2}\quad(i<j).
 \label{eq:real-povm}
\end{equation}
Its effects sum to $I$ and span the real symmetric matrices. The measurement channel stores each outcome in a separate classical register $Z$:
\begin{equation}
 \begin{aligned}
 \Theta_R(W)&=\sum_z\Tr(E_zW)\ket z\bra z_Z,\\
 (\Theta_R\otimes\id_A)(\rho_{RA})
 &=\sum_z\ket z\bra z_Z\otimes
 \Tr_R[(E_z\otimes I_A)\rho_{RA}].
 \end{aligned}
 \label{eq:record-channel}
\end{equation}
The following lemma identifies this classical--quantum record with the projection constraint.
\begin{lemma}[Fixed-record constraint]\label{lem:record}
For Hermitian operators $H,\rho$ on $RA$,
\begin{equation}
 P_RH=P_R\rho
 \quad\Longleftrightarrow\quad
 (\Theta_R\otimes\id_A)(H)=(\Theta_R\otimes\id_A)(\rho).
 \label{eq:record}
\end{equation}
\end{lemma}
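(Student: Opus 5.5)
The plan is to reduce both sides of \eqref{eq:record} to the statement that the difference $\Delta:=H-\rho$ is orthogonal to a certain real vector space, and then show that the two spaces coincide. By linearity, $P_RH=P_R\rho$ is equivalent to $P_R\Delta=0$. Since $P_R$ is the Hilbert--Schmidt orthogonal projection onto $\cV_R$, this says $\Tr(\Delta G)=0$ for every $G\in\cV_R$. Here $\Delta$ and $G$ are Hermitian, so the pairing is real.

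On the other side, $(\Theta_R\otimes\id_A)(\Delta)=0$ says that $\Tr_R[(E_z\otimes I_A)\Delta]=0$ for every outcome $z$, by \eqref{eq:record-channel}. This holds exactly when $\Tr[(E_z\otimes B)\Delta]=0$ for all $z$ and all $B\in\Herm(A)$, because Hermitian operators determine a Hermitian partial trace. The right-hand condition is therefore $\Delta\perp\mathcal W$, where $\mathcal W$ is the real span of the operators $E_z\otimes B$ with $B\in\Herm(A)$.

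The key step is to show $\mathcal W=\cV_R$. The effects in \eqref{eq:real-povm} span the real symmetric matrices $\mathrm{Sym}(R)$. Indeed, $\ket i\bra i$ comes from $E_i$, and $\ket i\bra j+\ket j\bra i$ comes from $d_R(E_{ij,+}-E_{ij,-})$. Hence $\mathcal W=\mathrm{Sym}(R)\otimes_{\mathbb R}\Herm(A)$ as a real span.

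Next I characterise $\cV_R$. Decompose $\Herm(R)=\mathrm{Sym}(R)\oplus i\,\mathrm{Asym}(R)$, where $\mathrm{Asym}(R)$ denotes the real antisymmetric matrices. Then $\Herm(RA)=\mathrm{Sym}(R)\otimes\Herm(A)\;\oplus\; i\,\mathrm{Asym}(R)\otimes\Herm(A)$, a real tensor decomposition that follows from the standard fact $\Herm(RA)=\Herm(R)\otimes_{\mathbb R}\Herm(A)$. The partial transpose $T_R$ acts as $+1$ on the first summand and $-1$ on the second, because transposing a real antisymmetric matrix flips its sign. The fixed space of $T_R$ within $\Herm(RA)$ is therefore exactly the first summand, so $\cV_R=\mathcal W$.

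With $\mathcal W=\cV_R$, the two orthogonality conditions on $\Delta$ are identical, which proves \eqref{eq:record}. The only point needing care is the spanning claim together with the real tensor decomposition of $\Herm(RA)$. I expect no real obstacle, though one should note that a real span is used throughout and that $T_R$ preserves Hermiticity.
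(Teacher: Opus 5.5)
Your proof is correct and follows the paper's own argument. You pass to $\Delta=H-\rho$, read the record condition as orthogonality to the real span of the $E_z\otimes B$, and use that the effects span the real symmetric matrices on $R$. You also spell out the identification $\cV_R=\mathrm{Sym}(R)\otimes_{\mathbb R}\Herm(A)$ via the $\pm1$ eigenspaces of $T_R$, which the paper uses only implicitly.
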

\begin{proof}
Put $\Delta=H-\rho$. Equality of the records means $\Tr_R[(E_z\otimes I_A)\Delta]=0$ for every outcome $z$. Since the effects span the real symmetric operators on $R$, these conditions are equivalent to Hilbert--Schmidt orthogonality of $\Delta$ to $\cV_R$, or $P_R\Delta=0$.
\end{proof}
Thus the constraint fixes every outcome probability and unnormalized conditional state produced by the specified real measurement on $R$. For a qubit reference, it fixes the coefficient operators of $I_R,X_R,Z_R$ and leaves the $Y_R$ component unconstrained. Equation~\eqref{eq:completion} minimizes distinguishability over all trace-one Hermitian operators consistent with this one record. Locally, it minimizes SLD information over Hermitian perturbations invisible to the record channel, namely those in $\ker(\Theta_R\otimes\id_A)$. These play the role of nuisance directions in quantum semiparametric estimation \cite{TsangAlbarelliDatta2020}.

The completion need not be positive. For the Bell state $\rho=\ket{\Phi^+}\bra{\Phi^+}$ and $\sigma=I/4$, an optimal completion is $(I+X\otimes X+Z\otimes Z)/4$, with one eigenvalue $-1/4$; the restricted and unrestricted divergences are $2$ and $3$, respectively. The restricted task also differs from first applying $\Theta_R$: an admissible observable such as $X\otimes X+Z\otimes Z$ can have spectral projections outside $\cV_R$. The restriction concerns the score observable, rather than every POVM effect.

Hermitian optimizers also arise in the variational form of the measured smooth collision divergence of \cite[Theorem~2]{RegulaTomamichel2026}. There, a feasible Hermitian $H$ obeys $H\preceq\rho$ and $\Tr(\rho-H)_+\le\varepsilon$. Here, the completion has trace one and preserves a specified measurement record. A related approach fixes a marginal during partial smoothing \cite{AnshuBertaJainTomamichel2018}.

\section{Quantum channel orders}\label{sec:orders}
We first define complete channel comparisons, prove their hierarchy, and relate them to regularized comparisons. A coherent-qubit realization then connects measured and quantum $\chi^2$. Finally, receiver dominance gives single-letter capacity consequences.

\subsection{Definitions and hierarchy}\label{subsec:hierarchy}
The estimation and smoothing tasks define comparisons between channels $\cM_{A\to E}$ and $\cN_{A\to B}$. For a finite reference $R$, let $\rho,\sigma$ be states on $RA$. We write $\Morder$ if
\begin{equation}
 \mchi(\cN(\rho)\Vert\cN(\sigma))
 \le\mchi(\cM(\rho)\Vert\cM(\sigma))
 \quad\text{for every }R,\rho,\sigma.
 \label{eq:complete-measured}
\end{equation}
Define $\Sorder$, $\Corder$, and $\Dorder$ similarly for the restricted $\chi_{T_R}^2$, quantum $\chi^2$, and relative-entropy quantities, with each comparison required for every finite reference $R$. The \emph{completely less-noisy} order $\LNorder$ means
\begin{equation}
 I(U:RE)_{\cM(\rho)}\ge I(U:RB)_{\cN(\rho)}
 \label{eq:cln}
\end{equation}
for every classical label $U$ and ensemble on $RA$ \cite{HircheRouzeFranca2022}. Since $I(U:R)$ is common to both outputs, this is equivalently a comparison of $I(U:E|R)$ and $I(U:B|R)$. Recent work on conditional contraction coefficients identifies the same order with a comparison of conditional relative entropies, allowing the compared states to have unequal reference marginals \cite[Theorem~1 and Section~3.4]{HircheGeorgeNuradhaWilde2026}. These relations are preorders. They induce partial orders after identifying mutually equivalent experiments.

\begin{definition}[Complete variance-dominating signed lift {\cite[Definition~3.3]{ZhuWang2026}}]
A complex-linear Hermiticity-preserving map $\cJ:\cL(B)\to\cL(E)$ is a complete signed lift of $\cN$ through $\cM$ if
\begin{equation}
 \cM^\dagger\circ\cJ=\cN^\dagger,\qquad
 \Var_{\cM(\tau)}(\cJ(Y))\le\Var_{\cN(\tau)}(Y)
 \label{eq:lift}
\end{equation}
for every finite reference $R$, input state $\tau_{RA}$, and Hermitian score $Y_{RB}$.
\end{definition}
The first condition preserves the expectation of every observable at every input state. The variance inequality is therefore equivalent to the corresponding second-moment inequality. Taking $Y=I$ at a faithful output gives $\cJ(I)=I$, so $\cJ^\dagger$ is trace preserving and $\cN=\cJ^\dagger\circ\cM$. Neither map is required to be positive. The distinction from physical postprocessing is that one common map reproduces observable means and controls their fluctuations without necessarily reproducing quantum states.

The next theorem places physical postprocessing, signed lifts, measured $\chi^2$, and the entropic comparisons in one hierarchy.
\begin{resultbox}
\begin{theorem}[Hierarchy of channel orders]\label{thm:hierarchy}
For channels with a common input,
\begin{equation}
\begin{array}{@{}>{\displaystyle}c@{\quad}>{\displaystyle}l@{}}
 \cN=\cD\cM & (\cD\text{ CPTP})\\[3pt]
 \Downarrow & \\[3pt]
 \Sorder & \Longleftrightarrow\quad\text{a complete variance-dominating signed lift exists}\\[3pt]
 \Downarrow & \\[3pt]
 \Morder & \Longrightarrow\quad\Corder
 \quad\Longleftrightarrow\quad\Dorder\quad\Longleftrightarrow\quad\LNorder.
\end{array}
\label{eq:hierarchy}
\end{equation}
To obtain a signed lift from the smoothed order, it suffices to test references with $d_R=d_B^2$.
\end{theorem}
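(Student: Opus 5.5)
The plan is to prove the chain link by link, using the variational forms throughout.

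\emph{Postprocessing implies the smoothed order.} Let $\cN=\cD\circ\cM$ with $\cD$ CPTP. Use the variational formula \eqref{eq:restricted}. For any admissible $G\in\cV_R$ on $RB$, put $G'=\cD^\dagger(G)$. It is Hermitian, and because $\cD^\dagger$ acts only on $B$ it commutes with $T_R$, so $G'\in\cV_R$. Then $\Tr(\rho-\sigma)G$ is unchanged when evaluated at the $\cM$-outputs. By Kadison--Schwarz for the unital CP map $\cD^\dagger$, $\cD^\dagger(G^2)\succeq\cD^\dagger(G)^2$. Hence the objective at $\cM$ with score $G'$ dominates the objective at $\cN$ with score $G$, and taking suprema gives the inequality.

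\emph{Smoothed order implies measured order.} Extend the reference to $R'=R\otimes R_2$, with $R_2$ a qubit, to absorb the $Y_R$ phases. The cleaner route is to note that any $G\in\Herm(RA)$ can be written $G=\Re G\otimes\ket0\bra0$-type embeddings into a real-symmetric operator on $R\otimes\mathbb C^2$. Concretely, use the real representation $G\mapsto G_{\mathrm{re}}\otimes I+G_{\mathrm{im}}\otimes iY$, which is $T_{R'}$-invariant. Pair it with the inputs $\rho\otimes\ket+\bra+$-like states on which this representation reproduces $\Tr\rho G$ and $\Tr\sigma G^2$. Thus $\schi$ with reference $RR_2$ equals $\mchi$ with reference $R$, and the smoothed order at $RR_2$ yields the measured order at $R$.

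\emph{Signed lift equivalence.} Given a lift $\cJ$, for any $G\in\cV_R$ on $RB$ take $\cJ(G)$. The relation $\cM^\dagger\cJ=\cN^\dagger$ preserves the linear term, and the variance condition gives $\Tr\cM(\sigma)\cJ(G)^2\le\Tr\cN(\sigma)G^2$. One still has to check that $\cJ(G)\in\cV_R$, which holds if $\cJ$ is Hermiticity preserving and acts only on $B$. This gives $\Sorder$.

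Conversely, and this is the main obstacle, the smoothed order must be turned into one linear map. The plan is to take $R=B'$ with $d_R=d_B^2$ and $\tau$ a maximally entangled-type input, so that scores $Y_{RB}$ encode the full Choi data of candidate maps. Then use \cref{thm:completion} and a minimax (convex duality) argument over the affine set $\{\cJ:\cM^\dagger\cJ=\cN^\dagger\}$ against the compact convex set of input states $\tau$. The conditions $\Var_{\cM(\tau)}(\cJ(Y))\le\Var_{\cN(\tau)}(Y)$ are convex in $\cJ$ and linear in $\tau$. The smoothed order supplies, for each $\tau$ and $Y$, a feasible score, and Sion's theorem exchanges the quantifiers to yield a single $\cJ$. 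I expect this step to need care: the Hermitian-completion minimizer must be identified with $\cJ^\dagger$ acting on the record, and the $d_B^2$ reference must be shown to be universal.

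\emph{Measured to quantum $\chi^2$, and the entropic equivalences.} Here I invoke \cref{prop:qubit-realization} (the coherent-qubit realization). The quantum $\chi^2$ form is an integral over measured $\chi^2$ values on a qubit extension. Applying the complete measured order under the integral, with the reference enlarged by that qubit, gives $\Corder$. The remaining equivalences $\Corder\Leftrightarrow\Dorder\Leftrightarrow\LNorder$ are cited from \cite{HircheRouzeFranca2022,BelzigGaoSmithWu2025}: an integral representation of $D$ through $\chi^2$ along the segment gives one direction, the second-order expansion of $D$ gives the other, and $\LNorder$ is equivalent to $\Dorder$ via conditional mutual information.
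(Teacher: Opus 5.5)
Your forward links are sound and essentially the paper's: $\cD^\dagger$ with Kadison--Schwarz, the pullback by $\id_R\otimes\cJ$ (which commutes with $T_R$), the coherent-qubit integral, and the cited entropic equivalences. For smoothed $\Rightarrow$ measured you take a genuinely different and more elementary route. The paper gets this arrow only by passing through the signed lift; you avoid the minimax entirely at the cost of one extra reference qubit.

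Your embedding works, but the reference qubit must be in $(\ket0+i\ket1)/\sqrt2$, not a real state like $\ket+$. With $\ket+$ the embedded problem reduces back to the $T_R$-restricted one. Write $G=G_0+G_1$ with $G_0^{T_R}=G_0$ and $G_1^{T_R}=-G_1$. Then $G_0\otimes I+G_1\otimes Y_{R_2}\in\cV_{RR_2}$, and with $\pi=(I+Y_{R_2})/2$ it reproduces $\Tr(\rho-\sigma)G$ and $\Tr\sigma G^2$. Hence $\mchi(\cN\rho\Vert\cN\sigma)\le\chi^2_{T_{RR_2}}(\cN\rho\otimes\pi\Vert\cN\sigma\otimes\pi)$. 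On the $\cM$ side the restricted value is at most $\mchi(\cM\rho\Vert\cM\sigma)$, because restricting scores lowers the supremum and appending $\pi$ does not change $\mchi$.

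The genuine gap is the converse, smoothed order $\Rightarrow$ signed lift. This is the substantive part of the theorem and the source of the claim $d_R=d_B^2$. As you set it up, Sion's theorem does not apply. The condition $\Tr\cM(\tau)\cJ(Y)^2\le\Tr\cN(\tau)Y^2$ is linear in $\tau$ but quadratic in $Y$. The worst case over $Y$ is therefore a supremum of functions linear in $\tau$, which is convex rather than concave in the maximization variable. The paper's fix is to fix a traceless basis $F_1,\dots,F_k$ of $B$, with $k=d_B^2-1$. Any pair $(\tau,Y)$ with $Y=C_0\otimes I+\sum_iC_i\otimes F_i$ is absorbed into the positive block weight $\Omega_{ij}=\Tr_R[(C_i\otimes I)\tau(C_j\otimes I)]$. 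The second-moment comparison for $H_i=\cJ(F_i)$ then becomes $\sum_{ij}\Tr\Omega_{ij}\cM^\dagger(H_jH_i)\le\sum_{ij}\Tr\Omega_{ij}\cN^\dagger(F_jF_i)$, which is affine in $\Omega$ over a compact convex set.

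More importantly, your claim that the smoothed order supplies a feasible score for each $\tau$ and $Y$ is exactly what needs proof, and the completion theorem does not give it. For fixed $\Omega$, Lagrangian duality over the affine set $\cM^\dagger(H_i)=\cN^\dagger(F_i)$ turns the minimal weighted $\cM$-cost into $\sup_{X}\{2\sum_i\Tr X_i\cN^\dagger(F_i)-Q_{\cM(\Omega)}((\cM(X_i))_i)\}$. Here $Q$ is the coupled quadratic form with Hermitian scores. This supremum is at most the $\cN$-cost of $\bm F$ once $Q_{\cN(\Omega)}\le Q_{\cM(\Omega)}$.

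That coupled comparison is where the $T_R$ restriction enters. Test the smoothed order on the reference $\mathbb C\ket0\oplus\mathbb C^k$, with denominator $\epsilon\tau\oplus\Omega$ and direction $\sum_i(\ket0\bra i+\ket i\bra0)\otimes X_i$. Invariance under $T_R$ forces the score blocks $Y_{0i}=Y_{i0}$ to be Hermitian, which reproduces $Q_\Omega$. Unrestricted scores would have non-Hermitian blocks and compute a different form.

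You also need two further facts. First, the unbiased set must be nonempty, which follows from $\ker\cM\subseteq\ker\cN$ via the trivial-reference comparison. Second, coercivity is needed so that the minimax infimum is attained by one tuple, which you then extend linearly to $\cJ$.
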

\end{resultbox}
\begin{proof}
For a physical postprocessing $\cD$, the map $\cJ=\cD^\dagger$ is a complete signed lift. Indeed, $\cD^\dagger$ is unital and completely positive, and applying it to the positive block matrix $\left(\begin{smallmatrix}Y^2&Y\\Y&I\end{smallmatrix}\right)$ gives $\cJ(Y)^2\preceq\cJ(Y^2)$ by the Schur complement. Equivalently, $H\mapsto\cD(H)$ preserves the record constraint and decreases measured $\chi^2$, including for Hermitian first arguments. For an arbitrary complete signed lift, unbiasedness and second-moment domination give
\begin{equation}
 2\Tr\cN(X)Y-\Tr\cN(\sigma)Y^2
 \le2\Tr\cM(X)\cJ(Y)-\Tr\cM(\sigma)\cJ(Y)^2.
 \label{eq:pullback}
\end{equation}
for Hermitian $Y$ and $X=\rho-\sigma$. Optimizing over $Y$ proves complete measured ordering. Since $\cJ$ preserves the constraint $Y^{T_R}=Y$, the restricted optimization also proves the smoothed order. For the converse, coupled restricted tests and a minimax argument yield one lift that works for every observable. The details are in \cref{app:lift}.

To prove the measured-to-quantum implication, fix a finite reference $R$, a faithful state $\sigma_{RA}$, and a state $\rho_{RA}$. Apply the coherent-qubit realization of \cref{prop:qubit-realization} to this pair. Complete measured ordering compares its block states $\omega_t,\widehat\sigma_t$ with reference $CR$, so
\begin{equation}
 \mchi\bigl(\cM(\omega_t)\Vert\cM(\widehat\sigma_t)\bigr)
 \ge\mchi\bigl(\cN(\omega_t)\Vert\cN(\widehat\sigma_t)\bigr)
 \quad(0<t<1).
 \label{eq:measured-to-quantum}
\end{equation}
Integrating \eqref{eq:measured-to-quantum} as in \eqref{eq:bkm-mmse} gives $\qchi(\cM(\rho)\Vert\cM(\sigma))\ge\qchi(\cN(\rho)\Vert\cN(\sigma))$. Full-rank approximation extends the inequality to boundary states under the support convention. The quantum-$\chi^2$/relative-entropy equivalence is \cite[Proposition~4.8]{HircheTomamichel2024}, applied for each reference. The relative-entropy/less-noisy equivalence is \cite[Proposition~2.3]{HircheRouzeFranca2022}.
\end{proof}

\Needspace{11\baselineskip}
The complete measured comparison and signed-lift certificate persist under the channel constructions used below.
\begin{resultbox}
\begin{proposition}[Closure of the complete criterion]\label{prop:closure}
For channels $\cN_i$ with a common input and probabilities $p_i$, put $\cN=\sum_i p_i\cN_i$. The following implications hold, with each signed lift taken through the channel's complement:
\begin{equation}
 \begin{aligned}
 \cN_i^c\succeq_{\chi_{\mathrm m}^2}^{\mathrm c}\cN_i\quad(\forall i)
 &\quad\Longrightarrow\quad
 \cN^c\succeq_{\chi_{\mathrm m}^2}^{\mathrm c}\cN,\\
 \cN_i\text{ has a complete signed lift}\quad(\forall i)
 &\quad\Longrightarrow\quad
 \cN\text{ has a complete signed lift}.
 \end{aligned}
 \label{eq:closure-statement}
\end{equation}
Both properties are also preserved by input isometries and output unitaries.
\end{proposition}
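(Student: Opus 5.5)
The plan is to use the Stinespring structure of a mixture. If $V_i:A\to B\otimes E_i$ are Stinespring isometries of $\cN_i$, then $V=\sum_i\sqrt{p_i}\,V_i\otimes\ket i_F$ is a Stinespring isometry of $\cN=\sum_ip_i\cN_i$. Its environment is $E=\bigoplus_iE_i\otimes F$. The complement is
\[
\cN^c(X)=\sum_{i,j}\sqrt{p_ip_j}\,\Tr_B(V_iXV_j^\dagger)\otimes\ket i\bra j_F .
\]
Let $\Delta_F$ denote dephasing on $F$. Then $\Delta_F\circ\cN^c(X)=\sum_ip_i\,\cN_i^c(X)\otimes\ket i\bra i$, which is a flagged mixture. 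Since $\Delta_F$ is CPTP, $\cN^c$ completely dominates the flagged channel $\widetilde\cN^c:=\sum_ip_i\cN_i^c\otimes\ket i\bra i$ in every order of \cref{thm:hierarchy}, by data processing. It therefore suffices to compare $\widetilde\cN^c$ with $\cN$ and to use transitivity of the preorders.

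For the measured-$\chi^2$ statement, fix $R,\rho,\sigma$. On the environment side, the classical flag makes measured $\chi^2$ decompose. An optimal POVM may first read the flag, since the flag is block diagonal and reading it loses nothing. This gives
\[
\mchi\bigl(\widetilde\cN^c(\rho)\Vert\widetilde\cN^c(\sigma)\bigr)=\sum_ip_i\,\mchi\bigl(\cN_i^c(\rho)\Vert\cN_i^c(\sigma)\bigr).
\]
Cleanly, this follows from the variational formula $q_\sigma$: for a block-diagonal $\sigma$ and $X$, the supremum over $G$ can be restricted to block-diagonal $G$. On the receiver side, joint convexity of $(X,\sigma)\mapsto q_\sigma(X)$ gives
\[
\mchi(\cN(\rho)\Vert\cN(\sigma))\le\sum_ip_i\,\mchi(\cN_i(\rho)\Vert\cN_i(\sigma)).
\]
Joint convexity holds because $q_\sigma(X)$ is a supremum of functions affine in $(X,\sigma)$. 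Chaining these with the hypotheses $\cN_i^c\succeq\cN_i$ term by term yields $\cN^c\succeq_{\mchi}^{\mathrm c}\cN$.

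For the signed lift, let $\cJ_i:\cL(B)\to\cL(E_i)$ satisfy \eqref{eq:lift} for $\cN_i$ through $\cN_i^c$. Define $\cJ(Y)=\bigoplus_i\cJ_i(Y)\otimes\ket i\bra i_F$ as a map into $\cL(E)$. Unbiasedness is checked by the calculation
\[
(\cN^c)^\dagger\cJ(Y)=\sum_ip_i(\cN_i^c)^\dagger\cJ_i(Y)=\sum_ip_i\cN_i^\dagger(Y)=\cN^\dagger(Y).
\]
This uses that a block-diagonal observable only sees the dephased output. The second moment also depends only on the diagonal blocks, since $\cJ(Y)^2$ is block diagonal. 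Hence
\[
\Tr\cN^c(\tau)\cJ(Y)^2=\sum_ip_i\Tr\cN_i^c(\tau)\cJ_i(Y)^2\le\sum_ip_i\Tr\cN_i(\tau)Y^2=\Tr\cN(\tau)Y^2 .
\]
Means agree, so the variance inequality follows. References $R$ are handled identically, since all maps act as identity on $R$.

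For input isometries $W$ and output unitaries $U$, the complement of $U\cN(W\cdot W^\dagger)U^\dagger$ is $\cN^c(W\cdot W^\dagger)$ up to an environment unitary. The reference states $(\id_R\otimes W)\rho(\id_R\otimes W)^\dagger$ remain valid test states, so both properties transfer directly. A lift transforms as $Y\mapsto U'\cJ(U^\dagger YU)U'^\dagger$, with $U'$ the environment unitary.

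The main obstacle is the bookkeeping that complementary channels are defined only up to isometries on their supports. The mixture's complement also has the coherent flag rather than a dephased one. For the $\chi^2$ part this is resolved by data processing through $\Delta_F$. For the lift, the block-diagonal choice of $\cJ$ is what avoids the off-diagonal terms.
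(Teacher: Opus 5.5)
Your proposal is correct and follows essentially the same route as the paper: you dephase the coherent branch label of the mixture's dilution, then chain joint convexity on Bob's side, the hypotheses, the direct-sum decomposition and data processing for measured $\chi^2$, and you use a block-diagonal lift on the flagged environment. The complement-choice bookkeeping you flag as the main obstacle is handled in the paper by one remark, which you could make explicit: if an isometry $W$ relates the compressed environment to the larger one, then $Y\mapsto W^\dagger\cJ(Y)W$ remains unbiased and satisfies $(W^\dagger\cJ(Y)W)^2\preceq W^\dagger\cJ(Y)^2W$, so the second-moment bound survives.
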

\end{resultbox}
\begin{proof}
For $\cN=\sum_i p_i\cN_i$, dephasing a coherent environmental branch label gives $\bigoplus_i p_i\cN_i^c$. Joint convexity, direct sums, and data processing give, with an arbitrary reference retained,
\begin{equation}
 \mchi(\cN\rho\Vert\cN\sigma)
 \le\sum_ip_i\mchi(\cN_i\rho\Vert\cN_i\sigma)
 \le\sum_ip_i\mchi(\cN_i^c\rho\Vert\cN_i^c\sigma)
 \le\mchi(\cN^c\rho\Vert\cN^c\sigma).
 \label{eq:convex-closure}
\end{equation}
Eve can read the branch label, whereas Bob receives an unflagged mixture. A block-diagonal observable lift on the flagged environment preserves means and the second-moment inequality. Restricting the input tests fewer states; output unitaries only change the observable basis. Compressing unused environmental subspaces also preserves the inequality.
\end{proof}

\paragraph{Converse implications.}
The smoothed order tests several scores within one reference-assisted experiment, not merely one score at a time. As shown in \cref{app:lift}, the joint tests yield Hermitian images of an observable basis that preserve all means and obey the second-moment bound for every reference. Extending these images linearly gives the complete signed lift. Separate state-dependent optimal estimators do not suffice. By comparison, reference-assisted guessing-probability orders characterize CPTP postprocessing \cite{Buscemi2016,BuscemiDattaStrelchuk2014}. The Hermitian-smoothed measured-$\chi^2$ order does not require its associated signed lift to preserve positivity of POVM effects.

For unital qubit channels, complete measured ordering has a converse to the hierarchy. If the source Bloch matrix is singular, even an ordinary measured comparison forces physical postprocessing.
\begin{resultbox}
\begin{proposition}[Unital-qubit converse]\label{prop:qubit-converse}
For unital qubit-to-qubit channels with a common input,
\begin{equation}
 \Morder\quad\Longleftrightarrow\quad\Sorder.
\end{equation}
If $\cM$ has a singular Bloch matrix, ordinary measured-$\chi^2$ ordering already implies CPTP postprocessing:
\begin{equation}
 \cM\succeq_{\chi_{\mathrm m}^2}\cN
 \quad\Longrightarrow\quad
 \cN=\cD\circ\cM\qquad(\cD\text{ CPTP}).
\end{equation}
\end{proposition}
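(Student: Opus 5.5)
The plan is to work entirely in the Bloch representation. A unital qubit channel acts as $\cM(\tfrac12(I+r\cdot\vec\sigma))=\tfrac12(I+T_{\cM}r\cdot\vec\sigma)$, with $T_{\cM}$ a real $3\times3$ matrix, and similarly for $\cN$.

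\textbf{Step 1 (a contraction $K$).} I first extract a linear-algebra consequence of the ordinary measured comparison. Apply the tangent identity \eqref{eq:tangent} at the faithful point $\sigma=I/2$, where $\cM(\sigma)=\cN(\sigma)=I/2$ and $q_{I/2}(Z)=2\Tr Z^2$. This gives $\norm{T_{\cN}r}\le\norm{T_{\cM}r}$ for all $r\in\mathbb R^3$, that is, $T_{\cN}^{\mathsf T}T_{\cN}\preceq T_{\cM}^{\mathsf T}T_{\cM}$. Hence $\ker T_{\cM}\subseteq\ker T_{\cN}$. It also gives a linear map $K_0:\operatorname{im}T_{\cM}\to\mathbb R^3$ with $T_{\cN}=K_0T_{\cM}$ and $\norm{K_0}\le1$. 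If $T_{\cM}$ is invertible, $K=T_{\cN}T_{\cM}^{-1}$ is unique. If $T_{\cM}$ is singular, $K_0$ may be extended arbitrarily on $(\operatorname{im}T_{\cM})^\perp$, which contains at least one free direction.

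\textbf{Step 2 (singular case).} Here I use that freedom to make the extension a unital CPTP map $\cD$ with Bloch matrix $K$; then $\cN=\cD\circ\cM$. After rotating input and output by $SO(3)$, which preserves complete positivity, I reduce to $\operatorname{im}T_{\cM}\subseteq\operatorname{span}\{e_1,e_2\}$. I then take the singular value decomposition of $K_0$ restricted to that plane, with singular values $a,b\le1$. The completion maps $e_3$ to a multiple $\lambda_3$ of the remaining orthogonal output direction, giving signed singular values $(a,b,\lambda_3)$. By the Fujiwara--Algoet conditions, complete positivity reduces to $|a+b|\le1+\lambda_3$ and $|a-b|\le1-\lambda_3$, together with the sign-adjusted variants. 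I choose $\lambda_3\in[a+b-1,\,1-|a-b|]$, an interval that is nonempty exactly because $\max(a,b)\le1$. Signs coming from orientation are absorbed by the choice of sign of $\lambda_3$. The one point to check is that the input rotation can be chosen so that $K_0$ genuinely acts on the plane while the third column stays free. This yields the second claim, using only the ordinary, reference-free ordering.

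\textbf{Step 3 (the equivalence).} The direction $\Sorder\Rightarrow\Morder$ is already part of \cref{thm:hierarchy}. For the converse, I build the candidate lift $\cJ=\cD^\dagger$, the unital map with Bloch matrix $K^{\mathsf T}$. Then $\cM^\dagger\circ\cJ=\cN^\dagger$ holds automatically, and by \cref{thm:hierarchy} it suffices to verify the variance condition in \eqref{eq:lift}.
\begin{itemize}
\item In the singular case, $\cJ$ is unital CP, so the Schur-complement argument in the proof of \cref{thm:hierarchy} finishes.
\item In the invertible case, $K$ is forced, and the claim becomes: complete measured ordering implies that $\cD$ with Bloch matrix $K$ is CP, or at least satisfies the complete Schwarz inequality $\Tr(\cM\otimes\id)(\tau)\,[\cJ(Y)^2-\cJ(Y^2)]\le0$ on the relevant outputs.
\end{itemize}
To prove this I use reference-assisted tests. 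I take $R$ a qubit and the maximally entangled input as base point $\sigma_{RA}$, perturbed in directions $X$ that are correlated between $R$ and $A$. At such base points the output $(\cM\otimes\id)(\sigma)$ is no longer maximally mixed in general, so the SLD form $q$ becomes a weighted norm. The resulting inequalities should reproduce the Fujiwara--Algoet inequalities for $K$. Concretely, I would test $\sigma=(\id\otimes\cM)(\Phi^+)$-type Choi states and read off conditions on the signed singular values of $K=T_{\cN}T_{\cM}^{-1}$.

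\textbf{Main obstacle.} The invertible case of Step 3 is the main difficulty: showing that complete, rather than merely ordinary, measured ordering forces $K$ to define a CP map, or directly yields the complete second-moment bound. If extracting full complete positivity fails, the fallback is to prove variance domination directly. Since $\cJ$ is unital with $\norm{K}\le1$, the defect $\cJ(Y^2)-\cJ(Y)^2$ can be written explicitly for $Y=\sum_\mu Y_\mu\otimes\sigma_\mu$ on $RB$, and I would check its expectation against $(\cM\otimes\id)(\tau)$ using the measured comparison at $\sigma=\tau$ itself.
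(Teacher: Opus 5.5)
Your Steps 1 and 2 are correct and follow the paper's singular-case argument. The paper extends the contraction by $\diag(r,s,rs)$, which is one point of your interval $[a+b-1,\,1-|a-b|]$, and it uses the unused axis to make both rotations proper, as you anticipate. The direction $\Sorder\Rightarrow\Morder$ and the singular half of Step~3 are also fine.

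The genuine gap is the invertible case of Step~3, which you leave open. Your primary route there is not merely incomplete but wrong. Complete measured ordering does \emph{not} force $K=T_{\cN}T_{\cM}^{-1}$ to be completely positive, so no family of reference-assisted tests can yield the Fujiwara--Algoet inequalities for $K$. Here is a counterexample:
\begin{itemize}
\item Let $\cM$ be depolarizing with small Bloch factor $\lambda$, and let $T_{\cN}=\lambda\kappa\,\diag(1,-1,1)$ with $1/3<\kappa<1$.
\item Because $\cM$ is bijective, $\cD$ is forced to have Bloch matrix $\kappa\,\diag(1,-1,1)$. One of its Pauli weights is $(1-3\kappa)/4<0$, so $\cD$ is not CP.
\item Write $Y=Y_0\otimes I+\sum_jY_j\otimes P_j$ with Pauli matrices $P_j$, and let $\cJ=\cD^\dagger$. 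The defect $\cJ(Y^2)-\cJ(Y)^2$ does not involve $Y_0$.
\item Its expectation in $\tau_R\otimes I/2$ is exactly $(1-\kappa^2)\sum_j\Tr\tau_RY_j^2$.
\item By Cauchy--Schwarz, $\bigl|\Tr\tau\,[\cJ(Y^2)-\cJ(Y)^2]\bigr|\le C\sum_j\Tr\tau_RY_j^2$, with $C$ independent of $\lambda,\tau,Y$.
\item Since $(\id\otimes\cM)(\tau)=\lambda\tau+(1-\lambda)\tau_R\otimes I/2$, the variance condition holds for small $\lambda$. So a complete signed lift exists, and hence $\Morder$ holds, although $\cD$ is not CP.
\end{itemize}

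What the invertible case needs is the full-output-span argument of \cref{app:converses}. It works with the forced map $\cJ$ directly rather than with its positivity.
\begin{itemize}
\item For unital $\cM$ with invertible Bloch matrix, $\id_R\otimes\cM$ maps $\Herm(RA)$ onto $\Herm(RE)$ for every $R$. Hence $\id_R\otimes\cJ$ is the unique unbiased map.
\item For faithful $\sigma_{RA}$ and every Hermitian $X$, the tangent identity gives $q_{\cN(\sigma)}(\cN(X))\le q_{\cM(\sigma)}(\cM(X))$. For non-traceless $X$, subtracting $(\Tr X)\sigma$ adds the same $(\Tr X)^2$ to both sides.
\item Combine this with the variational formula for $q_{\cN(\sigma)}$ and with $\Tr\cN(X)Y=\Tr\cM(X)\cJ(Y)$. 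This gives $2\Tr\cM(X)\cJ(Y)-q_{\cM(\sigma)}(\cM(X))\le\Tr\cN(\sigma)Y^2$.
\item Because $\cM(X)$ exhausts all Hermitian operators, the supremum over $X$ is the Legendre conjugate of $q_{\cM(\sigma)}$ at $\cJ(Y)$. That conjugate is $\Tr\cM(\sigma)\cJ(Y)^2$, so this is exactly the required second-moment bound. Continuity then removes the faithfulness assumption.
\end{itemize}
Your fallback, checking the defect ``using the measured comparison at $\sigma=\tau$,'' gestures in this direction. The missing idea is the duality-plus-surjectivity step that converts the comparison of SLD quadratic forms into a comparison of second moments.
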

\end{resultbox}
\begin{proof}
If the source Bloch matrix is invertible, the source has full output span and the unique unbiased observable map is a complete signed lift by the argument in \cref{app:converses}. Otherwise, measured ordering at the maximally mixed input gives $T_N^{\mathsf T}T_N\preceq T_M^{\mathsf T}T_M$ for the two Bloch matrices. Hence the rule $C(T_Mv)=T_Nv$ defines a contraction on the image of $T_M$, which has dimension at most two. The unused axis lets the input and output rotations be proper and hence unitary. In these bases, extend $C$ to the Bloch matrix $\diag(r,s,rs)$, where $r,s\in[0,1]$ are its singular values (zero if absent). Its four Pauli probabilities are $(1\pm r)(1\pm s)/4\ge0$, so the extension is CPTP and $\cN=\cD\circ\cM$. The reverse implications follow from \cref{thm:hierarchy}.
\end{proof}

We also compare complete orders with their regularized versions. Write $\cM\succeq_D^{\rm reg}\cN$ when the ordinary relative-entropy comparison holds for $\cM^{\otimes n},\cN^{\otimes n}$ for every $n$. Define the measured $\chi^2$, quantum $\chi^2$, and less-noisy versions similarly. Entangled block inputs are allowed, but there is no untouched quantum reference.

The complete measured-$\chi^2$ order tensorizes by replacing one channel factor at a time, treating the other factor's output and any external system as the reference. Thus, for arbitrary joint inputs,
\begin{equation}
 (\cM_1\succeq_{\chi_{\mathrm m}^2}^{\mathrm c}\cN_1,
 \ \cM_2\succeq_{\chi_{\mathrm m}^2}^{\mathrm c}\cN_2)
 \ \Longrightarrow\
 \cM_1\otimes\cM_2\succeq_{\chi_{\mathrm m}^2}^{\mathrm c}\cN_1\otimes\cN_2.
 \label{eq:tensor}
\end{equation}
Iterating this comparison gives the measured-$\chi^2$ implication below. The less-noisy implication follows from complete less-noisy tensorization \cite[Lemma~3.1]{HircheRouzeFranca2022}. In each case, we take the external reference to be trivial:
\begin{equation}
 \begin{aligned}
 \cM\succeq_{\rm ln}^{\mathrm c}\cN
 &\ \Longrightarrow\ \cM\succeq_{\rm ln}^{\rm reg}\cN,\\
 \cM\succeq_{\chi_{\mathrm m}^2}^{\mathrm c}\cN
 &\ \Longrightarrow\ \cM\succeq_{\chi_{\mathrm m}^2}^{\rm reg}\cN.
 \end{aligned}
 \label{eq:complete-to-regularized}
\end{equation}
\Cref{fig:classes} displays the resulting inclusions between complete and regularized channel classes.

\begin{figure}[t]
\centering\includegraphics[width=\textwidth]{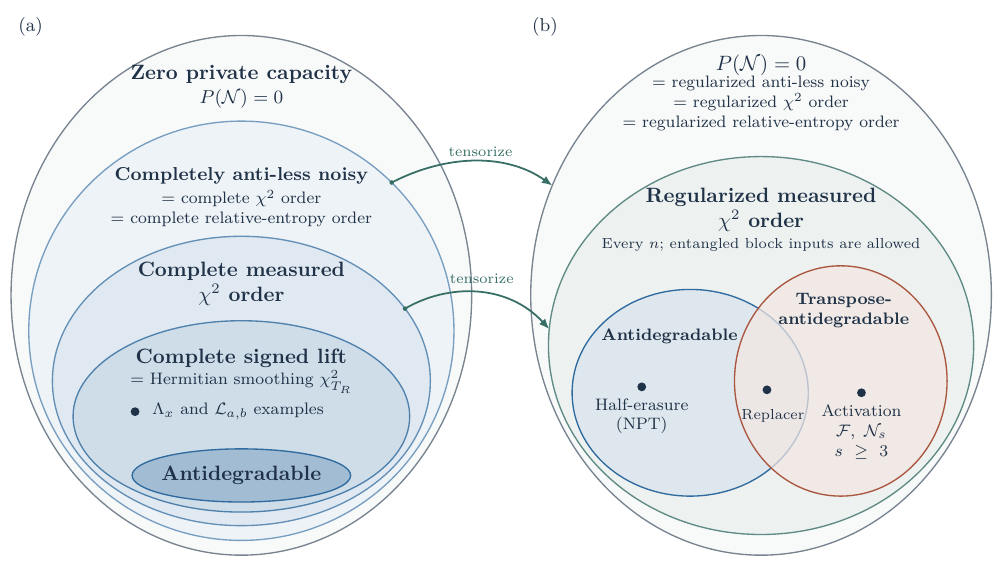}
\caption{Channel classes defined by comparing $\cN^c$ with $\cN$. Nesting denotes inclusion and does not assert strictness of every intermediate inclusion. Panel (a) displays reference-complete criteria. Panel (b) displays criteria imposed at every tensor power without an external reference; its inner region is regularized measured-$\chi^2$ domination, whereas regularized anti-less-noisiness coincides with the outer zero-private-capacity class. The cross-panel arrows are the tensorization implications in \eqref{eq:complete-to-regularized}. The Pauli channel $\mathcal F$ and the coherent-ladder channels $\cN_s$, $s\ge3$, satisfy regularized measured domination but not complete anti-less-noisiness.}
\label{fig:classes}
\end{figure}

\subsection{\texorpdfstring{Quantum $\chi^2$ divergence}{Quantum chi-square divergence}}\label{sec:bkm}
We use the quantum $\chi^2$ divergence of \cite[Section~2.6 and Eq.~(2.62)]{HircheTomamichel2024}, obtained from the quantum $f$-divergence with $f(x)=(x-1)^2$. For a faithful state $\sigma=\sum_i s_i\ket i\bra i$, its quadratic form has both the integral definition and the spectral expression
\begin{equation}
 \begin{aligned}
 \qchi_\sigma(X)&:=\int_0^\infty
 \Tr\!\left[X(\sigma+sI)^{-1}X(\sigma+sI)^{-1}\right]\,ds\\
 &=\sum_{i,j}
 \frac{\ln s_i-\ln s_j}{s_i-s_j}\,
 |\bra iX\ket j|^2,\\
 \qchi(\rho\Vert\sigma)&:=\qchi_\sigma(\rho-\sigma).
 \end{aligned}
 \label{eq:bkm}
\end{equation}
The quotient has the continuous value $1/s_i$ when $s_i=s_j$. For nonfaithful $\sigma$, the integral defines the lower-semicontinuous extension: $\qchi(\rho\Vert\sigma)=+\infty$ unless $\supp\rho\subseteq\supp\sigma$, and otherwise the integral is taken on $\supp\sigma$. This is the Bogoliubov--Kubo--Mori (BKM) quadratic form, equal to $\ln2$ times the Hessian metric of the bit-valued relative entropy \cite{PetzToth1993,LesniewskiRuskai1999}. We use the $\chi^2$ terminology throughout.

The two quadratic forms describe different prediction gains. Reporting a state $\tau$ when the true state is $\omega$ incurs expected logarithmic loss
\begin{equation}
 -\Tr\omega\log_2\tau=S(\omega)+D(\omega\Vert\tau).
\end{equation}
Consider an equally likely classical label $U\in\{+1,-1\}$ encoded in the states $\sigma\pm uX$, where $\Tr X=0$ and $u$ is sufficiently small. The reduction in optimal logarithmic loss when the label is known is the classical--quantum mutual information $I(U:B)$:
\begin{equation}
 I(U:B)=S(\sigma)-\tfrac12S(\sigma+uX)-\tfrac12S(\sigma-uX)
 =\tfrac{u^2}{2\ln2}\qchi_\sigma(X)+O(u^4).
 \label{eq:curvature}
\end{equation}
Thus $\qchi_\sigma$ describes the local advantage in logarithmic prediction \cite{BlumeKohoutHayden2006,Frongillo2022}, using the second-derivative identity of \cite[Theorem~2.8]{HircheTomamichel2024}. For the same ensemble, the squared-loss estimation gain is $u^2q_\sigma(X)$.

\Needspace{13\baselineskip}
The following proposition expresses quantum $\chi^2$ as an integral of measured $\chi^2$ divergences on an enlarged system that includes a coherent qubit $C$. Its MMSE interpretation follows from \cref{prop:mmse}.
\begin{resultbox}
\begin{proposition}[Coherent-qubit realization]\label{prop:qubit-realization}
Let $\rho,\sigma$ be states on $A$ with $\sigma>0$, put $X=\rho-\sigma$, and choose $h>0$ such that $h\norm{\sigma^{-1/2}X\sigma^{-1/2}}_\infty\le1$. Let $C\cong\mathbb C^2$ be a qubit. For $0<t<1$, define the following block operators on $C\otimes A$:
\begin{equation}
 \widehat\sigma_t=\begin{pmatrix}t\sigma&0\\0&(1-t)\sigma\end{pmatrix},\qquad
 \widehat X=\begin{pmatrix}0&X\\X&0\end{pmatrix},\qquad
 \omega_t=\widehat\sigma_t+h\sqrt{t(1-t)}\widehat X.
\end{equation}
Then $\omega_t$ is a state and
\begin{equation}
 \qchi(\rho\Vert\sigma)=\frac1{4h^2}\int_0^1
 \frac{\mchi(\omega_t\Vert\widehat\sigma_t)}{t(1-t)}\,dt.
 \label{eq:bkm-mmse}
\end{equation}
\end{proposition}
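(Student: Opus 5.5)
The plan is to compute $\mchi(\omega_t\Vert\widehat\sigma_t)$ in closed form through the SLD representation $\mchi(\rho\Vert\sigma)=q_\sigma(\rho-\sigma)$ from \eqref{eq:quadratic}. I will then integrate the result against $dt/[t(1-t)]$ and compare it with the spectral form of $\qchi_\sigma$ in \eqref{eq:bkm}.

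\emph{Step 1: $\omega_t$ is a state.} The trace is $t\Tr\sigma+(1-t)\Tr\sigma=1$, since $\Tr\widehat X=0$. For positivity, set $Y=\sigma^{-1/2}X\sigma^{-1/2}$, which is Hermitian with $\norm{hY}_\infty\le1$. Then
\[
\omega_t=(I_C\otimes\sigma^{1/2})\begin{pmatrix}tI&h\sqrt{t(1-t)}\,Y\\ h\sqrt{t(1-t)}\,Y&(1-t)I\end{pmatrix}(I_C\otimes\sigma^{1/2}).
\]
Positivity of the middle block matrix follows from the Schur complement: $(1-t)I-t(1-t)h^2Y^2/t=(1-t)(I-h^2Y^2)\succeq0$.

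\emph{Step 2: solve the SLD equation.} Since $\widehat\sigma_t>0$, we have $\mchi(\omega_t\Vert\widehat\sigma_t)=h^2t(1-t)\,q_{\widehat\sigma_t}(\widehat X)$. The factor $t(1-t)$ therefore cancels in the integrand, leaving $\tfrac{1}{4}\int_0^1 q_{\widehat\sigma_t}(\widehat X)\,dt$. For the score, take the off-diagonal ansatz $G=\left(\begin{smallmatrix}0&K\\K^\dagger&0\end{smallmatrix}\right)$. The equation $\cS_{\widehat\sigma_t}(G)=\widehat X$ then reduces to the Sylvester equation $t\sigma K+(1-t)K\sigma=2X$, together with its adjoint. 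Writing $\sigma=\sum_is_i\ket i\bra i$, the solution is $K_{ij}=2X_{ij}/(ts_i+(1-t)s_j)$. Because $\widehat\sigma_t$ is faithful, this $G$ is the unique solution, and it is Hermitian.

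\emph{Step 3: evaluate the quadratic form.} We have
\[
q_{\widehat\sigma_t}(\widehat X)=\Tr\widehat XG=\Tr XK^\dagger+\Tr XK .
\]
In the eigenbasis of $\sigma$ this equals
\[
\sum_{i,j}|X_{ij}|^2\Bigl[\frac{2}{ts_i+(1-t)s_j}+\frac{2}{ts_j+(1-t)s_i}\Bigr].
\]
Here I use $X_{ji}=\overline{X_{ij}}$ and relabel indices. Integrating over $t\in(0,1)$ gives, for $s_i\neq s_j$,
\[
\int_0^1\frac{dt}{ts_i+(1-t)s_j}=\frac{\ln s_i-\ln s_j}{s_i-s_j},
\]
and the value $1/s_i$ in the degenerate case, matching the continuous convention. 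Hence $\int_0^1 q_{\widehat\sigma_t}(\widehat X)\,dt=4\sum_{i,j}\frac{\ln s_i-\ln s_j}{s_i-s_j}|X_{ij}|^2=4\qchi_\sigma(X)$. Dividing by $4$ yields \eqref{eq:bkm-mmse}.

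I expect the main difficulty to be bookkeeping rather than conceptual. The delicate points are getting the transposed denominators right in the two off-diagonal blocks, confirming that the off-diagonal ansatz actually solves the SLD equation (the diagonal blocks vanish because $\widehat\sigma_t$ is block diagonal), and checking that the integrand is integrable near $t\in\{0,1\}$. The last point holds because every denominator $ts_i+(1-t)s_j$ is bounded below by $\min_k s_k>0$.
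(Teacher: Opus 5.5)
Your proposal is correct and follows the paper's proof essentially step by step. Positivity comes from congruence to a $2\times2$ block matrix: the paper conjugates by $\widehat\sigma_t^{-1/2}$ to obtain $\left(\begin{smallmatrix}I&K\\K&I\end{smallmatrix}\right)$, while you conjugate by $I_C\otimes\sigma^{1/2}$ and take a Schur complement. The rest matches: the off-diagonal SLD solution $(G_{01})_{ij}=2X_{ij}/[ts_i+(1-t)s_j]$, the cancellation of $t(1-t)$, and integration to the logarithmic divided difference. Your transposed denominator in the second trace term is harmless, since $|X_{ij}|=|X_{ji}|$ makes the two sums over all $(i,j)$ equal.
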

\end{resultbox}
\begin{proof}
Set $K=h\sigma^{-1/2}X\sigma^{-1/2}$. Then $\widehat\sigma_t^{-1/2}\omega_t\widehat\sigma_t^{-1/2}=\left(\begin{smallmatrix}I&K\\K&I\end{smallmatrix}\right)$ is positive because $\norm K_\infty\le1$. Also $\Tr\omega_t=1$. Write $G$ in the same qubit-block form as $\widehat\sigma_t$. Since $\widehat\sigma_t$ is block diagonal, the upper-right block of $\cS_{\widehat\sigma_t}(G)=\widehat X$ is
\begin{equation*}
 \begin{aligned}
 [\cS_{\widehat\sigma_t}(G)]_{01}
 &=\tfrac12[(\widehat\sigma_tG)_{01}+(G\widehat\sigma_t)_{01}]\\
 &=\tfrac12[t\sigma G_{01}+(1-t)G_{01}\sigma]
 =(\widehat X)_{01}=X.
 \end{aligned}
\end{equation*}
In an eigenbasis $\sigma=\sum_i s_i\ket i\bra i$, with $X_{ij}=\bra iX\ket j$, its $(i,j)$ entry gives $(G_{01})_{ij}=2X_{ij}/[ts_i+(1-t)s_j]$ and $G_{10}=G_{01}^\dagger$. The two off-diagonal blocks contribute equally to $\Tr\widehat XG$, giving
\begin{equation}
 q_{\widehat\sigma_t}(\widehat X)
 =4\sum_{i,j}\frac{|X_{ij}|^2}{ts_i+(1-t)s_j},\qquad
 \qchi_\sigma(X)=\tfrac14\int_0^1q_{\widehat\sigma_t}(\widehat X)\,dt.
 \label{eq:qubit-bridge}
\end{equation}
In an eigenbasis of $\sigma$, integrating $[ts_i+(1-t)s_j]^{-1}$ gives the logarithmic divided difference in \eqref{eq:bkm} \cite{LesniewskiRuskai1999}. Since $\omega_t-\widehat\sigma_t=h\sqrt{t(1-t)}\widehat X$, the measured-$\chi^2$ identity gives $\mchi(\omega_t\Vert\widehat\sigma_t)=h^2t(1-t)q_{\widehat\sigma_t}(\widehat X)$. Now $\qchi_\sigma(X)=\qchi(\rho\Vert\sigma)$ proves \eqref{eq:bkm-mmse}.
\end{proof}
The block calculation \eqref{eq:qubit-bridge} also holds for arbitrary Hermitian $X$: $\widehat X$ is traceless even if $X$ is not, and a sufficiently small $h$ makes the two block states positive. Thus quantum $\chi^2$ is an integral of measured $\chi^2$ divergences, equivalently of gains in physical squared-loss experiments. The qubit carries the signal in its off-diagonal blocks. Dephasing it would remove the perturbation. Applying a complete measured comparison to each block-state pair and integrating gives the quantum-$\chi^2$ comparison in the proof of \cref{thm:hierarchy}.

\subsection{Receiver dominance and single-letter capacities}\label{subsec:receiver-dominance}
The same comparisons have a capacity consequence in the opposite direction. For $\cN:A\to B$ and its complement $\cN^c:A\to E$, receiver dominance $\cN\succeq_{\mathrm{ln}}^{\mathrm{reg}}\cN^c$ means $I(U:B^n)\ge I(U:E^n)$ for every blocklength and every input ensemble. By the private-capacity formula, this is equivalent to $P(\cN^c)=0$: Watanabe's less-noisy channel class \cite[Definition~2]{watanabe2012private}. Alongside the single-use private information defined in \eqref{eq:intro-private-capacity}, define the single-use coherent information by
\begin{equation}
 Q^{(1)}(\cN):=\max_{\rho_A}\bigl[S(\cN(\rho_A))-S(\cN^c(\rho_A))\bigr].
 \label{eq:single-letter-definitions}
\end{equation}
The complementarity bounds of \cite[Corollary~2.4 and Theorem~2.8]{HircheLeditzky2023} make the role of this premise quantitative. For a channel $\mathcal L$, define
\begin{equation}
 \begin{aligned}
 R_D(\mathcal L):=\sup_{\substack{n\ge1\\\rho,\,\sigma>0}}\frac1n\bigl[
 &D(\mathcal L^{\otimes n}(\rho)\Vert\mathcal L^{\otimes n}(\sigma))\\
 &-D((\mathcal L^c)^{\otimes n}(\rho)\Vert(\mathcal L^c)^{\otimes n}(\sigma))\bigr],
 \end{aligned}
 \label{eq:relative-entropy-advantage}
\end{equation}
where $\rho,\sigma$ are states on the $n$ inputs and $\sigma$ is faithful. This measures the regularized relative-entropy advantage over the complement. The cited bounds imply
\begin{equation}
 \begin{aligned}
 0\le Q(\cN)-Q^{(1)}(\cN)&\le R_D(\cN^c),\\
 0\le P(\cN)-P^{(1)}(\cN)&\le Q(\cN^c)+R_D(\cN^c),\\
 0\le P(\cN)-Q(\cN)&\le Q(\cN^c).
 \end{aligned}
 \label{eq:complementarity-capacity-bounds}
\end{equation}
Thus the complement controls both the gain from regularization and the gap between private and quantum capacity.

\Needspace{20\baselineskip}
These bounds give the following single-letter criterion and its sufficient conditions.
\begin{resultbox}
\begin{proposition}[Receiver dominance and single-letter capacities]\label{prop:single-letter}
Receiver dominance at every blocklength removes both regularizations:
\begin{equation}
 \begin{gathered}
 \cN\succeq_{\mathrm{ln}}^{\mathrm{reg}}\cN^c
 \quad\Longleftrightarrow\quad P(\cN^c)=0\\
 \Longrightarrow\quad
 P(\cN)=P^{(1)}(\cN)=Q(\cN)=Q^{(1)}(\cN).
 \end{gathered}
 \label{eq:receiver-single-letter}
\end{equation}
In particular, the complete criteria supply sufficient conditions:
\begin{equation}
 \begin{gathered}
 \cN\succeq_{\chi_{T_R}^2}^{\mathrm c}\cN^c
 \ \Longrightarrow\
 \cN\succeq_{\chi_{\mathrm m}^2}^{\mathrm c}\cN^c\\
 \Longrightarrow\quad
 P(\cN)=P^{(1)}(\cN)=Q(\cN)=Q^{(1)}(\cN).
 \end{gathered}
 \label{eq:receiver-order-criterion}
\end{equation}
\end{proposition}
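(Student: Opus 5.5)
The proof has three parts: the equivalence with $P(\cN^c)=0$, the collapse of the regularizations, and the sufficient complete conditions.

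\emph{Step 1: the equivalence.} I will apply the private-capacity formula \eqref{eq:intro-private-capacity} to $\cN^c$. Its complement is $\cN$, up to output isometries that do not change mutual informations. Hence
\[
P(\cN^c)=\sup_n\tfrac1n\max\bigl[I(U:E^n)-I(U:B^n)\bigr].
\]
Each maximum is at least zero, since a constant ensemble gives zero. So $P(\cN^c)=0$ holds exactly when every maximum vanishes, that is, when $I(U:B^n)\ge I(U:E^n)$ for all $n$ and all ensembles. This is the definition of $\cN\succeq_{\mathrm{ln}}^{\mathrm{reg}}\cN^c$.

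\emph{Step 2: collapsing the regularizations.} Assuming $P(\cN^c)=0$, I get $Q(\cN^c)\le P(\cN^c)=0$. The key step is to show $R_D(\cN^c)=0$. Here $R_D(\cN^c)$ compares $D((\cN^c)^{\otimes n}(\rho)\Vert(\cN^c)^{\otimes n}(\sigma))$ with $D(\cN^{\otimes n}(\rho)\Vert\cN^{\otimes n}(\sigma))$. I therefore need relative-entropy domination of $\cN^c$ by $\cN$ at every blocklength, with no reference, derived from less-noisy domination at every blocklength.

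I plan to use the standard equivalence between less-noisy and relative-entropy orders, applied to the channel pair $\cN^{\otimes n},(\cN^c)^{\otimes n}$. This equivalence holds without a reference: take the ensemble with weights $1-\epsilon$ on $\sigma$ and $\epsilon$ on $\rho$, let $\epsilon\to0$, and use $I(U:B)=\epsilon D(\rho_B\Vert\sigma_B)+o(\epsilon)$. This is where I expect the main difficulty. One must check that the first-order expansion of the Holevo quantity is valid for faithful $\sigma$, and that it applies at each fixed $n$, as in \cite{HircheRouzeFranca2022} and the cited bounds of \cite{HircheLeditzky2023}.

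With both $R_D(\cN^c)=0$ and $Q(\cN^c)=0$, the three bounds in \eqref{eq:complementarity-capacity-bounds} give $Q=Q^{(1)}$, $P=P^{(1)}$ and $P=Q$. Together these yield \eqref{eq:receiver-single-letter}.

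\emph{Step 3: the complete criteria.} For \eqref{eq:receiver-order-criterion}, the first implication is the smoothed-to-measured step of \cref{thm:hierarchy}, with the roles $\cM=\cN$ and $\cN\to\cN^c$. Next, complete measured ordering implies complete less-noisy ordering by the same theorem. Complete less-noisy ordering tensorizes, and taking a trivial reference gives $\cN\succeq_{\mathrm{ln}}^{\mathrm{reg}}\cN^c$ by \eqref{eq:complete-to-regularized}.

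One point needs care here: the complement of $\cN^{\otimes n}$ is $(\cN^c)^{\otimes n}$, so the tensorized comparison is indeed the one required in Step 1. Step 2 then finishes the proof.
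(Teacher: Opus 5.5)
Your proposal is correct and follows essentially the same route as the paper. The paper likewise gets $R_D(\cN^c)=0$ from the less-noisy/relative-entropy equivalence at each blocklength, uses $Q(\cN^c)\le P(\cN^c)=0$ to make the three complementarity bounds vanish, and obtains the complete criteria from the hierarchy theorem plus tensorization. You additionally spell out two steps that the paper only cites: the private-capacity argument for the equivalence with $P(\cN^c)=0$, and the first-order Holevo expansion behind the less-noisy-to-relative-entropy direction (valid because a faithful input $\sigma$ gives $\supp\cN^{\otimes n}(\rho)\subseteq\supp\cN^{\otimes n}(\sigma)$).
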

\end{resultbox}
\begin{proof}
The regularized less-noisy/relative-entropy equivalence gives $R_D(\cN^c)=0$. Also $Q(\cN^c)\le P(\cN^c)=0$. All three upper bounds in \eqref{eq:complementarity-capacity-bounds} vanish, proving \eqref{eq:receiver-single-letter}. This recovers Watanabe's exact theorem \cite[Theorem~2]{watanabe2012private} as the zero-advantage case of the complementarity bounds. For \eqref{eq:receiver-order-criterion}, use \cref{thm:hierarchy} and tensorize the complete comparison.
\end{proof}
Dominance at every blocklength is essential; a one-use comparison does not suffice. If $Q(\cN^c)=0$, only the last bound in \eqref{eq:complementarity-capacity-bounds} vanishes, giving $P(\cN)=Q(\cN)$ \cite[Theorem~1]{watanabe2012private} without single-letterization. \Cref{sec:examples} gives nondegradable channels with single-letter capacities as complements of zero-private-capacity examples.

\section{Zero private capacity}\label{sec:privacy}
By \eqref{eq:tensor}, complete measured-$\chi^2$ comparison holds at every blocklength. The next theorem turns it into a finite-code bound ruling out positive private capacity.
\begin{resultbox}
\begin{theorem}[Reliability--secrecy bound]\label{thm:privacy}
Suppose the following comparison holds for every $n\ge1$ and every equiprobable binary encoding into $A^n$:
\begin{equation}
 \MMSE(U|E^n)\le\MMSE(U|B^n).
 \label{eq:privacy-estimation}
\end{equation}
For any block code with $M\ge2$ equiprobable messages, average receiver decoding error $\eta\le1/2$, and secrecy error
$\delta=\tfrac12\norm{\rho_{UE}-\tau_U\otimes\rho_E}_1$, where $\tau_U=I_U/M$,
\begin{equation}
 (1-2\eta)^2\le2\delta.
 \label{eq:privacy-bound}\equationalias{eq:intro-reliability-secrecy}
\end{equation}
Consequently $P(\cN)=0$. The estimation comparison holds whenever $\cN^c\succeq_{\chi_{\mathrm m}^2}^{\mathrm c}\cN$.
\end{theorem}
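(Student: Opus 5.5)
The plan is to reduce the $M$-message code to $M$ binary estimation problems, one per message $m$. In problem $m$, the codeword for $m$ is tested against the uniform mixture of the remaining codewords. Then I would chain \cref{lem:distance} with the estimation comparison and average over $m$.

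\textbf{Setup.} Fix a code with input states $\rho_m$ on $A^n$ and decoding POVM $\{\Lambda_m\}$ on $B^n$. Write $\varepsilon_m=1-\Tr\Lambda_m\cN^{\otimes n}(\rho_m)$, so that $\eta=\frac1M\sum_m\varepsilon_m$. Put $\bar\rho=\frac1M\sum_k\rho_k$ and $\bar\rho_{-m}=\frac1{M-1}\sum_{k\neq m}\rho_k$. For each $m$, consider the equiprobable binary encoding $\omega_+=\rho_m$, $\omega_-=\bar\rho_{-m}$. This is a legitimate binary encoding into $A^n$, since mixtures of codewords are input states.

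\textbf{Chain of inequalities for each $m$.} \Cref{prop:mmse} converts the hypothesis \eqref{eq:privacy-estimation} into
\[
\mchi\bigl(\cN^{\otimes n}(\omega_+)\Vert\cN^{\otimes n}(\bar\omega)\bigr)\le\mchi\bigl((\cN^c)^{\otimes n}(\omega_+)\Vert(\cN^c)^{\otimes n}(\bar\omega)\bigr).
\]
\Cref{lem:distance} then gives $d_B(m)^2\le d_E(m)$, where $d_B(m),d_E(m)$ are the trace distances between the two output states at $B^n$ and at $E^n$.

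\textbf{Receiver lower bound.} Using the test $\Lambda_m$,
\[
d_B(m)\ge \Tr\Lambda_m\cN^{\otimes n}(\rho_m-\bar\rho_{-m})=1-\varepsilon_m-\tfrac1{M-1}\sum_{k\neq m}\Tr\Lambda_m\cN^{\otimes n}(\rho_k).
\]
Average over $m$. The double sum is $\sum_k\sum_{m\neq k}\Tr\Lambda_m\cN^{\otimes n}(\rho_k)\le\sum_k\varepsilon_k=M\eta$. Hence $\frac1M\sum_m d_B(m)\ge 1-\eta-\eta/(M-1)\ge1-2\eta\ge0$, using $M\ge2$ and $\eta\le1/2$.

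\textbf{Environment upper bound.} Since $\rho_m-\bar\rho_{-m}=\frac{M}{M-1}(\rho_m-\bar\rho)$, we get $d_E(m)=\frac{M}{M-1}\cdot\frac12\|\rho^E_m-\rho_E\|_1$. The secrecy error decomposes as $\delta=\frac1M\sum_m\frac12\|\rho^E_m-\rho_E\|_1$. Therefore $\frac1M\sum_m d_E(m)=\frac{M}{M-1}\delta\le2\delta$.

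\textbf{Combining.} By convexity (Jensen),
\[
(1-2\eta)^2\le\Bigl(\tfrac1M\sum_m d_B(m)\Bigr)^2\le\tfrac1M\sum_m d_B(m)^2\le\tfrac1M\sum_m d_E(m)\le2\delta,
\]
which is \eqref{eq:privacy-bound}.

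\textbf{Consequences.} For $P(\cN)=0$: any positive rate would give codes with $M\ge2$, $\eta\to0$ and $\delta\to0$. These contradict $(1-2\eta)^2\le2\delta$. If the operational definition uses a different secrecy criterion, I would check that it implies vanishing $\delta$ in this trace-distance form. The last claim follows from two facts. First, $\cN^c\succeq_{\chi_{\mathrm m}^2}^{\mathrm c}\cN$ tensorizes by \eqref{eq:tensor}, with trivial reference, to every $n$. Second, for $(\cN^c)^{\otimes n}=(\cN^{\otimes n})^c$, \cref{prop:mmse} turns the measured-$\chi^2$ comparison into \eqref{eq:privacy-estimation}.

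\textbf{Main obstacle.} The main obstacle is choosing the right binary encoding so that the constant $2$ comes out rather than $4$. Comparing $\rho_m$ with $\bar\rho$ only gives $(1-2\eta)^2\le4\delta$. Pairwise comparisons run into trouble with the diagonal terms. The "leave-one-out mixture" $\bar\rho_{-m}$ is what balances the receiver bound and the environment factor $M/(M-1)\le2$.
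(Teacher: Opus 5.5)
Your proof is correct, but it uses a different decomposition from the paper's.

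\textbf{The paper's route.} It applies the estimation comparison to the $\binom M2$ pairwise encodings $\{\rho_m,\rho_{m'}\}$ with $m\ne m'$. Write $e_m$ for the conditional decoding errors and $\zeta_m$ for Eve's output states. The decoder event gives $d_B(m,m')\ge(1-e_m-e_{m'})_+$. The triangle inequality through $\rho_E$ gives $d_E(m,m')\le r_m+r_{m'}$, where $r_m=\tfrac12\norm{\zeta_m-\rho_E}_1$. The pair-counting identity $\binom M2^{-1}\sum_{m<m'}(a_m+a_{m'})=\tfrac2M\sum_m a_m$, together with Jensen for the convex function $x\mapsto(1-x)_+^2$, then yields $(1-2\eta)^2\le2\delta$.

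\textbf{Your route.} Your leave-one-out encodings $\{\rho_m,\bar\rho_{-m}\}$ need only $M$ binary tests. They replace the lossy triangle inequality on Eve's side by the exact identity $d_E(m)=\tfrac{M}{M-1}r_m$. The receiver bound is averaged linearly, so plain Jensen for $x^2$ suffices without positive parts.

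\textbf{What it buys.} You get a sharper intermediate inequality,
$\bigl(1-\tfrac{M}{M-1}\eta\bigr)^2\le\tfrac{M}{M-1}\delta$. At $M=2$, where $\bar\rho_{-m}=\rho_{m'}$, it coincides with the paper's bound. For $M\ge3$ it implies the paper's bound and is strictly stronger. It is attained, for instance, by the completely dephasing channel, which is its own complement, with $M$ basis-state codewords: there $\eta=0$ and $\delta=1-1/M$.

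\textbf{A correction to your closing remark.} Summing over unordered distinct pairs involves no diagonal terms, and the paper's pairwise argument already attains the constant $2$. So the leave-one-out choice is what gives the $M$-dependent improvement, not what is needed to avoid a factor $4$.
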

\end{resultbox}
\begin{proof}
By \cref{prop:mmse}, \eqref{eq:privacy-estimation} compares measured $\chi^2$ for two output letters against their midpoints. A complete measured comparison implies this condition at every blocklength by tensorization. For codewords $m\ne m'$, \cref{lem:distance} gives
\begin{equation}
 d_B(m,m')^2\le\mchi(\omega_m\Vert\bar\omega_{m,m'})
 \le\mchi(\zeta_m\Vert\bar\zeta_{m,m'})\le d_E(m,m'),
\end{equation}
where $\omega_m,\zeta_m$ are Bob's and Eve's output states and the bars denote pairwise averages. Let $e_m$ be the conditional decoding error, set $\bar\zeta=\rho_E=M^{-1}\sum_m\zeta_m$, and put $r_m=\tfrac12\norm{\zeta_m-\bar\zeta}_1$. The decoder event $\widehat U=m$ has probabilities $1-e_m$ and at most $e_{m'}$ under messages $m$ and $m'$. This event and the triangle inequality around $\bar\zeta$ give
\begin{equation}
 d_B(m,m')\ge(1-e_m-e_{m'})_+,
 \qquad d_E(m,m')\le r_m+r_{m'}.
\end{equation}
Thus $(1-e_m-e_{m'})_+^2\le r_m+r_{m'}$ for each pair.
Each message index occurs in $M-1$ of the $\binom M2$ unordered distinct pairs. Hence, for any numbers $a_m$,
\[
 \binom M2^{-1}\sum_{m<m'}(a_m+a_{m'})=\frac2M\sum_m a_m.
\]
Equiprobability gives $\eta=M^{-1}\sum_m e_m$. Moreover, $\rho_{UE}-\tau_U\otimes\rho_E=M^{-1}\sum_m\ket{m}\bra{m}\otimes(\zeta_m-\bar\zeta)$ is block diagonal. Its trace norm adds across the message blocks, so $\delta=M^{-1}\sum_m r_m$. The pair-counting identity therefore gives averages $2\eta$ and $2\delta$. Averaging the pairwise inequality and applying Jensen's inequality to the convex function $x\mapsto(1-x)_+^2$ gives $(1-2\eta)^2\le2\delta$, since $\eta\le1/2$. This proves \eqref{eq:privacy-bound}. For a binary code, $\delta=d_E/2$. For any message size, the bound excludes a sequence of reliable secret codes.
\end{proof}
The bound also has a direct estimation interpretation. Bob's binary decision, viewed as an estimate in $\{+1,-1\}$, has squared risk $4\eta$. The comparison gives Eve an estimator with no larger optimal risk, and thresholding it at zero has error at most its squared risk. Eve need not reproduce Bob's state or POVM. Completeness is used to establish the comparison at every blocklength. The communication protocol itself uses no external reference.

Zero private capacity rules out reliable, secret communication at every positive asymptotic rate. The general exponential strong converse \cite[Theorems~2.5 and 2.6]{ChengHircheTomamichel2026} makes this quantitative: at each rate $R>0$, the squared fidelity $F_n^2$ to an ideal secret key is at most $e^{-nE(R)}$ for some $E(R)>0$. Thus the ideal-key error $1-F_n^2$ approaches one exponentially fast. Wilde established an exponential strong converse for degradable channels and at zero for antidegradable channels \cite{WildePrivateStrongConverse2026}. The bound \eqref{eq:privacy-bound} also gives a finite-code tradeoff for separate decoding and trace-distance secrecy errors.

The zero-private-capacity condition also has an exact regularized channel-order form. Combining \eqref{eq:intro-regularized-less-noisy} with the ordinary less-noisy, relative-entropy, and quantum-$\chi^2$ order equivalences at each tensor power \cite[Proposition~2.3]{HircheRouzeFranca2022} and \cite[Proposition~4.8]{HircheTomamichel2024} gives
\begin{equation}
 P(\cN)=0\quad\Longleftrightarrow\quad
 \cN^c\succeq_{\chi^2}^{\rm reg}\cN.
 \label{eq:regularized-zero}
\end{equation}
The binary argument in \cref{thm:privacy} also shows that $\cN^c\succeq_{\chi_{\mathrm m}^2}^{\rm reg}\cN$ implies $P(\cN)=0$.

\begin{figure}[t]
\centering\includegraphics[width=\textwidth]{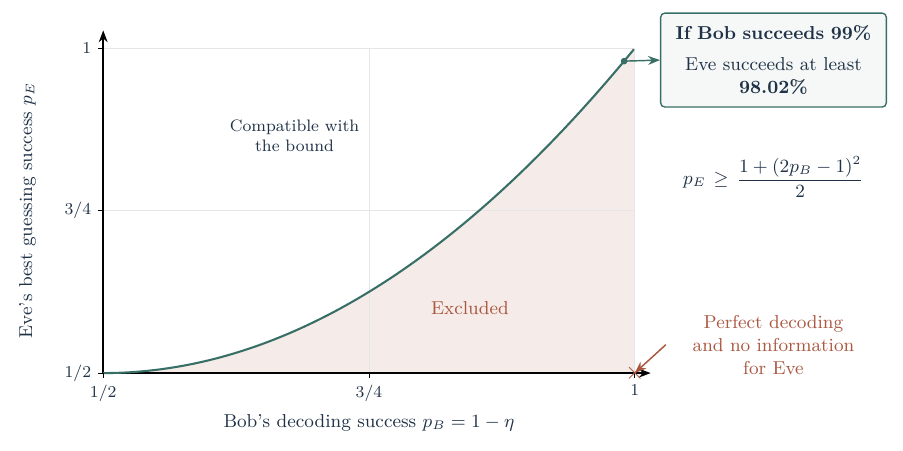}
\caption{Reliable decoding forces information leakage under \cref{thm:privacy}. The plot uses an equiprobable binary code: $p_B=1-\eta$ is Bob's decoding success and $p_E=\tfrac12+\delta$ is Eve's optimal guessing success, with $p_E\ge[1+(2p_B-1)^2]/2$. In $(\eta,\delta)$ coordinates, the same bound holds for every equiprobable message set; only the guessing-probability interpretation is binary. The shaded region, including the marked point of perfect decoding and random guessing, is excluded. For example, $p_B=99\%$ forces $p_E\ge98.02\%$.}
\label{fig:estimation-privacy}
\end{figure}

\Needspace{18\baselineskip}
\subsection{Transpose antidegradability}
Transpose simulation provides a regularized criterion without requiring a physical antidegrading map. The following corollary records the resulting comparisons.
\begin{resultbox}
\begin{corollary}[Transpose antidegradability]\label{cor:transpose}
If a CPTP map $\cD$ satisfies $\cD\cN^c=T\cN$, then
\begin{equation}
 \cN^c\succeq_{\chi_{\mathrm m}^2}^{\rm reg}\cN,\qquad
 \cN^c\succeq_{\chi^2}^{\rm reg}\cN,\qquad
 \cN^c\succeq_D^{\rm reg}\cN.
 \label{eq:transpose-regularized-orders}
\end{equation}
In particular, $P(\cN)=0$.
\end{corollary}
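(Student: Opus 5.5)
The plan is to reduce everything to one fact: the full transpose $T$ on the output preserves each of the three divergences exactly. Data processing under $\cD$ then does the rest, and tensorization passes it to all blocklengths.

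Fix $n\ge1$ and states $\rho,\sigma$ on $A^n$. The hypothesis tensorizes to $\cD^{\otimes n}\circ(\cN^c)^{\otimes n}=T^{\otimes n}\circ\cN^{\otimes n}$. Here $T^{\otimes n}$ is simply the full transpose on $B^n$ in the product basis.

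The core step is to check that the full transpose $T$ leaves $\mchi$, $\qchi$ and $D$ invariant.
\begin{itemize}
\item For $D$ and $\qchi$, note that $T$ acts as $X\mapsto \overline{X}^{\,\dagger}$-type entrywise conjugation on Hermitian operators. Concretely, $X^{T}=\overline{X}$ for Hermitian $X$. Complex conjugation preserves spectra and commutes with functional calculus, so $\log\rho^T=(\log\rho)^T$ and $\Tr A^TB^T=\Tr AB$. Hence $D(\rho^T\Vert\sigma^T)=D(\rho\Vert\sigma)$. The integral formula \eqref{eq:bkm} is invariant in the same way, since $(\sigma^T+sI)^{-1}=((\sigma+sI)^{-1})^T$.
\item For $\mchi$, use the variational form \eqref{eq:quadratic}. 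The map $G\mapsto G^T$ is a bijection of $\Herm(B^n)$, and it satisfies $\Tr X^TG^T=\Tr XG$ and $\Tr\sigma^T(G^T)^2=\Tr\sigma G^2$. The supremum is therefore unchanged. Alternatively, $\{M_y^T\}$ is again a POVM with identical outcome statistics.
\end{itemize}
No reference is retained at this stage. This matters, because a partial transpose would not be positive, and that is exactly why only the regularized, reference-free orders are claimed.

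Combining, for each divergence $\mathbb D\in\{\mchi,\qchi,D\}$, data processing under the CPTP map $\cD^{\otimes n}$ gives
\[
\mathbb D\bigl(\cN^{\otimes n}\rho\Vert\cN^{\otimes n}\sigma\bigr)=\mathbb D\bigl(T\cN^{\otimes n}\rho\Vert T\cN^{\otimes n}\sigma\bigr)
\le\mathbb D\bigl((\cN^c)^{\otimes n}\rho\Vert(\cN^c)^{\otimes n}\sigma\bigr).
\]
Data processing for $\mchi$ is immediate from the POVM definition \eqref{eq:measured-definition}. For $D$ and for the BKM $\qchi$ (an operator-monotone quantum $f$-divergence), it is standard; see \cite{HircheTomamichel2024}. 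The $D$ case, or equivalently the $\qchi$ case via \eqref{eq:regularized-zero}, gives $\cN^c\succeq_{\mathrm{ln}}^{\rm reg}\cN$, and hence $P(\cN)=0$ by \eqref{eq:intro-regularized-less-noisy}. The $\mchi$ case also yields $P(\cN)=0$ through \cref{thm:privacy}.

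The only genuinely delicate point is the transpose-invariance step, in particular being careful that the full transpose on $B^n$ factorizes as $T^{\otimes n}$ and that no reference is involved. The rest is routine.
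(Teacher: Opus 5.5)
Your proposal is correct and follows the paper's proof: tensorize the simulation identity to obtain the full transpose on $B^n$, use invariance of all three divergences under full transposition together with data processing under $\cD^{\otimes n}$, and deduce $P(\cN)=0$ from the regularized relative-entropy (equivalently quantum-$\chi^2$) comparison. The paper also gives the measured-$\chi^2$ part as an equivalent statement about the positive unital observable map $\cJ=\cD^\dagger T$ satisfying $\cJ(Y)^2\preceq\cJ(Y^2)$, but your divergence argument needs no such reformulation.
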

\end{resultbox}
\begin{proof}
The tensorized simulator produces the full transpose on $B^n$. Invariance under full transposition and data processing prove all three divergence comparisons. Equivalently, the observable map $\cJ=\cD^\dagger T$ is positive, unital, and unbiased. Applying the unital-CP inequality from the proof of \cref{thm:hierarchy} to $T(Y)$ gives $\cJ(Y)^2\preceq\cJ(Y^2)$ for Hermitian $Y$, since $T(Y)^2=T(Y^2)$. The same argument applies to its tensor powers. The quantum-$\chi^2$ relation and \eqref{eq:regularized-zero} give $P(\cN)=0$. An untouched quantum reference instead produces a partial transpose, so these arguments do not establish a complete comparison.
\end{proof}
This condition is related to conjugate degradability \cite{BradlerDutilHaydenMuhammad2010}. It implies PPT because $T\cN$ is CP, but PPT alone does not imply zero private capacity \cite{SmithYard2008}. Antidegradability and transpose antidegradability overlap, for example on replacement channels, without either containing the other: a nontrivial half-erasure channel is NPT, while \cref{thm:coherent-family} gives transpose-antidegradable channels that are not antidegradable. The bi-PPT implication to entanglement breaking \cite{MullerHermesSingh2022} concerns a different condition, namely PPT of both complementary channels.

\FloatBarrier

\subsection{Erasure-assisted communication}
Symmetric side-channel assistance is an established way to test what a zero-capacity helper can reveal about a channel \cite[Section~III]{Smith2008Symmetric}; its connection to channel orders and nondegradable single-letter capacities is discussed in \cite[Section~4]{HircheLeditzky2023}. The half-erasure helper also appears in the Smith--Yard construction for quantum-capacity superactivation \cite{SmithYard2008}. The next proposition gives a direct complete-less-noisy formulation of this idea, with a finite helper constructed from any violating ensemble.
\begin{resultbox}
\begin{proposition}[Characterization by erasure assistance]\label{prop:erasure}
For any finite-dimensional channel,
\begin{equation}
 \cN^c\succeq_{\rm ln}^{\mathrm c}\cN
 \ \Longleftrightarrow\ 
 P(\cN\otimes\mathcal E_{d,1/2})=0\quad\text{for every finite }d.
 \label{eq:erasure-test}
\end{equation}
\end{proposition}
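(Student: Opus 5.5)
The forward direction follows from closure of the complete less-noisy order under tensor products. The half-erasure channel $\mathcal E_{d,1/2}$ is antidegradable: its complement is again a half-erasure channel, so $\mathcal E_{d,1/2}^c=\mathcal E_{d,1/2}$ up to relabeling. Antidegradability gives $\mathcal E_{d,1/2}^c\succeq_{\rm ln}^{\mathrm c}\mathcal E_{d,1/2}$, since the antidegrading map is a physical postprocessing and \cref{thm:hierarchy} applies. The complement of $\cN\otimes\mathcal E_{d,1/2}$ is $\cN^c\otimes\mathcal E_{d,1/2}^c$. Complete less-noisy tensorization \cite[Lemma~3.1]{HircheRouzeFranca2022}, or \eqref{eq:intro-mixed-tensor} after passing to $\succeq_D^{\mathrm c}$, then yields $(\cN\otimes\mathcal E_{d,1/2})^c\succeq_{\rm ln}^{\mathrm c}\cN\otimes\mathcal E_{d,1/2}$. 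By \eqref{eq:complete-to-regularized} this becomes the regularized relation, which is equivalent to zero private capacity by \eqref{eq:intro-regularized-less-noisy}.

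For the converse I argue by contraposition. Suppose an ensemble $\{p_u,\rho_u^{RA}\}$ with finite reference $R$ satisfies $I(U:RB)>I(U:RE)$. The plan is to deliver $R$ to the receiver through the helper. Take $d=d_R$ and input $\rho_u$ into $\cN\otimes\mathcal E_{d,1/2}$, with $R$ fed into the erasure channel. Bob gets $B$ together with $R$ or an erasure flag, each with probability $1/2$. By the Stinespring structure of erasure, Eve gets $E$ together with the complementary event: she holds $R$ exactly when Bob sees an erasure. The flag is classical and known to both parties, so
\begin{equation*}
 I(U:B\tilde R)-I(U:E\tilde R')=\tfrac12\bigl[I(U:RB)-I(U:E)\bigr]+\tfrac12\bigl[I(U:B)-I(U:RE)\bigr].
\end{equation*}
This equals $\tfrac12[I(U:RB)-I(U:RE)]+\tfrac12[I(U:B)-I(U:E)]$, which is not obviously positive.

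The main obstacle is the second term, $I(U:B)-I(U:E)$, which may be negative. I would handle it in one of two ways.

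\textbf{Option 1: symmetrization.} Average over which half of the reference Bob obtains, or write the difference as $\tfrac12[I(U:B|R)-I(U:E|R)]+\tfrac12[I(U:B)-I(U:E)]$ and enlarge the violating ensemble to cancel the second part. One way is to adjoin a classical label $V$, copied into $R$, that switches between the given ensemble and an ensemble where $\cN$ is less informative.

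\textbf{Option 2: a purely classical reference (more promising).} By \cite[Theorem~1]{HircheGeorgeNuradhaWilde2026}, a violation of the complete order can be taken to be a violation of conditional relative entropies. I would try to reduce it to a violation whose reference is classical or is itself sent through the erasure. Concretely, write a purification of the reference and replace $R$ by $R$ combined with an extra copy of $U$-independent classical randomness. Then $I(U:B)$ and $I(U:E)$ both vanish while the conditional gap is preserved. For example, randomize with a Weyl twirl on $R$ indexed by a key sent in the clear, so that without $R$ the ensemble averages to states independent of $U$.

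With the unconditional terms zero, the one-shot private information $P^{(1)}(\cN\otimes\mathcal E_{d,1/2})$ is at least half the positive conditional gap, so the product channel has positive private capacity for this finite $d$. This contradicts the right-hand side.
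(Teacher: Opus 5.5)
Your forward direction matches the paper's: antidegradability of $\mathcal E_{d,1/2}$, tensorization of the complete less-noisy order, then passage to the regularized order. The converse is not proved, however. You correctly isolate the obstruction $\tfrac12[I(U:B)-I(U:E)]$, but neither option removes it. Option~2 fails as stated. Twirling $R$, purifying it, or adjoining randomness independent of $U$ acts only on the reference, whereas $I(U:B)$ and $I(U:E)$ depend only on $\{p_u,\rho_u^A\}$, which no operation on $R$ can change. So the claim that ``without $R$ the ensemble averages to states independent of $U$'' is false for that construction. The appeal to \cite[Theorem~1]{HircheGeorgeNuradhaWilde2026} does not produce a new ensemble either. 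Option~1 aims for exact cancellation but gives no mechanism for achieving it.

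The missing idea is to encrypt the \emph{label} with a one-time pad whose ciphertext travels with $R$. Index the violating ensemble by $x\in\mathbb Z_m$ and take a uniform message $U\in\mathbb Z_m$. Draw $X\sim p$, prepare $\rho_X^{AR}\otimes\ket{X-U}\bra{X-U}_K$, and send $RK$ through $\mathcal E_{d_Rm,1/2}$. The $AR$ marginal is $\sum_xp_x\rho_x^{AR}$ for every $u$, so $I(U:B)=I(U:E)=0$ exactly. The register $K$ is uniform and independent of $XAR$, and given $K=k$ one has $U=X-k$ with the original ensemble. The chain rule then gives $I(U:BRK)-I(U:ERK)=I(U:BR|K)-I(U:ER|K)=\Delta$, so the private information is $\Delta/2$. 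Note that $K$ is generally correlated with $U$; it is independent of $X$, and the $I(U:K)$ terms simply cancel. Your insistence on $U$-independent randomness therefore points the wrong way.

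Alternatively, Option~1 can be rescued by dilution rather than cancellation. Send a classical flag alongside $R$, and use the violating ensemble with probability $\lambda$ and the fixed input $\tau_R\otimes I_A/d_A$ otherwise. The conditional gap is then exactly $\lambda\Delta$. By contrast, $I(U:B)$ and $I(U:E)$ are $O(\lambda^2)$, because the background outputs have full support on the effective output spaces. Hence any sufficiently small $\lambda>0$ gives positive private information. You would need to supply one of these two arguments.
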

\end{resultbox}
\begin{proof}
The forward implication follows by tensorizing the complete comparison with that of the antidegradable erasure channel. Conversely, suppose an ensemble $\{p_x,\rho_x^{AR}\}_{x\in\mathbb Z_m}$ has
$\Delta=I(X:BR)-I(X:ER)>0$. For a uniform message $U\in\mathbb Z_m$, prepare
\begin{equation}
 \sigma_u^{ARK}=\sum_xp_x\rho_x^{AR}\otimes\ket{x-u}\bra{x-u}_K.
 \label{eq:random-label}
\end{equation}
The $AR$ marginal is independent of $u$, so $I(U:B)=I(U:E)=0$.
\Needspace{4\baselineskip}
Unconditionally $K$ is independent of $XAR$; given $K=k$, we have $U=X-k$ and the original distribution of $XAR$. The chain rule cancels the $I(U:K)$ terms:
\begin{equation*}
 \begin{aligned}
 I(U:BRK)-I(U:ERK)
 &=I(U:BR|K)-I(U:ER|K)\\
 &=I(X:BR)-I(X:ER)=\Delta.
 \end{aligned}
\end{equation*}
Here $d_K=m$. Sending $RK$ through $\mathcal E_{d_Rd_K,1/2}$ gives private information
\begin{equation}
 \tfrac12[I(U:BRK)-I(U:E)]
 +\tfrac12[I(U:B)-I(U:ERK)]=\Delta/2>0.
\end{equation}
The private coding theorem therefore gives positive product capacity, proving the converse.
\end{proof}
The proof identifies the reference-assisted information gap with positive private information after half-erasure assistance. The register $K$ is part of the transmitted input and does not require a preshared secret key. When $P(\cN)=0$, the construction gives two zero-private-capacity factors with positive product capacity. Transpose antidegradability supplies a zero-capacity condition without requiring a complete comparison.

The support-asymmetry mechanism discussed by Zhu and Wang \cite[Section~5]{ZhuWangActivation2026} gives a reusable sufficient criterion for activation. A conclusive receiver event produces information proportional to the signal strength $t$, while environmental support containment makes leakage quadratic near $t=0$. Zhu and Wang bound Eve's information by a quadratic expression for domination constants $c\ge3$ and $0\le t\le1/2$ \cite[Lemma~2]{ZhuWangActivation2026}. The bound below instead retains the full binary-outcome information for Bob and the exact classical-coin information that upper-bounds Eve's leakage. It applies to every finite $c\ge1$ and $0\le t\le1$ and is at least as strong on the earlier range.
For states $\rho,\sigma$, we write $D_{\max}(\rho\Vert\sigma)=\ln\inf\{c>0:\rho\preceq c\sigma\}$, with value $+\infty$ if no such $c$ exists.

\Needspace{16\baselineskip}
\begin{resultbox}
\begin{proposition}[Activation from support asymmetry]\label{prop:support-activation}
Let $\cN:A\to B$ have $P(\cN)=0$, and let $\rho_0,\rho_1$ be states on $RA$. Write
\[
 \omega_i=(\id_R\otimes\cN)(\rho_i),\qquad
 \eta_i=(\id_R\otimes\cN^c)(\rho_i),\qquad
 \eta_i^E=\Tr_R\eta_i.
\]
Suppose an effect $0\preceq F\preceq I_{RB}$ satisfies
\begin{equation}
 \Tr F\omega_0=0,\qquad q:=\Tr F\omega_1>0,\qquad
 D_{\max}(\eta_1\Vert\eta_0)<\infty.
 \label{eq:support-activation-hypotheses}
\end{equation}
Set the sharp domination constants
\[
 c_R:=e^{D_{\max}(\eta_1\Vert\eta_0)},\qquad
 c_E:=e^{D_{\max}(\eta_1^E\Vert\eta_0^E)}.
\]
We have $1\le c_E\le c_R<\infty$.
For $c\ge1$ and $0\le t\le1$, set
\begin{equation}
 J_c(t)=h_2\!\left(\frac{1+(c-1)t/4}{c}\right)
 -\frac34h_2(1/c)
 -\frac14h_2\!\left(\frac{1+(c-1)t}{c}\right).
 \label{eq:support-coin-function}
\end{equation}
In particular, $J_1(t)=0$. For $1/2\le p<1$, we have $P(\mathcal E_{d_R,p})=0$, but
\begin{equation}
 P(\cN\otimes\mathcal E_{d_R,p})\ge\max_{0\le t\le1}
 \bigl\{h_2((1-p)qt/4)-\tfrac14h_2((1-p)qt)
 -pJ_{c_R}(t)-(1-p)J_{c_E}(t)\bigr\}>0.
 \label{eq:support-activation-rate}
\end{equation}
\end{proposition}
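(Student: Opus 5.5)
The plan is to exhibit a single-letter private-information lower bound for $\cN\otimes\mathcal E_{d_R,p}$ using a two-input ensemble. Bob's information is certified by the conclusive effect $F$, and Eve's leakage is bounded by a classical coin that dominates her states. The ensemble sends $A$ through $\cN$ and $R$ through the erasure channel. Fix $t\in[0,1]$ and put $\rho_t=(1-t)\rho_0+t\rho_1$. Let $U=1$ with probability $1/4$ (input $\rho_t$) and $U=0$ with probability $3/4$ (input $\rho_0$). Then $P(\cN\otimes\mathcal E_{d_R,p})\ge P^{(1)}\ge I(U:\text{Bob})-I(U:\text{Eve})$. The statement $P(\mathcal E_{d_R,p})=0$ is just antidegradability for $p\ge1/2$.

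\emph{Preliminaries.} First, $c_R\ge1$ because $\eta_1\preceq c\eta_0$ with equal traces forces $c\ge1$. Second, $c_E\le c_R$ because the partial trace $\Tr_R$ preserves the operator inequality $\eta_1\preceq c_R\eta_0$. Third, $J_1\equiv0$ by direct substitution.

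\emph{Bob's term.} The erasure flag is orthogonal and visible to both parties. Bob may coarse-grain his output $RB$ plus flag into the binary event ``not erased and $F$ clicks''. This event has probability $0$ under $U=0$, because $\Tr F\omega_0=0$. Under $U=1$ it has probability $(1-p)\Tr F\omega_t=(1-p)qt$. Data processing then gives
\[
I(U:\text{Bob})\ge h_2\bigl((1-p)qt/4\bigr)-\tfrac14h_2\bigl((1-p)qt\bigr),
\]
since the mutual information of a binary $U$ with prior $1/4$ and one-sided channel $(0,a)$ equals $h_2(a/4)-\tfrac14h_2(a)$.

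\emph{Eve's term.} Conditioning on the flag, Eve's information is
\[
I(U:\text{Eve})=p\,I(U:RE)+(1-p)\,I(U:E).
\]
When the flag shows erasure, Eve holds $R$ together with $E$. For a domination constant $c$ with $\eta_1\preceq c\eta_0$, write $\eta_0=\tfrac1c\eta_1+(1-\tfrac1c)\tau$, where $\tau$ is a state (for $c=1$ we have $\eta_0=\eta_1$ and nothing is needed). Then $\eta_0$ is the output of a fixed preparation channel applied to a coin equal to $1$ with probability $1/c$, where coin $=1$ prepares $\eta_1$ and coin $=0$ prepares $\tau$. The state $\eta_1$ is the same preparation applied to a coin that is $1$ with certainty. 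Hence Eve's output under $\rho_t$ comes from a coin with bias $(1-t)/c+t=(1+(c-1)t)/c$. The average bias over the ensemble is $(1+(c-1)t/4)/c$. By data processing, $I(U:\text{Eve's state})$ is at most the information between $U$ and the coin, which is exactly $J_c(t)$. Applying this with $c=c_R$ on the erased branch (system $RE$) and $c=c_E$ on the unerased branch (system $E$) gives the upper bound $pJ_{c_R}(t)+(1-p)J_{c_E}(t)$. Combining the two terms and maximizing over $t$ yields \eqref{eq:support-activation-rate}.

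\emph{Strict positivity (main obstacle).} This is where the care is needed. Near $t=0$, Bob's term behaves like $h_2(a/4)-\tfrac14 h_2(a)\approx a/2$ bits with $a=(1-p)qt$, so it is linear in $t$ and positive since $(1-p)q>0$. For $c>1$ the coin bias $1/c$ lies strictly inside $(0,1)$, and the Holevo information of a perturbation around an interior point is $O(t^2)$. For $c=1$, $J_1\equiv0$. Therefore the bracket is strictly positive for small $t>0$. I would make the $O(t^2)$ bound explicit via the second-order expansion of $h_2$ at the interior point $1/c$, in the same way as the curvature identity \eqref{eq:curvature}, and verify that the linear coefficient of $J_c$ vanishes. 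The hypothesis $P(\cN)=0$ is not needed for the lower bound itself; it only serves to make this an activation statement.
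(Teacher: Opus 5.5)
Your proposal is correct and follows the paper's proof essentially step by step. It uses the same $3/4$--$1/4$ ensemble of $\rho_0$ and $\rho_t$ and the same one-sided click measurement for Bob, applies the same coin-preparation bound via $\tau=(c\eta_0-\eta_1)/(c-1)$ separately to the two erasure branches of Eve's output, and obtains strict positivity from the same Taylor comparison of a linear gain against quadratic leakage. The only addition needed is that $c_E\ge1$, which follows from the same equal-trace argument you gave for $c_R$.
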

\end{resultbox}
\begin{proof}
Send $A$ through $\cN$ and $R$ through the erasure helper. Use letters $\rho_0$ and $\rho_t=(1-t)\rho_0+t\rho_1$ with priors $\Pr(U=0)=3/4$ and $\Pr(U=1)=1/4$. If the helper arrives, Bob measures $F$; otherwise he records no click. The conditional click probabilities are $0$ and $(1-p)qt$, so $\Pr(Y=1)=(1-p)qt/4$ and
\[
 I(U:Y)=h_2((1-p)qt/4)-\tfrac34h_2(0)-\tfrac14h_2((1-p)qt).
\]

For a dominated pair $\zeta_1\preceq c\zeta_0$ with $c>1$, the state $\tau=(c\zeta_0-\zeta_1)/(c-1)$ is positive and has trace one. For $0\le z\le1$, define the mixture $\zeta_z=(1-z)\zeta_0+z\zeta_1$. The two letters are $\zeta_0$ and $\zeta_t$, and each can also be prepared as
\[
 \zeta_z=r_z\zeta_1+(1-r_z)\tau,
 \qquad r_z=\frac{1+(c-1)z}{c}.
\]
A classical coin $C$ of bias $r_z$, followed by preparation of $\zeta_1$ or $\tau$ on a system $Z$, produces $\zeta_z$ even when the states do not commute. Since $r_z$ is affine and the priors are $3/4,1/4$, the average coin bias is $r_{t/4}$. Data processing for $U\to C\to Z$ gives the Holevo information bound
\[
 I(U:Z)\le I(U:C)
 =h_2(r_{t/4})-\tfrac34h_2(r_0)-\tfrac14h_2(r_t)
 =J_c(t).
\]
This is the preparation argument of \cite[Lemma~2]{ZhuWangActivation2026}. When $c=1$, the two trace-one states coincide and the information is zero, agreeing with $J_1(t)=0$.

Eve's full output for component $i$ is $p\eta_i\oplus(1-p)\eta_i^E$. The erasure flag is independent of $U$, so her information is at most $pJ_{c_R}(t)+(1-p)J_{c_E}(t)$. The private coding theorem \cite{Devetak2005}, with Bob's measurement fixed, gives the expression in \eqref{eq:support-activation-rate} for every $t$. For finite $c$,
\[
 J_c(t)=\frac{3(c-1)t^2}{32\ln2}+O(t^3),\qquad
 h_2((1-p)qt/4)-\tfrac14h_2((1-p)qt)=\frac{(1-p)qt}{2}+O(t^2).
\]
In the three-entropy expression for $J_c(t)$, the constant and linear Taylor terms cancel, leaving a quadratic term. For Bob, the $t\log t$ terms in the two binary entropies cancel, while $\log_2 4=2$ leaves the stated linear coefficient. Thus the positive term $(1-p)qt/2$ dominates both leakage terms, which are $O(t^2)$, for sufficiently small $t>0$. The helper is antidegradable for $p\ge1/2$, so both factors have zero private capacity.
\end{proof}

The exact rate bound in \eqref{eq:support-activation-rate} yields activation for every sufficiently small signal. The following corollary gives a simpler rate bound valid at each signal strength.

\begin{resultbox}
\begin{corollary}[Quadratic activation rate]\label{cor:support-quadratic}
Under the hypotheses of \cref{prop:support-activation}, define
\[\gamma(c)=(c-1)\max\{1,4/(c+1)\}\qquad(c\ge1)\]
and
\begin{equation}
 \kappa_p=p\gamma(c_R)+(1-p)\gamma(c_E).
 \label{eq:support-activation-parameters}
\end{equation}
For $1/2\le p<1$ and every $0\le t\le1/2$,
\begin{equation}
 P(\cN\otimes\mathcal E_{d_R,p})
 \ge\frac{(1-p)qt}{2}-\frac{3\kappa_pt^2}{32\ln2}.
 \label{eq:support-private-gap}
\end{equation}
\end{corollary}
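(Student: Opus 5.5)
The plan is to fix $t\in[0,1/2]$ in the lower bound \eqref{eq:support-activation-rate} of \cref{prop:support-activation}. I would then bound Bob's term below by a linear function of $t$ and each leakage term $J_c(t)$ above by a quadratic. Maximizing over $t$ is then unnecessary, since the statement holds pointwise in $t$.

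\emph{Bob's term.} Put $x=(1-p)qt\in[0,1]$ and $f(x)=h_2(x/4)-\tfrac14h_2(x)$. Expanding the binary entropies, the $x\log_2x$ contributions cancel, and $-\tfrac{x}{4}\log_2\tfrac14=\tfrac{x}{2}$. This leaves
\[
 f(x)=\frac{x}{2}+\varphi(x/4)-\tfrac14\varphi(x),\qquad \varphi(y)=-(1-y)\log_2(1-y).
\]
Since $\varphi''(y)=-1/[(1-y)\ln2]<0$ and $\varphi(0)=0$, $\varphi$ is concave. Concavity applied to $x/4=\tfrac14x+\tfrac34\cdot0$ gives $\varphi(x/4)\ge\tfrac14\varphi(x)$. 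Hence $f(x)\ge x/2=(1-p)qt/2$ on the whole interval.

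\emph{Leakage terms.} For $c=1$ we have $J_1=0=\gamma(1)$, so assume $c>1$. The quantity $J_c(t)$ is the Jensen gap of the concave function $h_2$ for the random variable $r_Z$, where $Z$ equals $0$ with probability $3/4$ and $t$ with probability $1/4$. Since $r$ is affine, $\mathbb E\,r_Z=r_{t/4}$, and the slope is $s=(c-1)/c$. Taylor expansion around the mean with Lagrange remainder bounds the gap by
\[
 \tfrac12\sup_{\xi}|h_2''(\xi)|\,\Var(r_Z),
\]
with $|h_2''(\xi)|=1/[\xi(1-\xi)\ln2]$ and $\Var(r_Z)=\tfrac{3}{16}s^2t^2$. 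The supremum runs over $\xi\in[r_0,r_t]\subseteq[1/c,(c+1)/(2c)]$, using $t\le1/2$. This yields
\[
 J_c(t)\le\frac{3t^2}{32\ln2}\cdot\frac{s^2}{\min_{\xi}\xi(1-\xi)}.
\]
Because $\xi(1-\xi)$ is concave, its minimum over the interval is attained at an endpoint.
\begin{itemize}
 \item At $\xi=1/c$, $\xi(1-\xi)=(c-1)/c^2$, so $s^2/\xi(1-\xi)=c-1$.
 \item At $\xi=(c+1)/(2c)$, $\xi(1-\xi)=(c^2-1)/(4c^2)$, so $s^2/\xi(1-\xi)=4(c-1)/(c+1)$.
\end{itemize}
The larger of the two ratios is exactly $\gamma(c)$. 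Hence $J_c(t)\le 3\gamma(c)t^2/(32\ln2)$ for $0\le t\le1/2$.

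\emph{Conclusion.} Substituting both bounds into \eqref{eq:support-activation-rate}, and weighting $J_{c_R}$ by $p$ and $J_{c_E}$ by $1-p$, produces $\kappa_p$ and gives \eqref{eq:support-private-gap}. The main point needing care is the Jensen-gap step. One must check that the restriction $t\le1/2$ is exactly what keeps the interval $[r_0,r_t]$ inside the region where the endpoint computation yields $\gamma(c)$. This is why $\gamma$ carries the factor $\max\{1,4/(c+1)\}$.
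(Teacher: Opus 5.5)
Your proposal is correct and matches the paper's argument in substance. The paper bounds $-\tfrac{d^2}{dz^2}h_2(r_z)=(c-1)/\{\ln2\,[1+(c-1)z](1-z)\}$ by $\gamma(c)/\ln2$ using the endpoint minimum on $[0,1/2]$, and then applies Jensen to the convex function $h_2(r_z)+\gamma(c)z^2/(2\ln2)$ to get $J_c(t)\le\tfrac{\gamma(c)}{2\ln2}\cdot\tfrac{3t^2}{16}$. This is your Lagrange-remainder bound with the same endpoint computation, written in $z$ instead of $\xi=r_z$. For Bob's term, the paper instead notes that a click identifies the second letter and contributes $(1-p)qt/2$, while the no-click contribution is nonnegative; your concavity argument for $\varphi$ is an equivalent check.
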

\end{resultbox}
\begin{proof}
For $c=1$, $J_1(t)=\gamma(1)=0$. For $c>1$, set $f(z)=h_2([1+(c-1)z]/c)$. On $0\le z\le t\le1/2$,
\[
 -f''(z)=\frac{c-1}{\ln2\,[1+(c-1)z](1-z)}
 \le\frac{\gamma(c)}{\ln2}.
\]
The denominator is concave on $[0,1/2]$ and has minimum $\min\{1,(c+1)/4\}$ at an endpoint. Thus $g(z)=f(z)+\gamma(c)z^2/(2\ln2)$ is convex. Jensen's inequality for $g$ at $0,t$ with weights $3/4,1/4$ yields
\begin{equation}
 \begin{aligned}
 J_c(t)&=f(t/4)-\tfrac34f(0)-\tfrac14f(t)\\
 &\le\frac{\gamma(c)}{2\ln2}\left(\frac{t^2}{4}-\frac{t^2}{16}\right)
 =\frac{3\gamma(c)t^2}{32\ln2}.
 \end{aligned}
 \label{eq:coin}
\end{equation}
This is the curvature argument of \cite[Lemma~2]{ZhuWangActivation2026} with the constant extended to $c>1$. Also, a click identifies the second letter and contributes $(1-p)qt/2$ to Bob's information; the no-click contribution is nonnegative.
The click contribution and \eqref{eq:coin} inserted into \eqref{eq:support-activation-rate} prove \eqref{eq:support-private-gap}.
\end{proof}

The register $R$ is a transmitted input to the helper. The $D_{\max}$ condition is equivalent to $\supp\eta_1\subseteq\supp\eta_0$. The criterion therefore applies whenever $\omega_1$ has positive weight outside $\supp\omega_0$: choose $F$ as the projection onto $\ker\omega_0$. In the examples below, the domination constants lie in $\{1\}\cup[3,\infty)$, where $\gamma(c)=c-1$. The Pauli example has $c_E=1$, and its reduced branch leaks no information.

\section{Examples}\label{sec:examples}
The examples distinguish reference-complete comparison from comparisons that hold at every tensor power. \Cref{tab:example-contributions} identifies the cited starting constructions and the results established here for each channel family.

These examples also have a consequence in the opposite direction. Whenever $P(\cN)=0$ and $\cN$ is not antidegradable, its complement $\cN^c$ is nondegradable, yet \cref{prop:single-letter} gives $P(\cN^c)=P^{(1)}(\cN^c)=Q(\cN^c)=Q^{(1)}(\cN^c)$. This is the separation between degradability and regularized less-noisiness identified in \cite[Section~4]{HircheLeditzky2023}. For the Werner--Holevo family, we evaluate the common capacity exactly.

\begin{table}[tbp]
\centering
\small
\renewcommand{\arraystretch}{1.12}
\begin{tabular}{@{}p{.28\textwidth}p{.66\textwidth}@{}}
\toprule
Starting construction & Results established here\\
\midrule
Qutrit channel \cite{ZhuWang2026} & Noisy Werner--Holevo thresholds (\cref{thm:wh}), non-antidegradable mixtures (\cref{ex:mixtures}) and qubit-to-qutrit restrictions (\cref{ex:overlap}), and exact complementary capacities (\cref{cor:wh-complement}).\\
Pauli half-erasure code \cite{privateCapZero2026} & Activation for every $1/2\le p<1$ (\cref{ex:pauli-erasure}) and a finite measured-$\chi^2$ witness (Eq.~\eqref{eq:pauli-diluted-mmse}).\\
Four-level channel, already activating for every $1/2\le p<1$ \cite{ZhuWangActivation2026} & Coherent-ladder extension to every finite $s\ge3$ (\cref{thm:coherent-family}) and reference-assisted witnesses (Eq.~\eqref{eq:mmse-interval}).\\
\bottomrule
\end{tabular}
\caption{Contributions of the channel examples. The cited constructions supply the starting channels or encodings; the right column lists results proved here.}\label{tab:example-contributions}
\end{table}
\FloatBarrier

\Needspace{14\baselineskip}
\subsection{Noisy Werner--Holevo channels}\label{sec:wh}
Our first family extends the qutrit example of Zhu and Wang by varying the Werner--Holevo noise.
Consider the noisy Werner--Holevo channel \cite{RoofehKarimipour2024}
\begin{equation}
 \Lambda_x(X)=(1-x)X+\frac{x}{2}\bigl(\Tr(X)I_3-X^{\mathsf T}\bigr),\qquad 0\le x\le1.
 \label{eq:wh}
\end{equation}
The next theorem determines where this family has zero private capacity and where it is antidegradable.
\begin{examplebox}
\begin{theorem}[Private-capacity and antidegradability thresholds]\label{thm:wh}
For \eqref{eq:wh}, the following conditions are equivalent:
\begin{equation}
 \begin{aligned}
 &\text{a complete signed lift of }\Lambda_x\text{ through }\Lambda_x^c\text{ exists}\\
 &\quad\Longleftrightarrow\quad
 \Lambda_x^c\succeq_{\chi_{\mathrm m}^2}^{\mathrm c}\Lambda_x
 \quad\Longleftrightarrow\quad
 \Lambda_x^c\succeq_{\rm ln}^{\mathrm c}\Lambda_x\\
 &\quad\Longleftrightarrow\quad P(\Lambda_x)=0
 \quad\Longleftrightarrow\quad x\ge\tfrac12.
 \end{aligned}
 \label{eq:wh-complete-threshold}
\end{equation}
Antidegradability holds exactly for $x\ge4/7$. Every member is NPT.
\end{theorem}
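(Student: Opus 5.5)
The chain of equivalences closes if I prove (i) a complete signed lift of $\Lambda_x$ through $\Lambda_x^c$ exists for $x\ge1/2$, and (ii) $P(\Lambda_x)>0$ for $x<1/2$. By \cref{thm:hierarchy}, a signed lift implies complete measured-$\chi^2$ ordering, which implies complete less-noisy ordering. Tensorizing via \eqref{eq:complete-to-regularized} then gives regularized anti-less-noisiness, and hence $P(\Lambda_x)=0$ by \eqref{eq:intro-regularized-less-noisy}. So once (i) and (ii) are in hand, every listed condition is equivalent to $x\ge1/2$.

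For (i), I will write $\Lambda_x$ as a convex combination $(1-x)\,\id+x\,\Lambda_1$, where $\Lambda_1(X)=\tfrac12(\Tr(X)I-X^{\mathsf T})$ is the Werner--Holevo channel. For $x=1/2$, I take $\Lambda_{1/2}$ from Zhu and Wang \cite{ZhuWang2026}; their complete signed lift through the complement is the cited result. For $x\in[1/2,1]$, I will write $\Lambda_x=\lambda\Lambda_{1/2}+(1-\lambda)\Lambda_1$ with $\lambda=2(1-x)\in[0,1]$. The channel $\Lambda_1$ is antidegradable: it is the standard $3\to3$ antisymmetric channel, whose complement is isomorphic to itself, so it admits a physical postprocessing, which is a signed lift by \cref{thm:hierarchy}. \Cref{prop:closure} then gives a complete signed lift for the mixture. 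The care needed is in how \cref{prop:closure} handles complements: its statement requires each component to have a signed lift through its own complement, and that is exactly what the two components supply.

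For (ii), with $x<1/2$, it suffices to exhibit $P^{(1)}(\Lambda_x)>0$. My first attempt is coherent information at the maximally mixed input. The Choi state of $\Lambda_x$ is a Werner state with antisymmetric weight $x$, and the environment dimension is $3$ (Kraus operators $\sqrt{1-x}\,I$ and $\sqrt{x/2}\,(|i\rangle\langle j|-|j\rangle\langle i|)$). This gives $Q^{(1)}\ge\log 3-S(\text{Choi})=\log3-h_2(x)-x\log3$, which is positive only for small $x$, not up to $1/2$. So coherent information alone will not reach the threshold, and I expect this to be the hard step. Instead I will use a classical--quantum private encoding, such as the two states $|0\rangle,|1\rangle$, or a symmetric three-state code, and compute $I(U:B)-I(U:E)$ directly. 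Near $x=1/2$ both outputs are mixtures, so I will linearize around $x=1/2$, where the difference should vanish, and show its derivative has the right sign. By convexity in $x$, the private-information difference is concave or affine, and a sign check should then cover all of $[0,1/2)$. If computations fail, I will fall back on measured-$\chi^2$ witnesses.

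For antidegradability at $x\ge4/7$, I will use the $U\otimes\bar U$ covariance of $\Lambda_x$ and twirl any antidegrading map into a covariant one, parametrized by a few real parameters on $\Lambda_x^c$'s output. This reduces the question to a small SDP feasibility, solvable in closed form, with boundary $x=4/7$. The infeasibility certificate for $x<4/7$ is a dual witness (Appendix style). NPT: the partial transpose of the Choi matrix of $\Lambda_x$ is $(1-x)F/3+x(I-\Phi)/6$ up to normalization. Its eigenvalue on the antisymmetric subspace, $-(1-x)/3+x/6$, is negative for $x<2/3$, and for $x\ge2/3$ I expect the maximally entangled direction to give a negative eigenvalue. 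I will check both and combine them to cover all $x$.
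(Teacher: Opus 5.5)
Your step (i) is the paper's argument: $\Lambda_x=2(1-x)\Lambda_{1/2}+(2x-1)\Lambda_1$, self-complementarity of $\Lambda_1$, and \cref{prop:closure}. Your antisymmetric eigenvalue $(3x-2)/6$, together with the $\ket\Omega$ eigenvalue $(1-2x)/3$ you anticipate, is exactly the paper's NPT certificate. In the normalization $\Phi=\ket\Omega\bra\Omega/3$, the second term of the partial transpose is $x(I-3\Phi)/6$, and that factor $3$ is what makes the $\ket\Omega$ eigenvalue negative for $x>1/2$.

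The genuine gap is step (ii), which is what makes $x=1/2$ sharp. You have no argument that $P(\Lambda_x)>0$ on all of $x<1/2$; your coherent-information bound covers only $x\lesssim0.39$, and the other routes you sketch fall short.

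For the letters $\ket0,\ket1$, the conditional outputs at $B$ and $E$ share the nonzero spectrum $(1-x,x/2,x/2)$. So the private information is $S(\bar B)-S(\bar E)$ with $\bar B=\diag(\tfrac{2-x}{4},\tfrac{2-x}{4},\tfrac x2)$ and $\bar E=\diag(1-x,\tfrac x4,\tfrac x4,\tfrac x2)$. This is already negative at $x=0.4$ and is about $-0.19$ bits at $x=1/2$, so it does not vanish at the threshold. A first-order expansion in $x$ at $x=1/2$ would need a fixed code whose private information is zero there, and you have not identified one. Even with such a code, this would only give positivity just below $1/2$, not on all of $[0,1/2)$.

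The symmetric three-state code just reproduces the coherent information $(1-x)\log_2 3-h_2(x)$. The convexity claim is also unfounded. For a fixed ensemble $I(U:B)$ is convex in $x$, but $\Lambda_x^c$ is not affine in $x$ (it carries the amplitude $\sqrt{x(1-x)/2}$), so the difference has no definite curvature.

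The missing idea is to let the signal shrink and compare curvatures. By \eqref{eq:curvature}, equiprobable letters $\sigma\pm hX$ give
\[
 I(U:B)-I(U:E)=\frac{h^2}{2\ln2}\Bigl[\qchi_{\Lambda_x(\sigma)}(\Lambda_x(X))-\qchi_{\Lambda_x^c(\sigma)}(\Lambda_x^c(X))\Bigr]+O(h^4),
\]
so a single reference-free direction in which Bob's form is larger suffices. The paper takes the nearly pure $\sigma_\epsilon=\diag(1-2\epsilon,\epsilon,\epsilon)$ and $X=\diag(0,1,-1)$. Both output pairs are diagonal, and with $q=x/2$, $b=1-3x/2$ the two forms are $2b^2/(q+b\epsilon)$ and $2q/(1-\epsilon)$. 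Their $\epsilon\to0$ limits $2b^2/q$ and $2q$ compare exactly as $b>q$, i.e.\ $x<1/2$.

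Your fallback to measured-$\chi^2$ witnesses points this way but is not enough as stated. A witness using a reference only refutes the complete order; the Pauli and ladder channels fail it yet have $P=0$. Even without a reference, an SLD advantage need not give a BKM advantage, and BKM is what controls mutual information; the commuting choice above makes the two coincide.

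Two smaller points. First, for $0<x<1$, $\Lambda_x$ is not $U\otimes\bar U$-covariant, only orthogonally covariant, because the identity part transforms by $U$ and the transpose part by $\bar U$. So your twirled antidegradability problem lives in the Brauer family and still has to be solved. The paper does this through the symmetric-extension criterion \eqref{eq:brauer}, which for $(p_0,p_1,p_2)=(1-x,x,0)$ reads $4(1-x)\le3x$. Second, $\Lambda_x$ has four Kraus operators, so $d_E=4$, not $3$, though your entropy count is unaffected.
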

\end{examplebox}
\begin{proof}
At $x=1/2$, Zhu and Wang prove a complete variance-dominating signed-lift certificate \cite[Lemma~4.1 and Propositions~3.4 and 4.2]{ZhuWang2026}. The Werner--Holevo endpoint $\Lambda_1$ is self-complementary, as is clear from \eqref{eq:wh-complement}. Thus the closure property in \cref{prop:closure} extends the certificate using
$\Lambda_x=2(1-x)\Lambda_{1/2}+(2x-1)\Lambda_1$ for $x\ge1/2$. This proves the forward claims. For $0<x<1/2$, let $q=x/2$, $b=1-3x/2>q$, $\sigma_\epsilon=\diag(1-2\epsilon,\epsilon,\epsilon)$, and $X=\diag(0,1,-1)$. A direct diagonal calculation gives
\begin{equation}
 q_{\Lambda_x(\sigma_\epsilon)}(\Lambda_x(X))=\frac{2b^2}{q+b\epsilon},\qquad
 q_{\Lambda_x^c(\sigma_\epsilon)}(\Lambda_x^c(X))=\frac{2q}{1-\epsilon}.
 \label{eq:wh-sharp}
\end{equation}
The first is larger for $0<\epsilon<\min\{1/2,(b-q)/b\}$. Both output pairs commute, so measured and quantum $\chi^2$ forms agree. The equiprobable letters $\sigma_\epsilon\pm hX$ are states for $|h|<\epsilon$ and have positive private information for sufficiently small nonzero $h$, by \eqref{eq:curvature}. At $x=0$ the channel is the identity. The symmetric-extension criterion of \cref{app:examples} characterizes antidegradability for this family and reduces to $4(1-x)\le3x$. Thus $\Lambda_x$ is antidegradable exactly for $x\ge4/7$. The partial transpose of the normalized Choi state has eigenvalues $(1-2x)/3$ and $(3x-2)/6$ on two invariant sectors. One is negative for every $x\in[0,1]$.
\end{proof}
Thus $1/2\le x<4/7$ is an explicit interval in which complete signed lifting holds but antidegradability fails.

\Needspace{10\baselineskip}
The complementary channel has a four-dimensional output. The form below uses the environmental basis of \cite[Section~3.2.2, Eq.~(32)]{RoofehKarimipour2024} up to a unitary change of basis. Put
\[
 a(X)=(X_{12}-X_{21},\,X_{20}-X_{02},\,X_{01}-X_{10})^{\mathsf T}.
\]
Then its action on a qutrit matrix $X$ is
\begin{equation}
 \Lambda_x^c(X)=\begin{pmatrix}
 (1-x)\Tr X&\sqrt{x(1-x)/2}\,a(X)^{\mathsf T}\\
 -\sqrt{x(1-x)/2}\,a(X)&(x/2)(\Tr(X)I_3-X^{\mathsf T})
 \end{pmatrix}.
 \label{eq:wh-complement}
\end{equation}
Equivalently, the formula follows from Kraus operators $\sqrt{1-x}I_3$ and $\sqrt{x/2}A_\mu$, where $(A_\mu)_{jk}=\epsilon_{\mu jk}$.

The next corollary evaluates the complementary channel's capacity in the zero-private-capacity range of $\Lambda_x$.
\begin{examplebox}
\begin{corollary}[Capacity of the complementary Werner--Holevo channel]\label{cor:wh-complement}
For $1/2\le x\le1$,
\begin{equation}
 P(\Lambda_x^c)=Q(\Lambda_x^c)=h_2(x)-(1-x)\log_2 3
 \quad\text{bits per use}.
 \label{eq:wh-complement-capacity}
\end{equation}
For $1/2\le x<4/7$, these are positive single-letter capacities of nondegradable channels. At $x=1/2$, the capacity is $1-\tfrac12\log_2 3\simeq0.2075$ bits per use.
\end{corollary}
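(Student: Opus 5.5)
The plan is to combine \cref{thm:wh} with \cref{prop:single-letter}, applied with $\cN=\Lambda_x^c$. For $1/2\le x\le1$, \cref{thm:wh} gives the complete comparison $\Lambda_x^c\succeq_{\rm ln}^{\mathrm c}\Lambda_x$. Since $(\Lambda_x^c)^c\cong\Lambda_x$ up to output isometries, this is receiver dominance for the channel $\Lambda_x^c$. Tensorizing via \eqref{eq:complete-to-regularized} gives $\Lambda_x^c\succeq_{\rm ln}^{\rm reg}(\Lambda_x^c)^c$. Then \cref{prop:single-letter} yields
\[
P(\Lambda_x^c)=P^{(1)}(\Lambda_x^c)=Q(\Lambda_x^c)=Q^{(1)}(\Lambda_x^c).
\]
The whole problem therefore reduces to computing the one-use coherent information
\[
Q^{(1)}(\Lambda_x^c)=\max_\rho\bigl[S(\Lambda_x^c(\rho))-S(\Lambda_x(\rho))\bigr].
\]

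Next I will show that $\rho=I_3/3$ is optimal. Two facts should give this.

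\begin{itemize}
\item \emph{Covariance.} The Werner--Holevo channel satisfies $\Lambda_x(U\rho U^\dagger)=\bar U\Lambda_x(\rho)U^{\mathsf T}$. The complement inherits a covariance with some output unitary representation; this can be read off \eqref{eq:wh-complement}, where $a(X)$ transforms as an antisymmetric tensor, i.e.\ a vector under $\bar U$ up to a determinant phase. So the objective is unitarily invariant.
\item \emph{Concavity.} Coherent information of the channel $\Lambda_x^c$ is concave in the input whenever its complement $\Lambda_x$ is degraded by it. Receiver dominance alone may not give this directly. The alternative I would use is the regularized less-noisy relation at $n=1$: $S(\Lambda_x^c(\rho))-S(\Lambda_x(\rho))$ is concave in $\rho$ exactly when $I(U:E)\ge I(U:B)$ for ensembles, which is the one-use less-noisy comparison. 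This is the standard argument that less-noisy channels have concave coherent information, and it should hold here.
\end{itemize}

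Averaging over the unitary group (twirling) then shows the maximum is attained at $I/3$. This step is the main obstacle. If the concavity argument turns out to be delicate, I would instead verify it with the explicit complete comparison from \cref{thm:wh}, choosing a classical $U$ and trivial reference.

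Finally I evaluate the objective at $\rho=I/3$.
\begin{itemize}
\item $\Lambda_x(I/3)=I/3$, so $S(\Lambda_x(I/3))=\log_2 3$.
\item In \eqref{eq:wh-complement}, $a(I)=0$ and $\Tr(I)I_3-I=2I_3$. Hence
\[
\Lambda_x^c(I/3)=\diag\bigl(1-x,\;x/3,\;x/3,\;x/3\bigr),
\]
with entropy $h_2(x)+x\log_2 3$.
\end{itemize}
The difference is $h_2(x)+x\log_2 3-\log_2 3=h_2(x)-(1-x)\log_2 3$.

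For $1/2\le x<4/7$:
\begin{itemize}
\item \emph{Positivity.} Check directly: at $x=1/2$ the value is $1-\tfrac12\log_2 3\approx0.2075>0$. The value decreases in $x$ on this range but stays positive up to $4/7$; evaluating there gives roughly $0.985-0.681>0$.
\item \emph{Nondegradability.} $\Lambda_x^c$ degradable would mean $\Lambda_x$ is antidegradable, which \cref{thm:wh} excludes for $x<4/7$.
\end{itemize}
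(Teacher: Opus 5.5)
Your route is the same as the paper's. \cref{thm:wh} and \cref{prop:single-letter} reduce everything to $Q^{(1)}(\Lambda_x^c)$. Concavity of $\rho\mapsto S(\Lambda_x^c(\rho))-S(\Lambda_x(\rho))$ is exactly the one-use comparison $I(U:E)\ge I(U:B)$, i.e.\ $P^{(1)}(\Lambda_x)=0$; contrary to your hedge, this is just the $n=1$ case of the receiver dominance you already have. Symmetry plus concavity then puts the optimum at $I_3/3$, and nondegradability follows from the $4/7$ antidegradability threshold.

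\textbf{Gap in the symmetry step.} The claimed covariance $\Lambda_x(U\rho U^\dagger)=\bar U\Lambda_x(\rho)U^{\mathsf T}$ holds only for the pure transpose part, $x=1$. The identity branch $(1-x)X$ transforms as $U(\cdot)U^\dagger$, so for $x<1$ the formula forces $U^{\mathsf T}U\propto I$, i.e.\ $\bar U\propto U$. Your reading of \eqref{eq:wh-complement} fails for the same reason: $a(UXU^\dagger)$ is not a transform of $a(X)$ unless $U$ is real. The objective is genuinely not unitarily invariant. At $x=1/2$, the input $\rho=\tfrac12(\ket0\bra0+\ket1\bra1)$ has $a(\rho)=0$ and gives $S(\Lambda_x^c(\rho))-S(\Lambda_x(\rho))\approx0.189$. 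Its unitary image $\tfrac12(\ket{+_i}\bra{+_i}+\ket2\bra2)$, with $\ket{+_i}=(\ket0+i\ket1)/\sqrt2$, has $a\neq0$ and gives $\approx0.161$. Twirling over the full unitary group is therefore not justified.

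\textbf{The repair.} Use only $\rho\mapsto O\rho O^{\mathsf T}$ with $O\in O(3)$, as the paper does. Both channels are covariant under this group:
\begin{itemize}
\item $\Lambda_x$ with output action $O(\cdot)O^{\mathsf T}$;
\item $\Lambda_x^c$ with output action $V(\cdot)V^{\mathsf T}$, where $V=1\oplus\det(O)\,O$, since $a(OXO^{\mathsf T})=\det(O)\,O\,a(X)$.
\end{itemize}
You must then check that this smaller twirl still sends every state to $I_3/3$. Write $\rho=S+iA$ with $S$ real symmetric and $A$ real antisymmetric. The Haar average of $OSO^{\mathsf T}$ is $(\Tr S)I_3/3$. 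The average of $OAO^{\mathsf T}$ vanishes, because antisymmetric $3\times3$ matrices carry a nontrivial irreducible (pseudo)vector representation.

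\textbf{A smaller slip in the positivity check.} For $f(x)=h_2(x)-(1-x)\log_2 3$ you have $f'(x)=\log_2[3(1-x)/x]>0$ for $x<3/4$. So $f$ increases on $[1/2,4/7)$; your own value of about $0.30$ at $4/7$ already exceeds $0.2075$. Positivity follows from $f(1/2)>0$ together with this monotonicity, not from a decrease toward $4/7$.
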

\end{examplebox}
\begin{proof}
By \cref{thm:wh,prop:single-letter}, $P(\Lambda_x^c)=Q^{(1)}(\Lambda_x^c)$. The coherent information of $\Lambda_x^c$ is concave because its complement has zero private information \cite[Lemma~2.6]{HircheLeditzky2023}. Real orthogonal covariance makes it invariant under $\rho\mapsto O\rho O^{\mathsf T}$. Averaging over this group gives $I_3/3$, so concavity makes that input optimal. Its receiver and environmental spectra for $\Lambda_x$ are $(1/3,1/3,1/3)$ and $(1-x,x/3,x/3,x/3)$, respectively. Reversing their roles gives \eqref{eq:wh-complement-capacity}. Nondegradability of $\Lambda_x^c$ is equivalent to non-antidegradability of $\Lambda_x$, whose threshold is $4/7$.
\end{proof}
The capacity curve is shown in panel~(a) of \cref{fig:example-regions}.

\Needspace{10\baselineskip}
Convex mixing also yields full-rank, non-antidegradable channels near $\Lambda_{1/2}$ while preserving the complete comparison.
\begin{examplebox}
\begin{example}[Full-rank channels without antidegrading maps]\label{ex:mixtures}
Let $\tau_3=I_3/3$. For $0<\epsilon<1/81$, define
\[
 \mathcal M_\epsilon=(1-\epsilon)\Lambda_{1/2}+\epsilon\mathcal R_{\tau_3}.
\]
Every $\mathcal M_\epsilon$ has a full-rank Choi matrix and a complete signed lift through its complement, but is not antidegradable. In particular, $P(\mathcal M_\epsilon)=Q(\mathcal M_\epsilon)=0$.
\end{example}
\end{examplebox}
The replacer component makes the Choi matrix full rank. The symmetric-extension calculation in \cref{app:examples} shows that antidegradability begins only at $\epsilon=1/81$.

\Needspace{17\baselineskip}
\subsection{Qubit-to-qutrit channels}\label{sec:qubitqutrit}
A qubit-input restriction of the qutrit channel $\Lambda_{1/2}$ of Zhu and Wang \cite{ZhuWang2026} leads to the following two-parameter family. Define
\begin{equation}
 \begin{gathered}
 V_0=\ket0\bra0+\ket1\bra1,\qquad V_1=\ket1\bra0-\ket2\bra1,\\
 \mathcal L_{a,b}(X)=aV_0XV_0^\dagger+bV_1XV_1^\dagger+(1-a-b)\Tr(X)\ket0\bra0.
 \end{gathered}
 \label{eq:overlap}
\end{equation}
The channel encodes into the lower pair of levels with probability $a$, into the upper pair with a reversed relative phase with probability $b$, and otherwise resets to $\ket0$. The receiver has no branch flag. The member $\mathcal L_{1/2,1/4}$ is unitarily equivalent to an input restriction of $\Lambda_{1/2}$; mixing it with a half-erasure channel and a replacer generates the certified region below. The other two recent constructions underlie the Pauli and coherent-ladder examples in \cref{sec:pauli,sec:coherent} \cite{privateCapZero2026,ZhuWangActivation2026}.
The next example separates the complete-comparison conclusion from the additional non-antidegradability test.
\begin{examplebox}
\begin{example}[A qubit-to-qutrit separation]\label{ex:overlap}
If
\begin{equation}
 a,b\ge0,\quad a\le2b,\quad a+2b\le1,
 \label{eq:overlap-region}
\end{equation}
then $\mathcal L_{a,b}$ has a complete signed lift through its complement and $P(\mathcal L_{a,b})=Q(\mathcal L_{a,b})=0$. It is NPT whenever $b>0$. If, in addition, $3\sqrt2\,a+4b>3$, then it is non-antidegradable. In particular, $(a,b)=(1/2,1/4)$ satisfies all these conditions.
\end{example}
\end{examplebox}
\begin{proof}
To see the complete comparison, let $u=(\ket0-i\ket1)/\sqrt2$, $v=\ket2$, $U=(u,v)$ and $B=(u,v,\bar u)$. Then $B^\dagger\Lambda_x(UXU^\dagger)B=\mathcal L_{1-x,x/2}(X)$, and
\begin{equation}
 \mathcal L_{a,b}=2a\mathcal L_{1/2,1/4}+(2b-a)\mathcal L_{0,1/2}+(1-a-2b)\mathcal R_{\tau_0}.
\end{equation}
Here $\tau_0=\ket0\bra0$, and the last two channels are, respectively, a half-erasure channel and a replacement channel. The first term inherits a complete signed lift from $\Lambda_{1/2}$; the other two are antidegradable and therefore have physical lifts. The nonnegative coefficients in \eqref{eq:overlap-region} and \cref{prop:closure} give a complete signed lift and zero private and quantum capacity throughout that region. The partial-transpose calculation in \cref{app:examples} proves NPT for $b>0$. The same appendix shows that antidegradability requires $3\sqrt2a+4b\le3$.
\end{proof}

The qubit-to-qutrit parameter region is shown in panel~(b) of \cref{fig:example-regions}.

\begin{figure}[tb]
\centering\includegraphics[width=\textwidth]{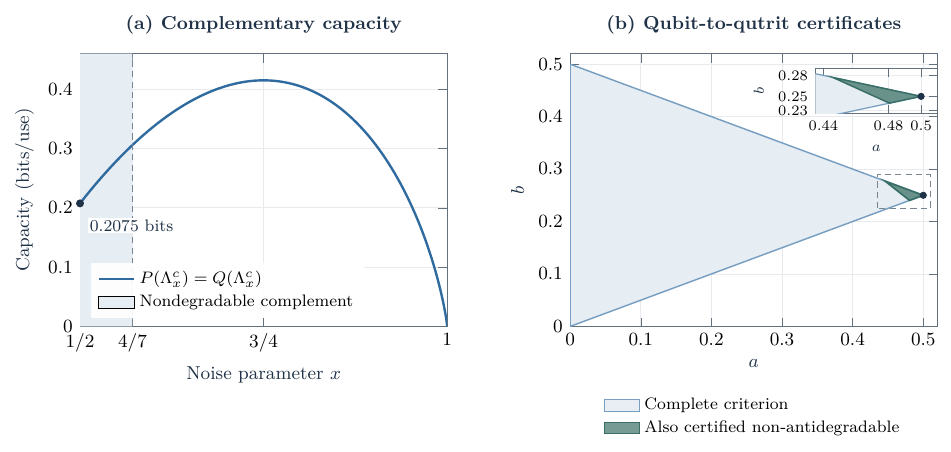}
\caption{Explicit capacity and parameter conclusions. (a) The exact complementary capacity in \eqref{eq:wh-complement-capacity}; the shaded interval $1/2\le x<4/7$ consists of nondegradable complements. The original channel has zero private capacity throughout the plotted interval. (b) Throughout the region \eqref{eq:overlap-region}, $\mathcal L_{a,b}$ has a complete signed lift and zero private and quantum capacity; it is NPT except at the origin. The darker corner also satisfies $3\sqrt2a+4b>3$ and is certified non-antidegradable. The inset enlarges that corner.}
\label{fig:example-regions}
\end{figure}
\FloatBarrier

\Needspace{17\baselineskip}
\subsection{Pauli channels}\label{sec:pauli}
A correlated Pauli channel and its binary encoding were introduced by Uthirakalyani G, Halder, and Elkouss \cite{privateCapZero2026}. They proved activation with an antidegradable Pauli helper and with a qubit erasure helper at $p=1/2$. We retain their channel and letters, derive reference-assisted measured-$\chi^2$ comparisons, and certify a positive private rate for every $1/2\le p<1$. The all-$p$ erasure result of Zhu and Wang \cite{ZhuWangActivation2026} concerns a different four-level channel, treated in \cref{sec:coherent}.

For two-qubit Pauli operators, write $AB=A\otimes B$. Their correlated Pauli channel is
\begin{equation}
 \mathcal F(X)=\frac16\sum_{P\in\mathcal P}PXP^\dagger,
 \qquad\mathcal P=\{XX,YI,IZ,XZ,IY,YZ\}.
 \label{eq:pauli-channel}
\end{equation}
The next example bounds the private rate for this fixed Pauli channel across the full erasure range.
\begin{examplebox}
\begin{example}[Pauli erasure assistance]\label{ex:pauli-erasure}
For $1/2\le p<1$, $P(\mathcal F)=P(\mathcal E_{2,p})=0$, but
\begin{equation}
 P(\mathcal F\otimes\mathcal E_{2,p})
 \ge\frac{\ln2\,(1-p)^2}{216p(1+\sqrt3)}>0
 \quad\text{bits per product use}.
 \label{eq:pauli-erasure-rate}
\end{equation}
Thus the fixed Pauli channel activates every qubit erasure channel with erasure probability below one and zero private capacity.
\end{example}
\end{examplebox}

\Needspace{12\baselineskip}
\subsubsection{Channel and encoding}
This is a uniform mixture of correlated local errors. Let $\{\ket P:P\in\mathcal P\}$ be orthonormal environmental labels for the six errors. The matrix elements of a complementary channel and a transpose-antidegrading map are
\begin{equation}
 \bra P\mathcal F^c(X)\ket Q=\frac16\Tr(QPX),\qquad
 \cD(\ket P\bra Q)=\frac14(QP)^{\mathsf T},\qquad
 \cD\mathcal F^c=T\mathcal F.
 \label{eq:pauli-simulator}
\end{equation}
Complete positivity and trace preservation of $\cD$ follow from the isometry $W\ket P=(I_4\otimes P)\ket{\Phi_4}$, followed by tracing its second subsystem, where $\ket{\Phi_4}=\frac12\sum_{j=0}^3\ket{jj}$. The six Bell vectors are orthonormal. Thus \cref{cor:transpose} gives all tensor-power comparisons and $P(\mathcal F)=0$. The full complement retains coherences between the error labels.

Let $K,L,C$ be qubits. The operators $\Phi^\pm$ and $\Psi^\pm$ project onto $(\ket{00}\pm\ket{11})/\sqrt2$ and $(\ket{01}\pm\ket{10})/\sqrt2$, respectively, on $LC$. The binary letters are
\begin{align}
 \sigma_0^{KLC}&=\tfrac12\ket0\bra0_K\otimes\Phi^+_{LC}
 +\tfrac14\ket1\bra1_K\otimes(\Psi^-_{LC}+\Phi^-_{LC}),\notag\\
 \sigma_1^{KLC}&=\tfrac14\ket0\bra0_K\otimes(\Phi^+_{LC}+\Psi^+_{LC})
 +\tfrac12\ket1\bra1_K\otimes\Phi^-_{LC}.
 \label{eq:pauli-letters}
\end{align}
Both have $KL$ marginal $I_4/4$ and $C$ marginal $I_2/2$. Write
\begin{equation}
 \omega_u=(\mathcal F\otimes\id_C)(\sigma_u),\qquad
 \eta_u=(\mathcal F^c\otimes\id_C)(\sigma_u).
 \label{eq:pauli-outputs}
\end{equation}
Here $B$ and $E$ denote the receiver and environmental outputs of $\mathcal F$. Their marginals are independent of $u$, so the message is carried entirely by correlations with $C$. For uniform priors, the output calculation in \cref{app:pauli} gives
\begin{align}
 I_B:=I(U:BC)&=\tfrac14\log_2(4/3),\notag\\
 I_E:=I(U:EC)&=\tfrac12\left[h_2\!\left(\frac{3-\sqrt3}{6}\right)
 -h_2\!\left(\frac{3-\sqrt5}{6}\right)\right].
 \label{eq:pauli-information}
\end{align}
Sending $C$ through a half-erasure channel gives the rate of \cite{privateCapZero2026},
\begin{equation}
 P(\mathcal F\otimes\mathcal E_{2,1/2})\ge R_0:=\tfrac12(I_B-I_E)
 \simeq0.003390>0.
 \label{eq:pauli-half-rate}
\end{equation}
The antidegradable two-qubit Pauli helper $\mathcal G$ of \cite{privateCapZero2026} gives the same rate with a fixed input flag. The positive product rate already proves that $\mathcal F$ is not antidegradable and its complement is not completely less noisy: either condition would force zero capacity with an antidegradable helper. By the hierarchy, complete measured domination and a complete signed lift also fail.

\subsubsection{Reference-assisted estimation}
For the original binary ensemble, Bob has more mutual information with the label $U$ than Eve by \eqref{eq:pauli-information}, although Eve estimates $U$ more accurately in mean square. With $\bar\omega=(\omega_0+\omega_1)/2$ and $\bar\eta=(\eta_0+\eta_1)/2$, exact SLD evaluation yields
\begin{equation}
 \mchi(\omega_1\Vert\bar\omega)=\frac19,
 \qquad\mchi(\eta_1\Vert\bar\eta)=\frac4{33}.
 \label{eq:pauli-original-mmse}
\end{equation}
In this reference-assisted experiment with an equiprobable label in $\{+1,-1\}$, Bob's MMSE is $8/9$, whereas Eve's is $29/33<8/9$. Thus the original pair does not witness failure of complete measured-$\chi^2$ ordering: for that, we need an ensemble on which Bob has the smaller MMSE.

Retain equal priors and replace the second letter by $\sigma_{1/2}=(\sigma_0+\sigma_1)/2$. More generally, write $\sigma_t=(1-t)\sigma_0+t\sigma_1$ and use subscripts $t$ for its outputs. With $\bar\omega_t=(\omega_0+\omega_t)/2$ and $\bar\eta_t=(\eta_0+\eta_t)/2$, we obtain
\begin{equation}
 \mchi(\omega_{1/2}\Vert\bar\omega_{1/2})=\frac{11}{315}
 >\frac{140}{4257}=\mchi(\eta_{1/2}\Vert\bar\eta_{1/2}).
 \label{eq:pauli-diluted-mmse}
\end{equation}
This gives Bob the smaller MMSE with only a qubit reference. The formulas in \eqref{eq:pauli-line-chi} cover every dilution parameter; \cref{fig:pauli-dilution} shows where the comparison changes sign. At the support boundary the distinction is stronger: $\mchi(\omega_1\Vert\omega_0)=+\infty$, while $\mchi(\eta_1\Vert\eta_0)=32/45$. Equation~\eqref{eq:pauli-diluted-mmse} avoids divergent values.

\begin{figure}[tb]
\centering\includegraphics[width=.86\textwidth]{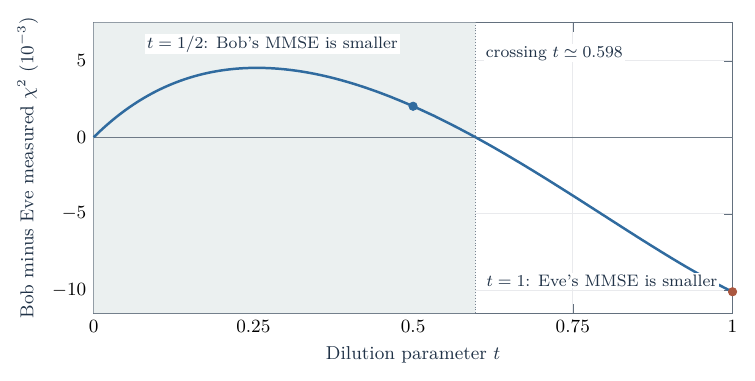}
\caption{Dilution reverses the reference-assisted estimation comparison for the Pauli channel. The curve is $\mchi(\omega_t\Vert\bar\omega_t)-\mchi(\eta_t\Vert\bar\eta_t)$ from \eqref{eq:pauli-line-chi}; positive values mean Bob has smaller MMSE for this binary ensemble. The shaded region shows that advantage. The witness at $t=1/2$ is positive, whereas the original pair at $t=1$ has a negative difference. The dotted sign change is at $t\simeq0.598$.}
\label{fig:pauli-dilution}
\end{figure}
\FloatBarrier

We now turn from the estimation witness to the assisted communication rate. For the unchanged equal-prior code, erasure probability $p$ gives private information $(1-p)I_B-pI_E$, which is positive only for $p<I_B/(I_B+I_E)\simeq0.5169$. Dilution extends the positive rate throughout $1/2\le p<1$.
\begin{proof}[Proof of \cref{ex:pauli-erasure}]
The projector onto $\ket{1,\psi^-}$ has probability zero under $\omega_0$ and $q=1/12$ under $\omega_1$. The environmental states obey
\begin{equation}
 \eta_1\preceq(2+\sqrt3)\eta_0.
 \label{eq:pauli-support}
\end{equation}
The generalized eigenvalues on their support are $1$ and $2\pm\sqrt3$, as shown in \cref{app:pauli}. Their reduced environmental states coincide. Thus \cref{prop:support-activation} applies with $c_R=2+\sqrt3$, $c_E=1$, $q=1/12$, and $\kappa_p=p(1+\sqrt3)$. The signal choice
$t_*=2\ln2(1-p)/[9p(1+\sqrt3)]<1/2$
inserted in \cref{cor:support-quadratic} gives \eqref{eq:pauli-erasure-rate}. The exact bound \eqref{eq:support-activation-rate} is stronger: at $p=1/2$, taking $t=3/50$ gives a rate greater than $6.117\times10^{-4}$ bits per product use.
\end{proof}
Dilution gives a positive rate throughout $1/2\le p<1$ for the fixed channel. At half erasure, the original encoding gives the stronger rate \eqref{eq:pauli-half-rate}.
\FloatBarrier

\Needspace{10\baselineskip}
\subsection{Coherent ladder channels}\label{sec:coherent}
We now construct a continuous family of four-dimensional channels containing the channel of Zhu and Wang \cite{ZhuWangActivation2026}.

\subsubsection{Error-detecting subspaces}
We first choose a reference-assisted code that detects the nonidentity errors used in the channel construction.
Let $A=\mathbb C^4$ have the ordered basis $\ket0,\ket1,\ket2,\ket3$, and let $R=\mathbb C^2$. For $s>0$, choose
\begin{equation}
 u_s=\frac{\sqrt s\ket{00}+\ket{11}}{\sqrt{s+1}},\qquad
 v_s=\frac{\ket{02}+\sqrt s\ket{13}}{\sqrt{s+1}},\qquad
 w=\frac{\ket{01}+\ket{12}}{\sqrt2},
 \label{eq:code}
\end{equation}
and let $P_s$ project onto their three-dimensional span. We determine the local error operators satisfying
\begin{equation}
 P_s(I_R\otimes L)P_s=0.
 \label{eq:detect}
\end{equation}
Condition~\eqref{eq:detect} maps the code subspace into its orthogonal complement and hence permits error detection \cite{KnillLaflamme1997}.

The operators $L\in\mathcal L(A)$ satisfying \eqref{eq:detect} are exactly the linear combinations of
\begin{align}
 D_s&=\diag(1,-s,s,-1),&
 A_s&=\ket0\bra1-\sqrt s\ket1\bra2+\ket2\bra3,\\
 B&=\ket0\bra2-\ket1\bra3,& C&=\ket0\bra3,
 \label{eq:errors}
\end{align}
together with $A_s^\dagger,B^\dagger,C^\dagger$. Here $L_{jk}=\bra jL\ket k$ denotes a matrix entry of an arbitrary operator $L$ satisfying \eqref{eq:detect}. The diagonal constraints give $sL_{00}+L_{11}=L_{11}+L_{22}=L_{22}+sL_{33}=0$. The first upper diagonal obeys $\sqrt s L_{01}+L_{12}=L_{12}+\sqrt s L_{23}=0$, the second obeys $L_{02}+L_{13}=0$, and the third is free; the lower diagonals give adjoints. Together, the nine independent constraints on sixteen entries determine the seven-dimensional error space.

In particular,
\begin{equation}
 \bra{u_s}(I\otimes A_s)\ket w
 =\frac{\sqrt s-\sqrt s}{\sqrt{2(s+1)}}=0.
 \label{eq:cancel}
\end{equation}
The three transitions in $A_s$ share a single Kraus label, so the two amplitudes in \eqref{eq:cancel} interfere. Resolving them into distinct environmental labels would instead add their probabilities and remove the cancellation. The parameter $s$ is the central-to-outer intensity ratio for one-step transitions; the Schmidt amplitudes of the code vectors are varied accordingly.

\subsubsection{Channel family and dilation}
The channel combines an identity branch with the coherent errors detected by the preceding code.
Set $\nu_s=2s(s+3)$ and define
\begin{equation}
\begin{aligned}
 \cN_s(X)=\frac1{\nu_s}\bigl[&2sX+D_sXD_s
 +s(A_sXA_s^\dagger+A_s^\dagger XA_s)\\
 &+3s(BXB^\dagger+B^\dagger XB)
 +(2s^2-1)(CXC^\dagger+C^\dagger XC)\bigr].
\end{aligned}
\label{eq:coherent-family}
\end{equation}
For $s\ge1/\sqrt2$, all Kraus weights are nonnegative, and direct multiplication gives the normalized identity
\[
 \begin{aligned}
 \frac1{\nu_s}\bigl[2sI+D_s^\dagger D_s
 &+s(A_s^\dagger A_s+A_sA_s^\dagger)
 +3s(B^\dagger B+BB^\dagger)\\
 &+(2s^2-1)(C^\dagger C+CC^\dagger)\bigr]=I.
 \end{aligned}
\]
Thus \eqref{eq:coherent-family} is CPTP. These weights also satisfy the transpose-simulation condition proved below. Pairing each jump with its adjoint makes the channel unital and Hilbert--Schmidt self-adjoint. The family contains the Zhu--Wang channel at
\begin{equation}
 \cN_4=\mathcal Z.
 \label{eq:Z}
\end{equation}
For $s\ge3$, the input and receiver output dimensions are $d_A=d_B=4$, and the minimal environmental dimension is $d_E=8$.

Write
\begin{equation}
 (L_e)=(I,D_s,A_s,A_s^\dagger,B,B^\dagger,C,C^\dagger),\quad
 (r_e)=(2s,1,s,s,3s,3s,2s^2-1,2s^2-1),
 \label{eq:raw-errors}
\end{equation}
so $V_s\ket\psi=\nu_s^{-1/2}\sum_e\sqrt{r_e}L_e\ket\psi\otimes\ket e$ is a dilation. Take
\begin{equation}
 H_A=\epsilon\diag(0,1,2,3),\qquad H_E=\epsilon\diag(3,3,4,2,5,1,6,0).
\end{equation}
The identity $(H_A\otimes I+I\otimes H_E)V_s=V_s(H_A+3\epsilon I)$ shows that the dilation preserves total energy when the environment is initialized in $\ket0_E$. Extending the isometry $V_s$ to a unitary in each energy sector gives a finite collision model. Each environmental label groups transitions with the same energy exchange, preserving their coherent sum. The privacy analysis uses the full environmental output, including coherences between labels. The identity branch has input-independent probability $1/(s+3)$.

The next theorem gives a transpose-antidegradability criterion for this family and a private-rate bound after erasure assistance.
\begin{examplebox}
\begin{theorem}[Coherent-ladder family]\label{thm:coherent-family}
On the CPTP domain of \eqref{eq:coherent-family},
\begin{equation}
 \cN_s\text{ transpose-antidegradable}
 \ \Longleftrightarrow\ 
 \cN_s\text{ PPT}\ \Longleftrightarrow\ s\ge3.
 \label{eq:coherent-range}
\end{equation}
For every finite $s\ge3$ and erasure probability $1/2\le p<1$, put
\begin{equation}
 c_s=\frac{(s+1)(s^2+6s+1)}{(s-1)^2},\qquad
 \kappa_{s,p}=p(c_s-1)+(1-p)(s-1).
 \label{eq:coherent-constants}
\end{equation}
Then $P(\cN_s)=P(\mathcal E_{2,p})=0$, but
\begin{equation}
 P(\cN_s\otimes\mathcal E_{2,p})
 \ge\frac{2\ln2\,(1-p)^2}{3(s+3)^2\kappa_{s,p}}>0
 \quad\text{bits per product use}.
 \label{eq:coherent-rate}
\end{equation}
Consequently $\cN_s$ is not antidegradable. Its complement is regularized less noisy and regularized measured-$\chi^2$ dominant, but not completely less noisy or completely measured-$\chi^2$ dominant.
\end{theorem}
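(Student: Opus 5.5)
The proof has three parts: the transpose-antidegradability equivalence \eqref{eq:coherent-range}, the activation bound \eqref{eq:coherent-rate} via \cref{prop:support-activation} and \cref{cor:support-quadratic}, and the order-theoretic consequences.

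\emph{Step 1: the equivalence \eqref{eq:coherent-range}.} Transpose antidegradability implies PPT, since $T\cN_s=\cD\cN_s^c$ is CP. So it suffices to show two things. First, PPT fails for $s<3$. Second, a CPTP $\cD$ with $\cD\cN_s^c=T\cN_s$ exists for $s\ge3$.

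For the first, I will compute the partial transpose of the Choi matrix of \eqref{eq:coherent-family}. Because of the energy covariance ($H_A$, $H_E$), it is block diagonal in total-energy sectors. I expect one small block to acquire a negative eigenvalue exactly when $s<3$, likely through a determinant proportional to $(s-3)$.

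For the second, I will mimic the Pauli construction \eqref{eq:pauli-simulator}. The complement has entries $\bra e\cN_s^c(X)\ket f=\nu_s^{-1}\sqrt{r_er_f}\Tr(L_f^\dagger L_eX)$. I take the ansatz $\cD(\ket e\bra f)=\nu_s (r_er_f)^{-1/2}\,\alpha_{ef}\,(\text{matrix fixed by }L_f^\dagger L_e)^{\mathsf T}$ and fit the coefficients so that $\cD\cN_s^c=T\cN_s$. The Kraus weights in \eqref{eq:raw-errors} were presumably tuned so that the resulting Choi matrix of $\cD$ is PSD precisely when $s\ge3$. I will check this positivity sector by sector, using the same covariance.

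\emph{Step 2: zero capacities.} With $\cD$ in hand, \cref{cor:transpose} gives $P(\cN_s)=0$ together with the regularized measured-$\chi^2$, quantum-$\chi^2$ and relative-entropy dominations. Regularized less-noisiness then follows from \eqref{eq:regularized-zero} and \eqref{eq:intro-regularized-less-noisy}. The helper $\mathcal E_{2,p}$ is antidegradable for $p\ge1/2$, so it also has zero private capacity.

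\emph{Step 3: activation.} I take the letters $\rho_0,\rho_1$ on $RA$ to be states supported on the code \eqref{eq:code}, such that $\cN_s(\rho_0)$ misses part of the support of $\cN_s(\rho_1)$. The natural choice is $\rho_0=\ket w\bra w$ and $\rho_1$ a state built from $u_s,v_s$.

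The error-detecting property \eqref{eq:detect}, especially the interference \eqref{eq:cancel}, should give the required exclusion. With probability $1/(s+3)$ the output stays in the code. Bob measures $F$, the projector onto the code component orthogonal to $\supp\omega_0$. This yields $q$ of order $1/(s+3)$; matching the constant $(s+3)^2$ in the rate, I expect $q=2/(s+3)$ or similar.

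Next I compute the sharp domination constants: $c_R=c_s$ for the full environment with reference, and $c_E=s$ for the reduced environment. These are maximal generalized eigenvalues, obtainable from small explicit blocks. Since $c_s\ge3$ and $s\ge3$, we have $\gamma(c)=c-1$, so $\kappa_p$ in \cref{cor:support-quadratic} equals $\kappa_{s,p}$.

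Finally, I optimize the quadratic bound $\tfrac{(1-p)qt}{2}-\tfrac{3\kappa t^2}{32\ln2}$ at $t_*=\tfrac{8\ln2(1-p)q}{3\kappa}$. This gives $\tfrac{2\ln2(1-p)^2q^2}{3\kappa}$, which matches \eqref{eq:coherent-rate} with $q=1/(s+3)$. I still need to verify $t_*\le1/2$, which should follow from $\kappa_{s,p}\ge p(c_s-1)$ being large.

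\emph{Step 4: consequences.} A positive product capacity with an antidegradable helper rules out antidegradability of $\cN_s$ and complete less-noisiness of $\cN_s^c$, since either complete comparison would tensorize. \cref{thm:hierarchy} then excludes complete measured-$\chi^2$ domination as well.

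The main obstacle is Step 1's positivity analysis: both the threshold $s\ge3$ for the explicit $\cD$ and the NPT certificate for $s<3$. A close second is computing $c_R=c_s$ exactly.
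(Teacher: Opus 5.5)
Your Steps 2 and 4, the sector-by-sector PPT computation and the optimization at $t_*$ agree with the paper. Step 3, however, has a genuine gap: you have the background and signal reversed.

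\cref{prop:support-activation} needs the support asymmetry to point in opposite directions for the two parties. The receiver output $\omega_1$ must have weight outside $\supp\omega_0$, while the environmental outputs must satisfy $\supp\eta_1\subseteq\supp\eta_0$. With your choice $\rho_0=\ket w\bra w$ and $\rho_1$ supported on $\mathrm{span}(u_s,v_s)$, Bob's condition does hold, because \eqref{eq:accepted} puts $u_s,v_s$ in $\ker\omega_0$. Eve's condition fails:
\begin{itemize}
\item The $A$-marginal of $w$ lies on levels $1,2$. These are annihilated by $C=\ket0\bra3$ and $C^\dagger$, so the environmental labels $C,C^\dagger$ are empty under $\eta_0^E$.
\item Every state on $\mathrm{span}(u_s,v_s)$ has $A$-marginal weight $s/(s+1)$ on levels $0,3$, and therefore populates those labels.
\end{itemize}
Hence $c_E=c_R=\infty$. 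Even without $R$, Eve has a conclusive event whose probability is linear in $t$, so the quadratic leakage bound that makes the rate positive is lost.

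The constants $c_s$ and $s$ you quote belong to the opposite orientation, which is the paper's choice: $\rho_0=\tfrac12(u_su_s^\dagger+v_sv_s^\dagger)$, $\rho_1=ww^\dagger$, $F=ww^\dagger$, and $q=1/(s+3)$. There $\eta_1\preceq c_s\eta_0$ follows by writing the four signal amplitude columns as combinations of the eight background columns, $S_s=B_sT_s$, which gives $c_s=(s+1)\norm{T_s}_\infty^2$. The reduced bound $\eta_1^E\preceq s\,\eta_0^E$ comes from diagonal ratios.

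Your Pauli-style simulator ansatz in Step 1 also cannot work as stated. Suppose $\cD(\ket e\bra f)$ is a scalar multiple of $(L_f^\dagger L_e)^{\mathsf T}$, as in \eqref{eq:pauli-simulator}.
\begin{itemize}
\item Trace preservation fixes each diagonal block to $(L_e^\dagger L_e)^{\mathsf T}/\Tr(L_e^\dagger L_e)$.
\item At $X=\ket0\bra0$, the composition identity then leaves only the $(I,D_s)$ cross coefficient free.
\item The required output is $\nu_s\cN_s(\ket0\bra0)=\diag(2s+1,s,3s,2s^2-1)$. Its $(2,2)$ entry forces that coefficient to be positive, while its $(3,3)$ entry forces it to be negative, of order $-2s^2$.
\end{itemize}
So no choice of scalars closes the system.

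The paper instead writes the simulator's Choi matrix as $(W_s^{-1}\otimes I)H_s(W_s^{-1}\otimes I)$. Here $H_s=Q_s+\mathsf RQ_s\mathsf R^\dagger$, and $Q_s$ is an explicit sum of rank-one terms whose coefficients are nonnegative for $s\ge3$; the $z_3z_3^\dagger$ weight is proportional to $s-3$. Trace preservation and $\cD_s\cN_s^c=T\cN_s$ are then verified as direct identities. Producing such a certificate is the substantive content of the implication from $s\ge3$ to transpose antidegradability, and your proposal does not yet supply one.
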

\end{examplebox}

\Needspace{8\baselineskip}
\begin{proof}
\emph{Transpose antidegradability.} \Cref{app:simulator} constructs a CPTP map $\cD_s$ with $\cD_s\cN_s^c=T\cN_s$ for $s\ge3$: its Choi positivity follows from the rank-one sum \eqref{eq:simulator-gram}, and \eqref{eq:simulator-identities} verifies trace preservation and composition. Thus \cref{cor:transpose} gives $P(\cN_s)=0$. The same appendix computes an eigenvalue $2s(s-3)/\nu_s$ of the partially transposed Choi matrix and proves positivity for $s\ge3$. Since transpose antidegradability implies PPT, \eqref{eq:coherent-range} follows.

\medskip
\noindent\emph{Activation.} Apply \cref{prop:support-activation} to the background and signal states
\begin{equation}
 \rho_0^{(s)}=\tfrac12(u_su_s^\dagger+v_sv_s^\dagger),\qquad
 \rho_1=ww^\dagger,
 \label{eq:background}
\end{equation}
Both reference marginals equal $I_R/2$, so the helper input alone contains no information about the letter.

By \eqref{eq:detect}, every state $\rho$ in the code obeys
\begin{equation}
 P_s(\id_R\otimes\cN_s)(\rho)P_s=\frac{\rho}{s+3}.
 \label{eq:accepted}
\end{equation}
Only the identity branch survives this joint code projection. Consequently $F=ww^\dagger$ has probability zero on the background and $q=1/(s+3)$ on the signal.

For $\eta_i=(\id_R\otimes\cN_s^c)(\rho_i)$ and $\eta_i^E=\Tr_R\eta_i$, \cref{app:environment} proves the domination bounds
\begin{equation}
 \eta_1\preceq c_s\eta_0,\qquad\eta_1^E\preceq s\eta_0^E.
 \label{eq:env-bounds}
\end{equation}
The proposition therefore applies with $c_R=c_s$, $c_E=s$, $q=1/(s+3)$, and $\kappa_p=\kappa_{s,p}$. Since $(1-p)/(s+3)\le1/12$ and $\kappa_{s,p}\ge2$, the choice $t_*=8(1-p)\ln2/[3(s+3)\kappa_{s,p}]$ lies below $1/2$. Substitution in \cref{cor:support-quadratic} proves \eqref{eq:coherent-rate}; the helper has zero private capacity because it is antidegradable. A complete less-noisy comparison would force zero capacity with this helper, contradicting the positive rate. The hierarchy also excludes antidegradability and complete measured domination. The regularized comparisons follow from \cref{cor:transpose}.
\end{proof}

At half erasure the bound becomes
\begin{equation}
 P(\cN_s\otimes\mathcal E_{2,1/2})
 \ge\frac{\ln2\,(s-1)^2}{3(s+3)^2(2s^3+3s^2+12s-1)}.
 \label{eq:half-rate}
\end{equation}
At $s=4$ the bound recovers the original $3\ln2/10927$ bits \cite{ZhuWangActivation2026}; at $s=3$ it is $\ln2/3132\simeq2.213\times10^{-4}$ bits per product use. The exact rate bound \eqref{eq:support-activation-rate} improves the latter guarantee: at $s=3$, $p=1/2$, and $t=1/80$, it exceeds $2.475\times10^{-4}$ bits per product use.
Panel~(a) of \cref{fig:activation-bounds} plots the closed-form rate bound \eqref{eq:coherent-rate} as $s$ varies for three erasure probabilities.

\begin{figure}[tb]
\centering\includegraphics[width=\textwidth]{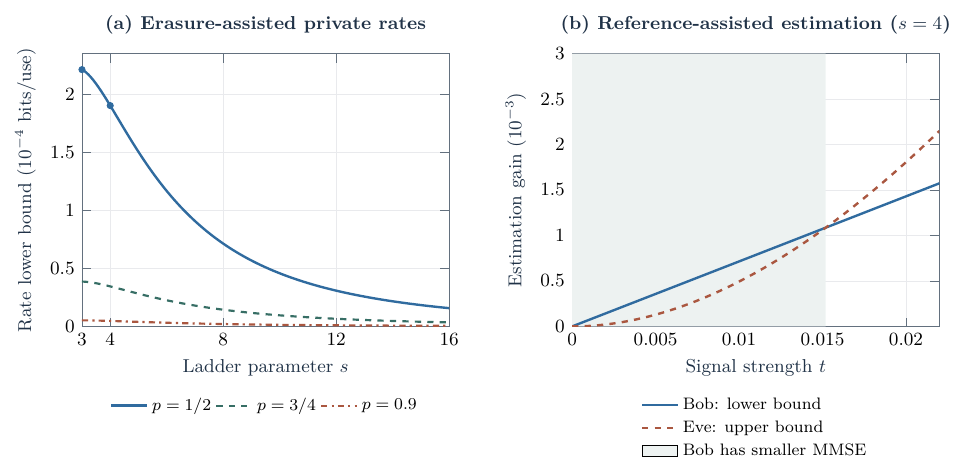}
\caption{Numerical values of proved ladder bounds. (a) Achievable private-rate lower bounds from \eqref{eq:coherent-rate}, for three erasure probabilities; the individual channels have zero private capacity. (b) For $s=4$, the receiver's lower bound and the upper bound for Eve in \eqref{eq:mmse-witness}. The shaded interval is \eqref{eq:mmse-interval}, throughout which Bob has smaller MMSE in the same reference-assisted experiment.}
\label{fig:activation-bounds}
\end{figure}

\paragraph{Reference-assisted estimation.}
The failure of complete measured ordering can also be observed directly. Use equal priors for $\rho_0^{(s)}$ and $\rho_t^{(s)}=(1-t)\rho_0^{(s)}+t\rho_1$, let $\omega_i=(\id_R\otimes\cN_s)(\rho_i)$, and use bars to denote the respective average output states. The click measurement and the preparation coin give
\begin{equation}
 \mchi(\omega_t\Vert\bar\omega)\ge\frac{t/(s+3)}{2-t/(s+3)},\qquad
 \mchi(\eta_t\Vert\bar\eta)\le\frac{(c_s-1)t^2}{(2-t)[2+(c_s-1)t]}.
 \label{eq:mmse-witness}
\end{equation}
\Needspace{14\baselineskip}
The first inequality is the classical $\chi^2$ value of the click measurement. For the second, the preparation in the proof of \cref{prop:support-activation} maps coins of bias $1/c_s$ and $[1+(c_s-1)t]/c_s$ to $\eta_0$ and $\eta_t$. The measured-$\chi^2$ value between the second coin and their average is the displayed upper bound, so data processing applies even for noncommuting outputs. This measured-$\chi^2$ estimate is sharp when only the domination constant is specified: orthogonal prepared states attain it. Comparing the two rational expressions shows that Bob has smaller MMSE whenever
\begin{equation}
 0<t<\frac{2}{1+(s+2)(c_s-1)}.
 \label{eq:mmse-interval}
\end{equation}
Thus an entire explicit interval witnesses failure of the complete comparison, although Eve dominates every unassisted blocklength comparison. Panel~(b) of \cref{fig:activation-bounds} shows the two bounds and this interval at $s=4$.

\Needspace{10\baselineskip}
\section{Discussion}\label{sec:discussion}
Our results place several zero-private-capacity criteria in one hierarchy. Fixed-record Hermitian smoothing characterizes complete variance-dominating signed lifts. These imply complete measured-$\chi^2$ ordering, which in turn implies complete quantum-$\chi^2$, relative-entropy, and less-noisy ordering. Binary estimation yields a quantitative reliability--secrecy bound; the coherent-qubit representation connects that bound to the established entropic approach.

The examples distinguish physical simulation from reference-stable information comparison. The noisy Werner--Holevo channels and their qubit-to-qutrit restrictions have zero private capacity without antidegradability, yet their complete comparison prevents activation by an antidegradable helper. The Pauli construction and the coherent-ladder family instead satisfy comparisons at every tensor power through transpose simulation, while a joint measurement with a transmitted reference yields positive private information. A conclusive receiver event gives a linear signal gain, whereas the support-contained environmental perturbation contributes only quadratically. Both examples therefore separate regularized from complete less-noisiness with four-dimensional channel input and output. The ladder family adds a continuous deformation in which the transition amplitudes and the error-detecting code preserve destructive interference together.

The Pauli encoding also distinguishes two information comparisons for a fixed ensemble. For the original input states, Bob has more mutual information with the label, whereas Eve achieves smaller MMSE. Diluting one input state gives the finite measured-$\chi^2$ reversal in \eqref{eq:pauli-diluted-mmse}. The support calculation then extends erasure-assisted activation from the published half-erasure rate to all $1/2\le p<1$. These consequences connect the correlated Pauli noise model to the estimation criterion without identifying finite-ensemble mutual-information and MMSE comparisons.

Whether complete unsmoothed measured-$\chi^2$ ordering implies a complete signed lift remains open in general. \Cref{prop:qubit-converse} settles unital qubit pairs, and \cref{app:converses} also treats sources with full output span. The examples studied here do not settle the remaining case: the qutrit-based families satisfy both conditions and the activating examples fail both. A related question in \cite{privateCapZero2026} asks whether every transpose-antidegradable channel that is not antidegradable activates with a suitable antidegradable helper. By \cref{prop:erasure} and tensorization of the complete comparison, this is equivalent to the unresolved implication
\begin{equation}
 \left.\begin{gathered}
 \cN\text{ transpose-antidegradable},\\
 \cN^c\succeq_{\rm ln}^{\mathrm c}\cN
 \end{gathered}\right\}
 \stackrel{?}{\Longrightarrow}\ \cN\text{ antidegradable}.
\end{equation}
Failure of the complete order supplies a finite-dimensional half-erasure activator, while satisfaction of that order rules out every antidegradable helper. Neither activating example resolves this implication, since both fail the complete order. Further questions concern activation outside the ladder family's certified range, optimized achievable rates, and other physical noise models with the required interference and support properties.

\Needspace{8\baselineskip}
\section*{Acknowledgments}
CH received funding by the Deutsche Forschungsgemeinschaft (DFG, German Research Foundation) -- 550206990, the Federal Ministry of Research, Technology and Space (BMFTR), Germany, under the QC service center QUICS (grant no.~13N17418), and by Germany's Excellence Strategy -- EXC 2123/2 QuantumFrontiers -- 390837967.

\paragraph{AI use.}
OpenAI Codex (models \texttt{gpt-6-sol} and \texttt{gpt-6-astra}, used in September 2026) contributed to the mathematical proofs in this work, including proposing proof strategies, deriving intermediate results, and drafting and revising proof arguments. It also assisted with literature searches, numerical checks, exposition, figures, and \LaTeX{} production. The author retains full responsibility for the validity of the proofs and the accuracy and conclusions of the manuscript.

\clearpage
\appendix
\crefalias{section}{appendix}
\addtocontents{toc}{\protect\setcounter{tocdepth}{0}}
\section{Estimation and channel-order proofs}\label{app:comparison-proofs}
\addtocontents{toc}{\protect\setcounter{tocdepth}{2}}
\addtocontents{toc}{\protect\addvspace{4pt}}
\addtocontents{toc}{\protect\contentsline{section}{Appendices}{\thepage}{\csname @currentHref\endcsname}}
\addtocontents{toc}{\protect\setcounter{tocdepth}{0}}
We prove the converse in \cref{thm:hierarchy} and the sufficient converse conditions discussed in \cref{sec:discussion}. The output-support convention of Section~1 applies throughout.

\subsection{Complete signed-lift characterization}\label[appendix]{app:lift}
We extract coupled quadratic tests and use minimax to choose one unbiased observable map whose second-moment bound holds for all scores. This proves the converse in \cref{thm:hierarchy} without selecting a different estimator for each input state.

\paragraph{Coupled quadratic tests.}
For Hermitian $X_1,\ldots,X_k$ and a positive block matrix $\Omega=[\Omega_{ij}]\succ0$ on $\mathbb C^k\otimes A$, define
\begin{equation}
 Q_\Omega(\bm X)=\sup_{Y_i=Y_i^\dagger}
 \left\{2\sum_i\Tr X_iY_i-\sum_{ij}\Tr\Omega_{ij}Y_jY_i\right\}.
 \label{eq:coupled}
\end{equation}
The off-diagonal blocks of $\Omega$ couple different scores. The quadratic cost is positive definite because it is the expectation of the square of a Hermitian score against a faithful block state. Channels act on $A$ and leave the block index unchanged. Complete restricted ordering implies
\begin{equation}
 Q_{\cN(\Omega)}((\cN(X_i))_i)
 \le Q_{\cM(\Omega)}((\cM(X_i))_i).
 \label{eq:coupled-order}
\end{equation}
To see this, use the reference $\mathbb C\ket0\oplus\mathbb C^k$ and the denominator and direction
\begin{equation}
 \widehat\sigma_\epsilon=\begin{pmatrix}\epsilon\tau&0\\0&\Omega\end{pmatrix},\qquad
 \widehat X=\sum_i(\ket0\bra i+\ket i\bra0)\otimes X_i,
 \label{eq:star}
\end{equation}
with $\tau$ faithful and $\epsilon>0$. Normalize the denominator and choose a sufficiently small state perturbation in the traceless direction $\widehat X$. The normalization is common to both channel outputs and cancels in the comparison.

A score fixed by $T_R$ has Hermitian off-diagonal blocks $H_i=Y_{0i}=Y_{i0}$. Only these blocks contribute to the linear term. The remaining blocks have independent nonnegative quadratic costs, so an optimal score sets them to zero. Substituting $H_i=2Y_i$ leaves
$4Q_{\Phi(\Omega+\epsilon I\otimes\tau)}((\Phi(X_i))_i)$ at output $\Phi$, before the common normalization factor. The quadratic cost remains positive definite as $\epsilon\downarrow0$, so continuity proves \eqref{eq:coupled-order}.

\paragraph{Matching all observable means.}
If $d_B=1$, the unital map on scalars is already a complete lift, so assume $d_B\ge2$. Choose a traceless Hermitian basis $F_1,\ldots,F_k$ of $B$, where $k=d_B^2-1$, and put $G_i=\cN^\dagger(F_i)$. The equations $\cM^\dagger(H_i)=G_i$ are feasible. Indeed, the comparison with a trivial reference is ordinary measured ordering; small perturbations of a faithful input show that $\ker\cM\subseteq\ker\cN$ on Hermitian operators. Taking orthogonal complements gives $\operatorname{im}\cN^\dagger\subseteq\operatorname{im}\cM^\dagger$.

For a positive block weight $\Omega$, define the coupled second-moment cost
\begin{equation}
 C_\Phi(\bm H,\Omega)
 :=\sum_{i,j}\Tr\Omega_{ij}\Phi^\dagger(H_jH_i).
 \label{eq:coupled-cost}
\end{equation}
Indeed, $C_\Phi(\bm H,\Omega)=\Tr S^\dagger(\id\otimes\Phi)(\Omega)S$ for $S=\sum_i\ket i\otimes H_i$, so the cost is nonnegative and convex in the Hermitian tuple $\bm H$. Let $\mathcal A$ be the affine set of tuples satisfying $\cM^\dagger(H_i)=G_i$.
\Needspace{4\baselineskip}
For Hermitian multipliers $X_i$, the Lagrangian is $C_\cM(\bm H,\Omega)+2\sum_i\Tr X_i[G_i-\cM^\dagger(H_i)]$. Minimizing over unrestricted Hermitian $\bm H$ gives the dual objective below by \eqref{eq:coupled}; feasibility and positive definiteness give strong duality:
\begin{align}
 v_\cM(\Omega)
 &:=\inf_{\bm H\in\mathcal A}C_\cM(\bm H,\Omega)\notag\\
 &=\sup_{X_i=X_i^\dagger}\left\{2\sum_i\Tr X_iG_i
 -Q_{\cM(\Omega)}((\cM(X_i))_i)\right\}\notag\\
 &\le C_\cN(\bm F,\Omega).
 \label{eq:quadratic-dual}
\end{align}
The inequality uses \eqref{eq:coupled-order} and the feasible target tuple $F_i$. Thus every positive definite weight admits an unbiased source tuple whose weighted cost is no greater than the target's. The same inequality holds for positive semidefinite weights: apply it to $\Omega+\epsilon I$ and let $\epsilon\downarrow0$.

\paragraph{Choosing one tuple for every weight.}
Let $\mathcal W$ be the compact convex set of positive block weights with trace one. The function
\begin{equation}
 f(\bm H,\Omega)=C_\cM(\bm H,\Omega)-C_\cN(\bm F,\Omega)
\end{equation}
is continuous, convex in $\bm H\in\mathcal A$, and affine in $\Omega\in\mathcal W$. Sion's minimax theorem \cite{Sion1958} and \eqref{eq:quadratic-dual} therefore give
\begin{equation}
 \inf_{\bm H\in\mathcal A}\max_{\Omega\in\mathcal W}f(\bm H,\Omega)
 =\max_{\Omega\in\mathcal W}\inf_{\bm H\in\mathcal A}f(\bm H,\Omega)\le0.
 \label{eq:lift-minimax}
\end{equation}
The maximum includes $\Omega=I/(kd_A)$, so the left-hand objective is coercive on $\mathcal A$: its quadratic part contains a positive multiple of $\sum_i\Tr\cM(I)H_i^2$. The output-support convention makes $\cM(I)$ faithful, and a minimizing tuple exists. Choose one with $f(\bm H,\Omega)\le0$ for every $\Omega\in\mathcal W$.

Set $\cJ(F_i)=H_i$ and $\cJ(I)=I$, then extend complex-linearly. To check the complete bound, take any reference $R$, state $\tau_{RA}$, and Hermitian $Y=C_0\otimes I+\sum_i C_i\otimes F_i$. The blocks
\begin{equation}
 \Omega_{ij}=\Tr_R\!\bigl[(C_i\otimes I)\tau(C_j\otimes I)\bigr]
 \label{eq:lift-weight}
\end{equation}
form a positive weight: they are the blocks of the partial trace over $R$ of $V\tau V^\dagger$, where $V=\sum_i\ket i\otimes C_i\otimes I$. Normalize its trace if nonzero. By \eqref{eq:coupled-cost}, $f(\bm H,\Omega)\le0$ is exactly the second-moment comparison for the nonidentity part of $Y$. The terms containing $C_0$ cancel because $\cM^\dagger\cJ=\cN^\dagger$. Since this also preserves means, the variance bound in \eqref{eq:lift} follows. The tests in \eqref{eq:star} used a reference of dimension $k+1=d_B^2$, as claimed in \cref{thm:hierarchy}.

\subsection{Full-output-span converse}\label[appendix]{app:converses}
We prove the full-output-span step used in \cref{prop:qubit-converse}.

Assume $\operatorname{im}\cM=\cL(E)$ and complete measured ordering. The perturbation and kernel argument of \cref{app:lift} gives $\ker\cM\subseteq\ker\cN$ and, for faithful $\sigma$,
\begin{equation}
 q_{\cN(\sigma)}(\cN(X))\le q_{\cM(\sigma)}(\cM(X))
 \qquad(X=X^\dagger).
\end{equation}
For nontraceless $X$, subtract $(\Tr X)\sigma$; the removed term adds the same $(\Tr X)^2$ to both forms. Surjectivity makes $\cM^\dagger$ injective, so kernel inclusion yields a unique Hermiticity-preserving map $\cJ$ with $\cM^\dagger\cJ=\cN^\dagger$ and $\cJ(I)=I$.

For a target score $Y$, the variational formula \eqref{eq:quadratic} and the displayed comparison imply
$2\Tr\cM(X)\cJ(Y)-q_{\cM(\sigma)}(\cM(X))\le\Tr\cN(\sigma)Y^2$.
Every Hermitian output $H$ has the form $\cM(X)$, so optimizing over $H$ gives
\begin{equation}
 \sup_{H=H^\dagger}\{2\Tr H\cJ(Y)-q_{\cM(\sigma)}(H)\}
 =\Tr\cM(\sigma)\cJ(Y)^2\le\Tr\cN(\sigma)Y^2.
\end{equation}
The same argument applies after adjoining any reference. The amplified source remains surjective, so its unique unbiased map is $\id_R\otimes\cJ$. Thus one map satisfies the second-moment bound for every reference and input; unbiasedness turns it into the required variance bound.

\section{Symmetric-extension certificates}\label{app:nonantidegradable}
The criteria below prove the antidegradability threshold in \cref{thm:wh} and the separation claims in \cref{ex:mixtures,ex:overlap}.

\subsection{Symmetric extensions}\label[appendix]{app:examples}
Antidegradability is equivalent to a symmetric extension of the normalized Choi state \cite{MyhrLutkenhaus2009}. We use this equivalence for the Werner--Holevo threshold, the full-rank mixtures, and the qubit-to-qutrit obstruction.
\subsubsection{Invariant qutrit Choi states}
Let $F$ be the swap on two qutrits and put
\[
 \Phi=\ket\Omega\bra\Omega/3,\quad \ket\Omega=\sum_{i=0}^2\ket{ii},\quad
 P^-=(I-F)/2,\quad P^{\rm st}=(I+F)/2-\Phi.
\]
For $\omega=p_0\Phi+(p_1/3)P^-+(p_2/5)P^{\rm st}$, antidegradability is equivalent to
\begin{equation}
 4p_0\le3p_1+5p_2+2\sqrt{15p_1p_2}.
 \label{eq:brauer}
\end{equation}
We derive the criterion below by specializing the Brauer-state extension geometry \cite{SolymosJakabZimboras2024}.
Let $W=z_0\Phi+z_1P^-+z_2P^{\rm st}$ and define
$\mathfrak H(W)=(W_{AB_1}\otimes I_{B_2}+W_{AB_2}\otimes I_{B_1})/2$.
If $\mathfrak H(W)\succeq0$, every symmetric extension has nonnegative expectation of $W$ on its marginal. Conversely, averaging a separating witness over the orthogonal symmetry of the Choi state gives a witness of this form. The cone of extendible marginals is closed because trace bounds every positive extension. Thus these witnesses characterize extendibility, and their expectation on $\omega$ is simply $p_0z_0+p_1z_1+p_2z_2$.

To make the blocks explicit, for $v\in\mathbb C^3$ define trace tensors
\[
 (a_v)_{ijk}=\delta_{ij}v_k,\qquad
 (b_v)_{ijk}=\delta_{ik}v_j,\qquad
 (c_v)_{ijk}=\delta_{jk}v_i.
\]
The maps
\[
 e_0(v)=\frac{c_v}{\sqrt3},\qquad
 e_+(v)=\frac{3a_v+3b_v-2c_v}{2\sqrt{15}},\qquad
 e_-(v)=\frac{a_v-b_v}{2}
\]
give three orthogonal isometric copies of $\mathbb C^3$. Receiver exchange is even on the first two and odd on the third. The multiplicity blocks of $\mathfrak H(W)$ are the scalars $z_1,z_2$, $(z_1+3z_2)/4$, $(3z_1+z_2)/4$, $(4z_0+3z_1+5z_2)/12$, and the block
\begin{equation}
 E(z_0,z_1,z_2)=
 \begin{pmatrix}
 (z_0+3z_1+5z_2)/9&\sqrt5(2z_0-3z_1+z_2)/18\\
 \sqrt5(2z_0-3z_1+z_2)/18&(20z_0+15z_1+z_2)/36
 \end{pmatrix}.
\end{equation}
The first four blocks act on the trace-free antisymmetric, symmetric, and two mixed-symmetry sectors, of dimensions $1,7,5,5$. There $\mathfrak H(\Phi)=0$ and the averaged swap has eigenvalues $-1,1,\pm1/2$, giving the stated scalars. On the trace tensors, contraction gives the fifth scalar on $e_-$ and $E$ on $(e_0,e_+)$. These spaces exhaust all $27$ dimensions and specialize the invariant extension analysis of \cite{ZhuWang2026,SolymosJakabZimboras2024}.

The first two scalar blocks force $z_1,z_2\ge0$. Scale $z_2=1$ and write $z_1=u$. The lowest permitted value of $z_0$ is exposed by the rank-one factorization
\begin{equation}
 E\!\left(-\frac{4u}{5u+3},u,1\right)
 =\frac{v_uv_u^{\mathsf T}}{12(5u+3)},\qquad
 v_u=\begin{pmatrix}2\sqrt5(u+1)\\1-5u\end{pmatrix}.
\end{equation}
At this value, the remaining scalar is $(5u^2+6u+5)/[4(5u+3)]>0$. Increasing $z_0$ adds a positive multiple of $\mathfrak H(\Phi)$. Decreasing it gives negative expectation in the null direction $(5u-1,2\sqrt5(u+1))^{\mathsf T}$, since the coefficient of $z_0$ in $E$ is $(1,\sqrt5)^{\mathsf T}(1,\sqrt5)/9$. Thus positivity holds exactly above this boundary. Since the marginal expectation increases with $z_0$, it suffices to test
\begin{equation}
 z_2=1,\quad z_1=u\ge0,\quad z_0=-\frac{4u}{5u+3},
 \label{eq:witness-family}
\end{equation}
The cases $z_2=0$ follow by continuity; when $z_1=z_2=0$, positivity simply requires $z_0\ge0$. Multiplying the boundary expectation by $5u+3$ gives
\begin{equation}
 5p_1u^2+(3p_1+5p_2-4p_0)u+3p_2\ge0
 \qquad\text{for every }u\ge0.
\end{equation}
For $A,C\ge0$, the quadratic $Au^2+Bu+C$ is nonnegative on $u\ge0$ exactly when $B\ge-2\sqrt{AC}$. Applying this with $A=5p_1$ and $C=3p_2$ gives \eqref{eq:brauer}, including zero weights by continuity.

\paragraph{The full-rank mixture.}
The normalized Choi state of $\mathcal M_\epsilon$ in \cref{ex:mixtures} has weights
\[
 p_0=\frac{9-7\epsilon}{18},\qquad
 p_1=\frac{3-\epsilon}{6},\qquad
 p_2=\frac{5\epsilon}{9}.
\]
For $0\le\epsilon\le1/81$, the difference $4p_0-3p_1-5p_2=(3-23\epsilon)/6$ is positive, and
\[
 (4p_0-3p_1-5p_2)^2-60p_1p_2
 =\frac{(1-\epsilon)(1-81\epsilon)}4.
\]
Hence \eqref{eq:brauer} fails for $0\le\epsilon<1/81$ and holds at $\epsilon=1/81$; convexity of symmetric extendibility and antidegradability of $\mathcal R_{\tau_3}$ give the threshold for larger $\epsilon$. The replacer term makes the Choi state positive definite whenever $\epsilon>0$. Both $\Lambda_{1/2}$ and $\mathcal R_{\tau_3}$ have complete signed lifts, so \cref{prop:closure} gives one for their mixtures.

\subsubsection{Qubit-to-qutrit channels}
We use the support of the Choi state to obtain the antidegradability obstruction for \eqref{eq:overlap}.
For \eqref{eq:overlap}, put $c=1-a-b$. The normalized Choi state has diagonal $(1-b,b,0,c,a,b)/2$ in the basis $(00,01,02,10,11,12)$, with off-diagonal entries $J_{00,11}=a/2$, $J_{01,12}=-b/2$. Its kernel contains $\ket{02}$ and $\ket{01}+\ket{12}$. Imposing these kernel constraints on both $AB_1$ and $AB_2$ restricts an extension to the following eight-dimensional support:
\begin{align*}
 &h=\ket{100},\quad e_0=\ket{000},\quad e_1=(\ket{101}+\ket{110})/\sqrt2,
 \quad o_0=(\ket{101}-\ket{110})/\sqrt2,\\
 &f_0=(\ket{001}+\ket{010}-\ket{102}-\ket{120})/2,\quad f_1=\ket{111},\\
 &o_1=(\ket{001}-\ket{010}-\ket{102}+\ket{120})/2,
 \quad g=(\ket{011}-\ket{112}-\ket{121})/\sqrt3.
\end{align*}
The Choi state is invariant under $U_A(\theta)=\diag(1,e^{i\theta})$ and $U_B(\theta)=\diag(1,e^{-i\theta},e^{-2i\theta})$. Averaging an extension over $U_A(\theta)\otimes U_B(\theta)\otimes U_B(\theta)$ and receiver exchange preserves positivity and both marginals. It separates the displayed basis by phase and swap parity. Only $(e_0,e_1)$ and $(f_0,f_1)$ can then have nonscalar blocks, written $\left(\begin{smallmatrix}A&C\\\bar C&B\end{smallmatrix}\right)$ and $\left(\begin{smallmatrix}D&F\\\bar F&E\end{smallmatrix}\right)$. Let $O_0,O_1,H,G\ge0$ be the scalar weights. Define $z=(D+O_1)/4$, $v=(B+O_0)/2$, $w=G/3$. The marginal equations are
\begin{equation}
 z+w=b/2,\quad A=(1-b)/2-z,\quad H+v+z=c/2,\quad
 E=a/2-v-w,\quad C/\sqrt2+F/2=a/2.
\end{equation}
Positivity of the two $2\times2$ blocks gives $a/2\le\sqrt{Av}+\sqrt{zE}$; the nonnegative scalar weight $H$ gives $v+z\le c/2$. The following weighted arithmetic--geometric mean bounds are chosen to give the same coefficient to $v$ and $z$:
\begin{equation}
 \sqrt{Av}\le\tfrac{\sqrt2-1}{2}A+\tfrac{\sqrt2+1}{2}v,\qquad
 \sqrt{zE}\le\tfrac{\sqrt2}{2}z+\tfrac1{2\sqrt2}E.
\end{equation}
Substitute the marginal equations and use $v+z\le c/2$ to obtain
\begin{equation}
 \frac a2\le\frac{3\sqrt2-2a-4\sqrt2b}{8}
 \quad\Longrightarrow\quad 3\sqrt2a+4b\le3.
\end{equation}
The strict reverse of this necessary condition gives the non-antidegradability claim in \cref{ex:overlap}. For $b>0$, the partial transpose has a zero diagonal entry coupled to another basis vector by $-b/2\ne0$. A positive semidefinite matrix cannot have a nonzero entry in a row with zero diagonal, by Cauchy--Schwarz. Thus $\mathcal L_{a,b}$ is NPT whenever $b>0$.

\section{Calculations for the activation examples}\label{app:activation-calculations}
We verify the Pauli calculations from \cref{sec:pauli} and the ladder certificates used in \cref{thm:coherent-family}.

\subsection{Pauli-channel calculations}\label[appendix]{app:pauli}
We establish the information values, MMSE reversal, and sharp support bound cited in \cref{sec:pauli}, using the input letters and dilation of \cite{privateCapZero2026}. Non-antidegradability already follows from the positive private rate with an antidegradable helper.

\subsubsection{Output reduction}
We first diagonalize the receiver states and reduce the complementary states to three-dimensional blocks.
In the ordered computational/Bell basis
\begin{equation}
 (\ket{0,\phi^+},\ket{1,\phi^-},\ket{0,\phi^-},\ket{1,\phi^+},
 \ket{0,\psi^+},\ket{1,\psi^-},\ket{0,\psi^-},\ket{1,\psi^+}),
\end{equation}
the receiver states are diagonal, with
\begin{equation}
 12\omega_0=\diag(1,2,2,2,1,0,2,2),\qquad
 12\omega_1=\diag(2,1,2,2,0,1,2,2).
 \label{eq:pauli-bob-blocks}
\end{equation}
For $P=A\otimes Q\in\mathcal P$, the unitary
$U_{EC}=\sum_{P}\ket P\bra P\otimes Q_C^*$ gives
$U_{EC}\eta_iU_{EC}^\dagger=\zeta_i\otimes I_C/2$.
Define
\begin{align}
 a_\pm&=(\ket{IZ}\pm\ket{IY})/\sqrt2,&
 s&=(\ket{XX}+i\ket{YI})/\sqrt2,\notag\\
 p&=(\ket{XZ}+i\ket{YZ})/\sqrt2,&
 m&=(\ket{XZ}-i\ket{YZ})/\sqrt2.
\end{align}
In the basis $(a_+,a_-,s,p,m,s_\perp)$, where
$s_\perp=(\ket{XX}-i\ket{YI})/\sqrt2$, the reduced states are
\begin{equation}
 \zeta_i=\frac{I_2}{6}\oplus A_i\oplus0,\qquad
 A_0=\frac16\begin{pmatrix}2&1&0\\1&1&0\\0&0&1\end{pmatrix},\quad
 A_1=\frac16\begin{pmatrix}2&0&-1\\0&1&0\\-1&0&1\end{pmatrix}.
 \label{eq:pauli-eve-blocks}
\end{equation}
The eigenvalues $1/6,1/6,1/6,0$ are common to both letters and their average, so their entropy contributions cancel in the Holevo information. The two remaining eigenvalues are $(3\pm\sqrt5)/12$ for each letter and $(3\pm\sqrt3)/12$ for the average. Their total weight is $1/2$, giving the environmental expression in \eqref{eq:pauli-information}. The receiver expression follows directly from the diagonal probabilities in \eqref{eq:pauli-bob-blocks}.

\subsubsection{Estimation and support bounds}
Set $A_t=(1-t)A_0+tA_1$. The common $I_2/6$ block contributes zero to measured $\chi^2$, and tensoring with $I_C/2$ leaves it unchanged. A single calculation gives all the estimation values used in the text: for $0\le t,\theta\le1$,
\begin{align}
 \mchi(\eta_t\Vert\eta_\theta)
 &=\frac{16(t-\theta)^2(2+\theta-\theta^2)}
 {9(1+2\theta-2\theta^2)(5+2\theta-2\theta^2)},\notag\\
 \mchi(\omega_t\Vert\bar\omega_t)
 &=\frac{t(2+2t-t^2)}{3(4-t^2)(4-t)}.
 \label{eq:pauli-line-chi}
\end{align}
To obtain the first formula, solve $(A_\theta G+GA_\theta)/2=A_t-A_\theta$. Put $r=1-\theta$, $z=-\theta$, $u=G_{01}$ and $v=G_{02}$. Eliminating the other four entries of the real symmetric score reduces the SLD equation to
\begin{equation}
 \begin{pmatrix}
 3-(3r^2+z^2)/2&-rz\\
 -rz&3-(r^2+3z^2)/2
 \end{pmatrix}\begin{pmatrix}u\\v\end{pmatrix}
 =-2(t-\theta)\begin{pmatrix}1\\1\end{pmatrix}.
\end{equation}
The quadratic value is $\Tr[(A_t-A_\theta)G]=-(t-\theta)(u+v)/3$, giving the first expression in \eqref{eq:pauli-line-chi}. The second follows by summing the classical quadratic form for the diagonal probabilities in \eqref{eq:pauli-bob-blocks}. Taking $\theta=t/2$ and then $t=1$ or $1/2$ proves \eqref{eq:pauli-original-mmse} and \eqref{eq:pauli-diluted-mmse}. Taking $(t,\theta)=(1,0)$ gives the boundary value $32/45$. The zero receiver coordinate in \eqref{eq:pauli-bob-blocks} gives the corresponding infinite value.

For the support bound, $A_0$ and $A_1$ are positive definite, have the same determinant $1/216$, and satisfy $\Tr(A_0^{-1}A_1)=5$. Since $A_1-A_0$ has rank two, one generalized eigenvalue is $1$. The other two have product one and sum four, hence are $2\pm\sqrt3$. The common block adds only eigenvalues $1$. Thus $2+\sqrt3$ is the smallest constant in \eqref{eq:pauli-support}.

\subsection{Coherent ladder channels}\label[appendix]{app:ladder-calculations}
We prove the transpose-simulator and PPT claims in \cref{thm:coherent-family}, followed by the environmental domination bounds in \eqref{eq:env-bounds}.

\subsubsection{Transpose-antidegrading maps}\label[appendix]{app:simulator}
Choi matrices here are unnormalized, with the input system first. At $s=4$, the simulator recovers the certificate of \cite{ZhuWangActivation2026}.

\paragraph{PPT condition.}
The partial transpose of the Choi matrix of $\nu_s\cN_s$ is block diagonal by the sum of the input and output level indices. Reflection pairs sums $0,6$, $1,5$ and $2,4$. Representative first blocks are $(2s+1)$ and $s\left(\begin{smallmatrix}1&1\\1&1\end{smallmatrix}\right)$. The sum-two block, split by reversal parity, is
\begin{equation}
 \begin{pmatrix}6s&-\sqrt2s^{3/2}\\-\sqrt2s^{3/2}&s^2+2s\end{pmatrix}\oplus(0),
\end{equation}
This block has positive diagonal and determinant $4s^2(s+3)$ for $s>0$. The sum-three block is
\begin{equation}
 \begin{pmatrix}2s^2+2s-2&-2s\\-2s&2s\end{pmatrix}
 \oplus
 \begin{pmatrix}2s^2-2s&-4s\\-4s&2s^2-2s\end{pmatrix}.
\end{equation}
The first block has positive diagonal and determinant $4s(s^2-1)$ for $s\ge1$. The last has eigenvalues $2s(s+1)$ and $2s(s-3)$. Thus all blocks are positive exactly when $s\ge3$ in the CPTP domain.

\paragraph{Construction of the simulator.}
In the basis $\ket{e,c}$ of $E\otimes B=\mathbb C^8\otimes\mathbb C^4$, set
\begin{align*}
 z_1&=\tfrac12\ket{0,0}-\tfrac1{2s}\ket{1,0}+\ket{3,1},&
 z_2&=\tfrac12\ket{0,0}+\tfrac1{2s}\ket{1,0}+\ket{5,2},\\
 z_3&=\ket{7,3},&
 z_4&=2\ket{0,2}+\ket{1,2}-\sqrt s\ket{2,1}+\ket{3,3}+3\ket{4,0},\\
 z_5&=\ket{3,0}-\sqrt s\ket{5,1}+\ket{7,2},&
 z_6&=-\ket{5,0}+\ket{7,1},\qquad z_7=\ket{7,0}.
\end{align*}
Write $Z_s=(z_1,z_2,z_3)$ and set
\[
 f_s=\begin{pmatrix}s\\s-2\\s+3+1/s\end{pmatrix},\qquad
 g_s=\begin{pmatrix}0\\1\\(s+1)^2/[2s(s-1)]\end{pmatrix}.
\]
Define the positive operator directly as a sum of rank-one terms:
\begin{equation}
 \begin{aligned}
 Q_s=\frac1{2(s+1)}\biggl[&(Z_sf_s)(Z_sf_s)^\dagger
 +4(s^2-1)(Z_sg_s)(Z_sg_s)^\dagger\\
 &+\frac{4s(s+1)^2(s-3)}{s-1}z_3z_3^\dagger
 +s\sum_{j=4}^6z_jz_j^\dagger
 +(2s^2+4s+3)z_7z_7^\dagger\biggr].
 \end{aligned}
 \label{eq:simulator-gram}
\end{equation}
Every coefficient is nonnegative for $s\ge3$, and the $z_3z_3^\dagger$ coefficient vanishes at $s=3$. Thus the construction includes the endpoint directly.
Let $\mathsf R\ket{e,c}=\varepsilon_e\ket{\pi(e),3-c}$, with
$\pi=(0,1,3,2,5,4,7,6)$ and $\varepsilon=(1,-1,1,1,-1,-1,1,1)$.
Set $H_s=Q_s+\mathsf RQ_s\mathsf R^\dagger$ and $W_s=\diag(\sqrt{r_0},\ldots,\sqrt{r_7})$, using \eqref{eq:raw-errors}. The proposed simulator Choi matrix is
\begin{equation}
 J_{\cD_s}=(W_s^{-1}\otimes I)H_s(W_s^{-1}\otimes I)\succeq0.
 \label{eq:simulator-choi}
\end{equation}
Trace preservation and the transpose-simulation identity reduce to
\begin{align}
 \Tr_BH_s&=\diag(2s,1,s,s,3s,3s,2s^2-1,2s^2-1),\notag\\
 \sum_{e,f}(L_f^\dagger L_e)_{ji}(H_s)_{ef}
 &=\left[\sum_e r_eL_e\ket i\bra jL_e^\dagger\right]^{\mathsf T}
 \quad(0\le i,j\le3).
 \label{eq:simulator-identities}
\end{align}
Here $(H_s)_{ef}$ is the $4\times4$ block with environmental indices $e,f$. Substitution in \eqref{eq:simulator-gram} verifies both identities, the second on all sixteen input matrix units. The first gives $\Tr_BH_s=W_s^2$, hence $\Tr_BJ_{\cD_s}=I_E$.

\Needspace{6\baselineskip}
For composition, the Kraus weights cancel because
\begin{equation}
 [\cN_s^c(X)]_{ef}=\frac{\sqrt{r_er_f}}{\nu_s}\Tr(L_eXL_f^\dagger),\qquad
 \cD_s(\ket e\bra f)=\frac{(H_s)_{ef}}{\sqrt{r_er_f}}.
\end{equation}
Thus the second identity in \eqref{eq:simulator-identities}, divided by $\nu_s$, is exactly $\cD_s\cN_s^c(\ket i\bra j)=(\cN_s(\ket i\bra j))^{\mathsf T}$. Linearity extends this identity to every input operator. Together with Choi positivity and trace preservation, it proves that $\cD_s$ is a CPTP transpose-antidegrading map for $s\ge3$.

\Needspace{8\baselineskip}
\subsubsection{Complementary-channel support bounds}\label[appendix]{app:environment}
We prove \eqref{eq:env-bounds} by expressing every signal amplitude column as a linear combination of background columns. A bound on the coefficient matrix then gives the required operator inequality, including coherences between environmental labels. For $z=\sum_{a=0}^1\sum_{j=0}^3z_{aj}\ket{a,j}$, define the unweighted environmental amplitude matrix
\begin{equation}
 [\mathsf A(z)]_{(a,e),b}=\sum_{j=0}^3z_{aj}(L_e)_{bj},
 \qquad 0\le e\le7,\quad0\le b\le3.
\end{equation}
The column index $b$ is the receiver level traced out in the complement. Collect the eight background columns in $B_s=(b_0,\ldots,b_7)$ and the four signal columns in $S_s=(d_0,\ldots,d_3)$:
\begin{equation}
 B_s=\bigl(\mathsf A(\sqrt{s+1}\,u_s),\mathsf A(\sqrt{s+1}\,v_s)\bigr),\qquad
 S_s=\mathsf A(\sqrt2\,w).
\end{equation}
Substitution of the ladder errors gives the four relations
\begin{align}
 d_0&=-b_5/\sqrt s,&d_3&=-b_2/\sqrt s,\\
 d_1&=\frac{2\sqrt s}{s-1}b_0-\frac{s+1}{s-1}b_6,&
 d_2&=-\frac{s+1}{s-1}b_1+\frac{2\sqrt s}{s-1}b_7.
 \label{eq:environment-factor}
\end{align}
Thus $S_s=B_sT_s$ with orthogonal coefficient columns and
\begin{equation}
 T_s^\dagger T_s=\diag\!\left(\frac1s,\frac{s^2+6s+1}{(s-1)^2},
 \frac{s^2+6s+1}{(s-1)^2},\frac1s\right).
\end{equation}
Before inserting Kraus weights, the background and signal operators are $\widetilde\eta_0=B_sB_s^\dagger/[2(s+1)]$ and $\widetilde\eta_1=S_sS_s^\dagger/2$. Hence
\begin{equation}
 \widetilde\eta_1\preceq(s+1)\norm{T_s}_\infty^2\widetilde\eta_0
 =c_s\widetilde\eta_0\quad(s\ge3).
\end{equation}
Congruence by $(I_R\otimes W_s)/\sqrt{\nu_s}$ gives $\eta_1\preceq c_s\eta_0$, including environmental-label coherences.

When Eve does not receive $R$, tracing out that register gives a stronger bound with constant $s$. The reduced unweighted matrices are diagonal, with entries
\begin{align}
 \widetilde\eta_0^E:&\quad
 \left(1,s,\frac{2s+1}{2(s+1)},\frac{2s+1}{2(s+1)},\frac12,\frac12,
 \frac{s}{2(s+1)},\frac{s}{2(s+1)}\right),\notag\\
 \widetilde\eta_1^E:&\quad
 \left(1,s^2,\frac{s+1}{2},\frac{s+1}{2},\frac12,\frac12,0,0\right).
\end{align}
Their coordinate ratios are at most $s$ for $s\ge3$, using $(s+1)^2/(2s+1)\le s$. Congruence by $W_s/\sqrt{\nu_s}$ gives $\eta_1^E\preceq s\eta_0^E$. Both bounds also give the support containment needed in \cref{prop:support-activation}.

\clearpage
\phantomsection
\addtocontents{toc}{\protect\setcounter{tocdepth}{2}}
\addcontentsline{toc}{section}{References}
\bibliographystyle{unsrtnat}
\bibliography{references}

@article{Personick1971,
  author={Personick, Stewart D.},
  title={Application of Quantum Estimation Theory to Analog Communication over Quantum Channels},
  journal={IEEE Transactions on Information Theory}, volume={17}, number={3}, pages={240--246}, year={1971},
  doi={10.1109/TIT.1971.1054643}
}

@Article{watanabe2012private,
  author    = {Watanabe, Shun},
  journal   = {Physical Review A},
  title     = {Private and quantum capacities of more capable and less noisy quantum channels},
  year      = {2012},
  number    = {1},
  pages     = {012326},
  volume    = {85},
  doi       = {10.1103/PhysRevA.85.012326},
  publisher = {APS},
}

@article{TemmeVerstraete2014,
  author={Temme, Kristan and Verstraete, Frank},
  title={Quantum Chi-Squared and Goodness of Fit Testing},
  year={2015}, journal={Journal of Mathematical Physics}, volume={56}, pages={012202},
  url={https://arxiv.org/abs/1112.6343v2}
}

@article{BraunsteinCaves1994,
  author={Braunstein, Samuel L. and Caves, Carlton M.},
  title={Statistical Distance and the Geometry of Quantum States},
  journal={Physical Review Letters}, volume={72}, pages={3439--3443}, year={1994},
  doi={10.1103/PhysRevLett.72.3439}
}

@misc{MandalPfister2026,
  author={Mandal, Avijit and Pfister, Henry D.},
  title={{Reed--Muller} Codes on {CQ} Channels via a New Correlation Bound for Quantum Observables},
  year={2026}, howpublished={arXiv:2502.03785v3},
  url={https://arxiv.org/abs/2502.03785v3}
}

@misc{ZhuWang2026,
  author={Zhu, Chengkai and Wang, Xin},
  title={Quantum Incapacity beyond No-Cloning and {PPT} Mechanisms},
  year={2026}, howpublished={arXiv:2607.24693v2},
  url={https://arxiv.org/abs/2607.24693v2}
}

@misc{privateCapZero2026,
  author={{Uthirakalyani G} and Halder, Pritam and Elkouss, David},
  title={Private communication from {Pauli} channels with no privacy},
  year={2026}, howpublished={arXiv:2609.20133v1},
  url={https://arxiv.org/abs/2609.20133v1}
}

@article{HircheRouzeFranca2022,
  author={Hirche, Christoph and Rouz{\'e}, Cambyse and {Stilck Fran\c{c}a}, Daniel},
  title={On Contraction Coefficients, Partial Orders and Approximation of Capacities for Quantum Channels},
  journal={Quantum}, volume={6}, pages={862}, year={2022},
  doi={10.22331/q-2022-11-28-862}, url={https://arxiv.org/abs/2011.05949v4}
}

@misc{HircheGeorgeNuradhaWilde2026,
  author={Hirche, Christoph and George, Ian and Nuradha, Theshani and Wilde, Mark M.},
  title={Conditional contraction coefficients and their applications to quantum networks},
  year={2026}, howpublished={arXiv:2608.27171v1},
  url={https://arxiv.org/abs/2608.27171v1}
}

@misc{ChengHircheTomamichel2026,
  author={Cheng, Hao-Chung and Hirche, Christoph and Tomamichel, Marco},
  title={The exact strong converse exponent for private communication over quantum channels},
  year={2026}, howpublished={arXiv:2609.38144v1},
  url={https://arxiv.org/abs/2609.38144v1}
}

@misc{WildePrivateStrongConverse2026,
  author={Wilde, Mark M.},
  title={An exponential strong converse for private communication over degradable quantum channels},
  year={2026}, howpublished={arXiv:2609.32515v1},
  url={https://arxiv.org/abs/2609.32515v1}
}

@article{BelzigGaoSmithWu2025,
  author={Belzig, Paula and Gao, Li and Smith, Graeme and Wu, Peixue},
  title={Reverse-Type Data Processing Inequality},
  journal={Communications in Mathematical Physics}, volume={406}, pages={295}, year={2025},
  doi={10.1007/s00220-025-05474-4}, url={https://arxiv.org/abs/2411.19890v3}
}

@article{PetzToth1993,
  author={Petz, D{\'e}nes and T{\'o}th, G{\'e}za},
  title={The {Bogoliubov} Inner Product in Quantum Statistics},
  journal={Letters in Mathematical Physics}, volume={27}, pages={205--216}, year={1993},
  doi={10.1007/BF00739578}
}

@article{LesniewskiRuskai1999,
  author={Lesniewski, Andrew and Ruskai, Mary Beth},
  title={Monotone {Riemannian} Metrics and Relative Entropy on Noncommutative Probability Spaces},
  journal={Journal of Mathematical Physics}, volume={40}, pages={5702--5724}, year={1999},
  url={https://arxiv.org/abs/math-ph/9808016}
}

@misc{BlumeKohoutHayden2006,
  author={Blume-Kohout, Robin and Hayden, Patrick},
  title={Accurate Quantum State Estimation via ``Keeping the Experimentalist Honest''},
  year={2006}, howpublished={arXiv:quant-ph/0603116},
  url={https://arxiv.org/abs/quant-ph/0603116}
}

@misc{Frongillo2022,
  author={Frongillo, Rafael},
  title={Quantum Information Elicitation},
  year={2022}, howpublished={arXiv:2203.07469},
  url={https://arxiv.org/abs/2203.07469}
}

@misc{RegulaTomamichel2026,
  author={Regula, Bartosz and Tomamichel, Marco},
  title={Rethinking Quantum Smooth Entropies: Tight One-Shot Analysis of Quantum Privacy Amplification},
  year={2026}, howpublished={arXiv:2603.04493v4},
  url={https://arxiv.org/abs/2603.04493v4}
}

@misc{AnshuBertaJainTomamichel2018,
  author={Anshu, Anurag and Berta, Mario and Jain, Rahul and Tomamichel, Marco},
  title={Partially Smoothed Information Measures},
  year={2018}, howpublished={arXiv:1807.05630},
  url={https://arxiv.org/abs/1807.05630}
}

@article{TsangAlbarelliDatta2020,
  author={Tsang, Mankei and Albarelli, Francesco and Datta, Animesh},
  title={Quantum Semiparametric Estimation},
  journal={Physical Review X}, volume={10}, pages={031023}, year={2020},
  doi={10.1103/PhysRevX.10.031023}, url={https://arxiv.org/abs/1906.09871}
}

@misc{Buscemi2016,
  author={Buscemi, Francesco},
  title={Degradable Channels, Less Noisy Channels, and Quantum Statistical Morphisms: An Equivalence Relation},
  year={2016}, howpublished={arXiv:1511.08893},
  url={https://arxiv.org/abs/1511.08893}
}

@misc{BuscemiDattaStrelchuk2014,
  author={Buscemi, Francesco and Datta, Nilanjana and Strelchuk, Sergii},
  title={Game-Theoretic Characterization of Antidegradable Channels},
  year={2014}, howpublished={arXiv:1404.0277},
  url={https://arxiv.org/abs/1404.0277}
}

@article{Devetak2005,
  author={Devetak, Igor},
  title={The Private Classical Capacity and Quantum Capacity of a Quantum Channel},
  journal={IEEE Transactions on Information Theory}, volume={51}, number={1}, pages={44--55}, year={2005},
  url={https://arxiv.org/abs/quant-ph/0304127}
}

@book{Wilde2017,
  author={Wilde, Mark M.},
  title={Quantum Information Theory},
  edition={Second},
  publisher={Cambridge University Press},
  year={2017},
  doi={10.1017/9781316809976}
}

@article{Penrose1955,
  author={Penrose, Roger},
  title={A Generalized Inverse for Matrices},
  journal={Mathematical Proceedings of the Cambridge Philosophical Society},
  volume={51}, number={3}, pages={406--413}, year={1955},
  doi={10.1017/S0305004100030401}
}

@misc{ZhuWangActivation2026,
 author={Zhu, Chengkai and Wang, Xin},
 title={Private Communication via Zero-Private-Capacity Quantum Channels},
 year={2026}, howpublished={arXiv:2609.10520v1},
 url={https://arxiv.org/abs/2609.10520v1}
}

@article{BradlerDutilHaydenMuhammad2010,
 author={Br{\'a}dler, Kamil and Dutil, Nicolas and Hayden, Patrick and Muhammad, Abubakr},
 title={Conjugate Degradability and the Quantum Capacity of Cloning Channels},
 journal={Journal of Mathematical Physics},volume={51},pages={072201},year={2010},
 url={https://arxiv.org/abs/0909.3297}
}

@article{HircheLeditzky2023,
 author={Hirche, Christoph and Leditzky, Felix},
 title={Bounding Quantum Capacities via Partial Orders and Complementarity},
 journal={IEEE Transactions on Information Theory},volume={69},pages={283--297},year={2023},
 url={https://arxiv.org/abs/2202.11688}
}

@article{RoofehKarimipour2024,
 author={Roofeh, S. and Karimipour, V.},
 title={The Noisy {Werner--Holevo} Channel and Its Properties},
 journal={Physical Review A},volume={109},pages={052620},year={2024},
 url={https://arxiv.org/abs/2310.15353}
}

@misc{SolymosJakabZimboras2024,
 author={Solymos, A. and Jakab, D. and Zimbor{\'a}s, Z.},
 title={Extendibility of {Brauer} States},year={2024},
 howpublished={arXiv:2411.04597},url={https://arxiv.org/abs/2411.04597}
}

@article{MyhrLutkenhaus2009,
 author={Myhr, Geir Ove and L{\"u}tkenhaus, Norbert},
 title={Spectrum Conditions for Symmetric Extendible States},
 journal={Physical Review A},volume={79},pages={062307},year={2009},
 url={https://arxiv.org/abs/0812.3667}
}

@misc{MullerHermesSingh2022,
 author={M{\"u}ller-Hermes, Alexander and Singh, Satvik},
 title={Bi-{PPT} Channels Are Entanglement Breaking},year={2022},
 howpublished={arXiv:2204.01685},url={https://arxiv.org/abs/2204.01685}
}

@article{SmithYard2008,
 author={Smith, Graeme and Yard, Jon},
 title={Quantum Communication With Zero-Capacity Channels},
 journal={Science}, volume={321}, pages={1812--1815}, year={2008},
 doi={10.1126/science.1162242}, eprint={0807.4935}, archivePrefix={arXiv}
}

@article{KnillLaflamme1997,
 author={Knill, Emanuel and Laflamme, Raymond},
 title={Theory of Quantum Error-Correcting Codes},
 journal={Physical Review A},volume={55},pages={900--911},year={1997},
 doi={10.1103/PhysRevA.55.900},url={https://arxiv.org/abs/quant-ph/9604034}
}

@article{LiWinterZouGuo2009,
  author={Li, Ke and Winter, Andreas and Zou, XuBo and Guo, GuangCan},
  title={The private capacity of quantum channels is not additive},
  journal={Physical Review Letters}, volume={103}, pages={120501}, year={2009},
  doi={10.1103/PhysRevLett.103.120501},
  eprint={0903.4308}, archivePrefix={arXiv}, url={https://arxiv.org/abs/0903.4308}
}

@article{SmithSmolin2009,
  author={Smith, Graeme and Smolin, John A.},
  title={Extensive nonadditivity of privacy},
  journal={Physical Review Letters}, volume={103}, pages={120503}, year={2009},
  doi={10.1103/PhysRevLett.103.120503},
  eprint={0904.4050}, archivePrefix={arXiv}, url={https://arxiv.org/abs/0904.4050}
}

@article{CaiWinterYeung2004,
  author={Cai, N. and Winter, A. and Yeung, R. W.},
  title={Quantum privacy and quantum wiretap channels},
  journal={Problems of Information Transmission}, volume={40}, number={4},
  pages={318--336}, year={2004},
  doi={10.1007/s11122-005-0002-x},
  url={https://link.springer.com/article/10.1007/s11122-005-0002-x}
}

@inproceedings{ElkoussStrelchuk2015,
  author={Elkouss, David and Strelchuk, Sergii},
  title={Quantum Capacity Can Be Greater Than Private Information for Arbitrarily Many Uses},
  booktitle={10th Conference on the Theory of Quantum Computation, Communication and Cryptography},
  series={Leibniz International Proceedings in Informatics}, volume={44},
  pages={64--72}, year={2015},
  publisher={Schloss Dagstuhl--Leibniz-Zentrum f{\"u}r Informatik},
  doi={10.4230/LIPIcs.TQC.2015.64},
  url={https://drops.dagstuhl.de/entities/document/10.4230/LIPIcs.TQC.2015.64}
}

@article{HircheTomamichel2024,
  author={Hirche, Christoph and Tomamichel, Marco},
  title={Quantum {R{\'e}nyi} and $f$-Divergences from Integral Representations},
  journal={Communications in Mathematical Physics}, volume={405}, pages={208},
  year={2024}, doi={10.1007/s00220-024-05087-3},
  url={https://arxiv.org/abs/2306.12343v3}
}

@article{BertaFawziTomamichel2017,
  author={Berta, Mario and Fawzi, Omar and Tomamichel, Marco},
  title={On variational expressions for quantum relative entropies},
  journal={Letters in Mathematical Physics}, volume={107}, number={12},
  pages={2239--2265}, year={2017}, doi={10.1007/s11005-017-0990-7},
  url={https://arxiv.org/abs/1512.02615v2}
}

@article{Smith2008Symmetric,
 author={Smith, Graeme},
 title={Private Classical Capacity with a Symmetric Side Channel and Its Application to Quantum Cryptography},
 journal={Physical Review A}, volume={78}, pages={022306}, year={2008},
 doi={10.1103/PhysRevA.78.022306}, url={https://arxiv.org/abs/0705.3838}
}

@article{Sion1958,
  author = {Maurice Sion},
  title = {On general minimax theorems},
  journal = {Pacific Journal of Mathematics},
  volume = {8},
  number = {1},
  pages = {171--176},
  year = {1958},
  doi = {10.2140/pjm.1958.8.171}
}
\end{document}